\newtheoremstyle{definition}
{3pt} 
{3pt} 
{} 
{} 
{\bfseries} 
{.} 
{.5em} 
{} 
\DeclarePairedDelimiter{\floor}{\lfloor}{\rfloor}
\newtheorem{thm}{Theorem}[section]
\newtheorem{theorem}[thm]{Theorem}
\newtheorem{defn}[thm]{Definition}
\newtheorem{prop}[thm]{Proposition}
\newtheorem{proposition}[thm]{Proposition}
\newtheorem{lemma}[thm]{Lemma}
\newtheorem{cor}[thm]{Corollary}
\newtheorem{example}[thm]{Example}
\newtheorem{remark}[thm]{Remark}
\def\defeq{\mathrel{\mathop:}=}
\newcommand{\w}{\mathbf{w} }
\renewcommand{\b}{\mathbf{b} }
\newcommand{\1}{\mathbf{1}}
\thanks{}
\title{Local Statistics and Shuffling for Dimers on a Square-Hexagon Lattice}
\date{\today}
\author{Matthew Nicoletti}
\begin{document}

\begin{abstract}
    We study the dimer model on special subgraphs of the square hexagon lattice called ``tower graphs" of size $N$. Using integrable probability techniques, we confirm that as $N \rightarrow \infty$, the local statistics are translation invariant Gibbs measures, as conjectured by Kenyon-Okounkov-Sheffield \cite{KOS2006}. We also present a 2+1-dimensional discrete time growth process, whose time $N$ distribution is exactly the dimer model on the size $N$ tower, and we compute the current of this growth process and confirm that the model belongs to the Anisotropic KPZ universality class.
\end{abstract}

\maketitle

\section{Introduction}

\subsection{Background}

A \emph{perfect matching} or \emph{dimer cover} of a graph $G$ is a subset $M$ of edges for which each vertex is incident to exactly one of the edges of $M$. The dimer model first appeared in the 1930s as a model for liquid mixtures with molecules of two very distinct sizes, and the partition function of the model (number of dimer covers) was estimated. Since then, physicists and mathematicians have extensively studied questions about the dimer model from many different perspectives; see the surveys \cite{gorin2021lectures, Kenyon2007Lecture}, as well as \cite{cohn-elki-prop-96, Kenyon2000Conformal, CohnKenyonPropp2000, Kenyon2001GFF, okounkov2003correlation, KOS2006, RVPA, novak2015lozenge, Cimasoni_2007, GoncharovKenyon2011DimersClusterIntSys, Betea_etal2014, laslier2013lozenge, borodin2007periodic} and references therein for just a few examples.

 Many of the central questions are probabilistic in nature. A well studied setup is the following: suppose we have a $\mathbb{Z}^2$-periodic lattice $\mathfrak{L}$, and a bounded domain $R \subset \mathbb{R}^2$. Corresponding to any dimer configuration of a bipartite graph, there is a naturally corresponding \emph{height function}. The main object of study is the random (normalized) height function $h_\epsilon$ arising from a random dimer configuration on a subset of the lattice $\epsilon \mathfrak{L}$ approximating $R$. The approximating sequence of finite domains is chosen so that the height function satisfies a given boundary condition on $\partial R$.
 
 It is known that under quite general circumstances, as $\epsilon \rightarrow 0$, the functions $h_\epsilon$ converge to a deterministic limiting function $\mathfrak{h}  : R \rightarrow \mathbb{R}$, which is the solution of a certain variational principle \cite{CohnKenyonPropp2000, KOS2006}. Furthermore, there is a very general and precise conjecture of Kenyon-Okounkov \cite{OkounkovKenyon2007Limit} predicting the convergence in distribution of the fluctuation field $h_\epsilon - \mathfrak{h}$ to a certain Gaussian Free Field, and there have been proofs of this conjecture in many special cases (e.g. in \cite{Petrov2012GFF, Ferrari2008, BufetovKnizel}), as well as for a very general class of boundary conditions on the hexagonal lattice \cite{huanglozengegff}. We note also that the GFF fluctuations have been proven for a large class of boundary conditions (including ours) on the lattice we study in this paper \cite{BoutillierLiSHL}.

Another central question involves the study of \emph{local statistics}: What describes the statistics of the random dimer configuration in a finite neighborhood (at lattice scale) of a given interior point $(x, y) \in R$? It is conjectured that the answer is given by a certain \emph{ergodic, translation invariant Gibbs measure} on dimer covers of $\mathcal{L}$, corresponding to the slope of the limiting height function at $(x, y)$. This again has been verified in several special cases \cite{cohn-elki-prop-96, Petrov2012, Ferrari2008}, and again only recently for general boundary conditions for the hexagonal lattice \cite{aggarwal2019universality}.

Other questions surrounding the dimer model involve studying its connections to various 1+1 and 2+1-dimensional random growth processes in the KPZ universality class (see the introduction of \cite{BorFerr2008DF}, and references therein). As a concrete example, this connection arises from the shuffling algorithm for domino tilings (see \cite{propp2003generalized} for background on domino shuffling), which can also be viewed as a Markov chain on interlacing arrays of particles \cite{Nordenstam_Aztec_2009, borodin2015random}. In the domino tiling case, the top portion of the arctic curve can itself can be viewed as a fixed time slice of a 1+1-dimensional growth process obtained as a Markovian projection of the interlacing particle dynamics. This fact was utilized in the first proof of the so-called ``Arctic Circle Theorem" \cite{Jockusch1998RandomDT}. Furthermore, the 2+1-dimensional growth process coming from the height function evolution under the shuffling algorithm is a member of the KPZ universality class. The speed of growth of the height function as a function of local slopes $(s, t)$, also known as the \emph{current}, has been explicitly computed, and various conjectures about critical exponents describing the size of height fluctuations have been verified \cite{chhita2019, Chhita_2021}. Furthermore, starting from an arbitrary initial condition the height function has been shown to converge in the hydrodynamic limit to the solution of an explicit PDE \cite{ZhangDominoHydrodynamics}.

In this note, we focus on local statistics and on an associated 2+1-dimensional growth process, which is the analog of domino shuffling for our graph. The finite graphs we study, called \emph{tower graphs}, which are subgraphs of a certain $\mathbb{Z}^2$-periodic lattice, are a special case of the rail-yard graphs from \cite{boutillier2015dimers}. See Figure \ref{fig:tower} for a size $N = 2$ tower graph.

We are able to identify a sub-matrix of a finite dimensional kernel, which describes an associated determinantal measure on interlacing particle arrays, with the inverse Kasteleyn matrix of the dimer model on the size $N$ tower graph. Then, we take asymptotics as $N \rightarrow \infty$ and match the limiting determinantal processes in the bulk with the translation invariant Gibbs measures (c.f. \cite{KOS2006}). The asymptotic analysis proceeds via a steepest descent analysis, and one of the essential pieces in describing the limit is the critical point of the ``action function". We identify this critical point with the solution of the complex Burgers equation (this is the PDE which is well known to describe the limit shape  \cite{OkounkovKenyon2007Limit}).

In terms of the relationship to growth processes, our story is analogous to the Aztec diamond case; there is an interacting particle process, first described in Section 4.2 of \cite{borodin2015random}, which can be used to sample a random matching of a size $N$ tower graph. We describe how this Markov chain can be viewed as a shuffling algorithm arising from applying a particular sequence of ``urban renewal" moves (see Section \ref{subsec:shuffling}, and also see \cite{propp2003generalized} and \cite[Section 2]{ZhangDominoHydrodynamics} for details on urban renewal moves and their interpretation as a random mapping). We also explicitly compute the speed of growth, or the \emph{current}, as a function of the local slopes. We use this to argue that this 2+1-dimensional growth model is a member of the Anisotropic KPZ universality class. We also conjecture a hydrodynamic limit equation for this growth process for an arbitrary initial condition, and we verify this conjecture via explicit calculations in the case of uniform weights for the initial condition corresponding to tower graphs.

\begin{figure}
\includegraphics[scale=0.8]{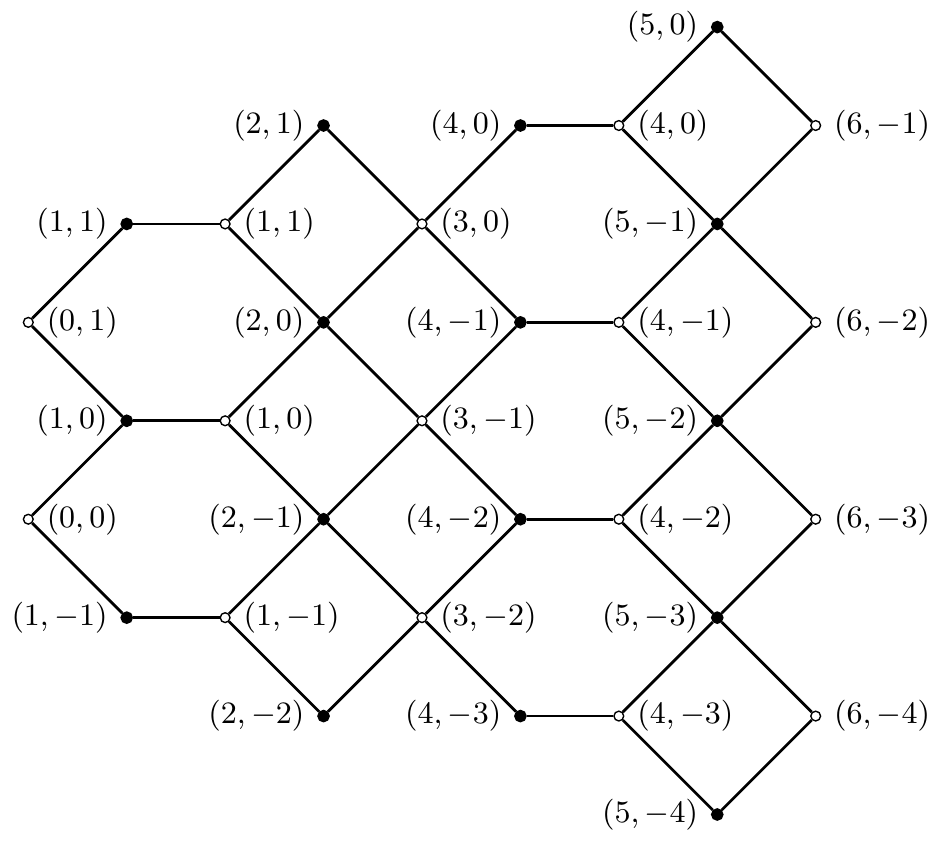}
\caption{The size $N = 2$ tower graph. Vertices are labelled with their $(X, U)$ coordinates described in Section \ref{sec:schur_p}.}
\label{fig:tower}
\end{figure}

\subsection{Main Results and Plan}
\label{subsec:res}
We study random perfect matchings on the size $N$ \emph{tower graph} described in \cite{borodin2015random}, Section 4.2. Thus, in our case the lattice $\mathcal{L}$ is the \emph{square-hexagon lattice} (see Figure \ref{fig:SHL} for the lattice, and see Figure \ref{fig:mag} for the fundamental domain), and the tower of size $N$ is a certain finite subgraph of $\mathcal{L}$ (see Figure \ref{fig:tower} for $N = 2$). The probability measures $P_N$ which we study have two parameters, $\alpha, \beta > 0$, which are weights on the edges of the graph.

As elucidated in Section \ref{subsec:schur}, we may equivalently view a perfect matching as a set of particles placed on lattice sites of a certain finite subset $ \mathfrak{T}_N \subset \mathbb{Z}^2$; this perspective will be useful in describing our results below. See Figure \ref{fig:bijection} for an example of this correspondence between perfect matchings and interlacing arrays. Furthermore, the probability measure we study can be seen as the time $N$ distribution of an interacting particle system: Given a configuration corresponding to a random matching of a size $N$ tower, one can obtain a random configuration corresponding to a matching of a size $N+1$ tower as follows. A configuration can be seen a list of integer arrays
$$T(N) = y^1, x^1, z^2, y^2, x^2, \dots, y^N, x^N, z^{N+1}, y^{N+1}, x^{N+1} $$
where $z^{N+1} = \{-1, -2, \dots, -2 N\}, y^{N+1} =\{-1, -2, \dots, -2 N,-2 N - 1\}, x^{N+1} = \{-1, -2, \dots, -2 N-2\}$ are deterministically ``empty" (in the sense that they correspond to the empty Maya diagram).

\begin{figure}
\includegraphics[scale=0.8]{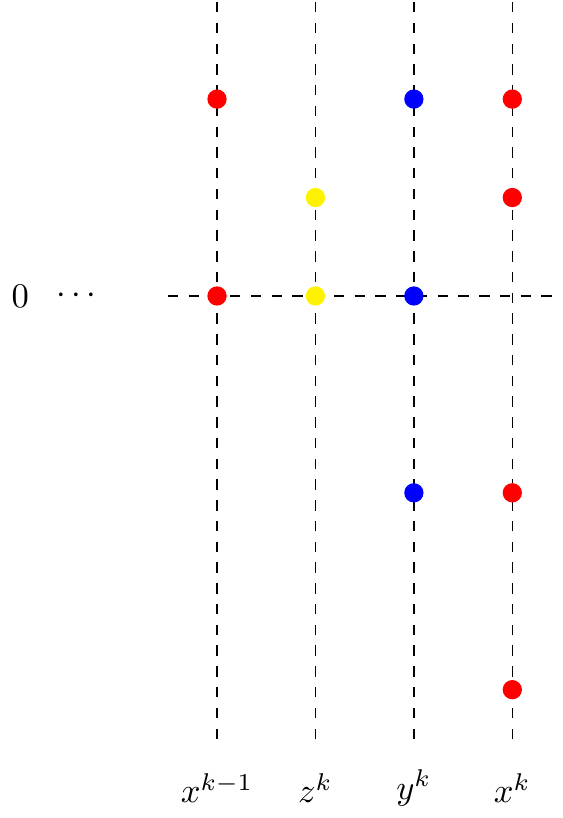}
\caption{An example of the particle arrays satisfying the required interlacing conditions.}
\label{fig:ptcl}
\end{figure}

\begin{figure}[htb]
    \centering 
\begin{subfigure}{0.25\textwidth}
  \includegraphics[width=\linewidth]{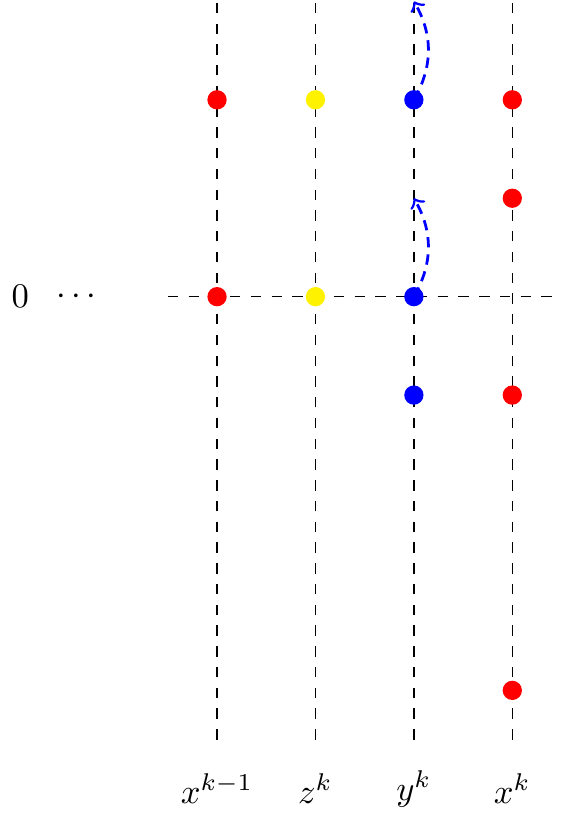}
  \caption{After the deterministic update of $z^k$ but before step 1.}
  \label{fig:1}
\end{subfigure}\hfil 
\begin{subfigure}{0.25\textwidth}
  \includegraphics[width=\linewidth]{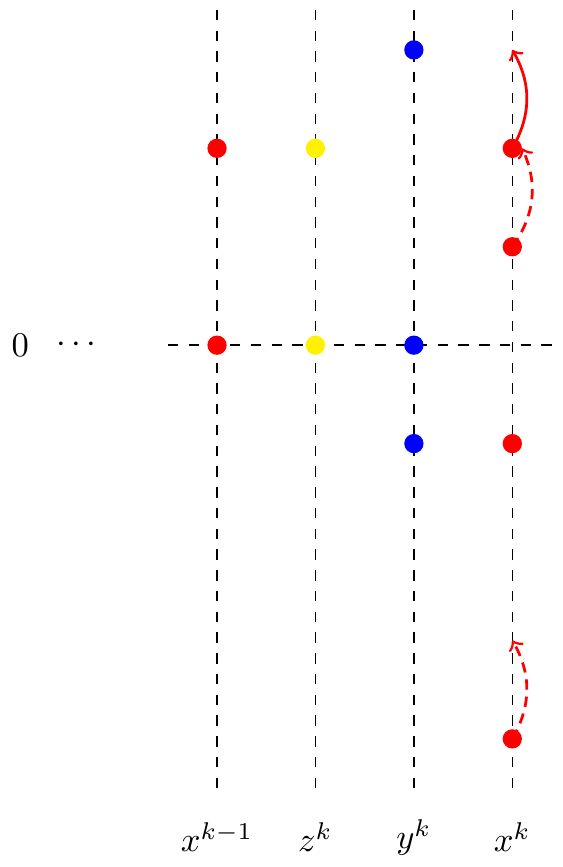}
  \caption{After step 1.}
  \label{fig:2}
\end{subfigure}\hfil 
\begin{subfigure}{0.25\textwidth}
  \includegraphics[width=\linewidth]{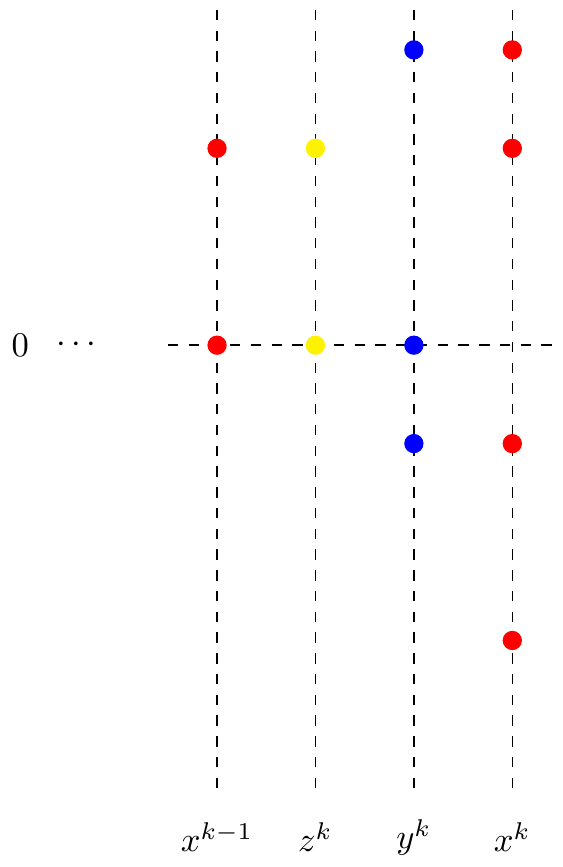}
  \caption{After step 2.}
  \label{fig:3}
\end{subfigure}
\caption{Shown above is an example of the updates of $z^k, y^k, x^k$ during one step of the Markov chain. The dashed arrows denote possible jumps, the solid arrows denote forced jumps, and particles with no arrows are blocked.}
\end{figure}

Then perform the following steps. First, update $z^i$ to $z^i(N+1) = x^{i-1}$ deterministically. Then do the two rounds of updates (these two steps precisely correspond to the two steps in the shuffling formulation in Section \ref{subsec:shuffling}):
\begin{enumerate}[1.]
\item Given the updated $\{z^i_j\}$, update each particle position in $\{y^i_j\}$ independently. The update $y^i_j \rightarrow y^i_j(N+1)$ happens deterministically if forced or blocked:
\begin{itemize}
\item If $x_j^{i-1}(N) = z_j^i(N+1) = y^i_j + 1$, then set $y^i_j(N+1) = y^i_j + 1$. In this case we say a \emph{jump is forced} to preserve interlacing.
\item If $x_{j-1}^{i-1}(N) = z_{j-1}^i(N+1) = y^i_j + 1$, then set $y^i_j(N+1) = y^i_j$. In this case we say $y^i_j$ is \emph{blocked} to preserve interlacing.
\end{itemize}
Otherwise $y^i_j(N+1) = y^i_j + 1$ with probability $\frac{\alpha \beta}{1+\alpha \beta}$, and stays otherwise.
\item Given the updated $\{y^i_j\}$, update each particle position in $\{x^i_j\}$ independently. Update $x^i_j$ deterministically if forced or blocked by the new particles at $\{y^i_j(N+1)\}$ to preserve interlacing, as described in step 1. Otherwise jump by $1$ with probability $\frac{\beta}{1+\beta}$, or stay otherwise.
\end{enumerate}
The resulting particle arrays 
$$T(N+1) = y^1(N+1), x^1(N+1), \dots, y^N(N+1), x^N(N+1), z^{N+1}(N+1), y^{N+1}(N+1), x^{N+1}(N+1)$$
correspond to a matching of a tower of size $N+1$. In this way, a random configuration $T(N)$ can be built up from the empty one at time $N = 0$ using the Markov chain above.

\subsubsection{Determinantal Kernel}

In Section \ref{sec:schur_p} we show that the random particle configuration at time $N$ is a \emph{determinantal point process} and we compute its kernel by identifying the distribution with a \emph{Schur process}. Furthermore, we identify a certain restriction of the kernel with the inverse of the Kasteleyn matrix of the size $N$ tower graph, giving an explicit formula for the inverse Kasteleyn matrix.

\subsubsection{Local Statistics}
In Section \ref{sec:loc} we compute the local statistics of a random matching of the tower graph of size $N$ near an arbitrary macroscopic point of the domain, and identify the limit with the ergodic translation invariant Gibbs measure of the correct slope. We describe the result more precisely below.

Corresponding to each size $N$ tower perfect matching is a \emph{height function} 
$$H_N :\mathfrak{T}_N \rightarrow \mathbb{Z} $$
where $\mathfrak{T}_N$ is a certain finite subset of $\mathbb{Z}^2$. In particular, in each column there are only finitely many particles, and we define for $(X, U) \in \mathfrak{T}_N$
\begin{align*}
H_N(X, U) = - \# \{\text{particles at } (X, V) : V > U \} \;\;.
\end{align*}
If we choose a particular reference matching for the dimer model, $H(X, U)$ coincides with the value of the height function defined by the corresponding perfect matching at the face just above (with respect to the embedding of Figure \ref{fig:SHL}) the vertex or pair of vertices with coordinates $(X, U)$.

It is known (see \cite{KOS2006, BoutillierLiSHL}) that $\frac{1}{N} H_{N}(3 N x, N u)$ converges in probability to a deterministic limit $h(x, u)$, where $h$ is defined on the domain $\mathfrak{T} \coloneqq \{(x, u) \in \mathbb{R}^2 : 0 \leq x \leq 1 , - 2 x \leq u \leq 1- x \}$ (this set is approximated by $(X/(3 N), U /N)$ for $(X, U) \in \mathfrak{T}_N$ as $N$ grows). The \emph{liquid region} $\mathfrak{L} \subset \mathfrak{T}$ is the region where the graph of $h$ is curved, and the \emph{frozen region} is the region where $h$ is a linear function. The \emph{arctic curve} $C = \partial \mathfrak{L}$ separating them is an algebraic curve. For each pair of allowed height function slopes $(s, t)$ (see Section \ref{sec:loc} for the precise definition of the slopes $(s, t)$ of a height function) there is an \emph{ergodic translation invariant Gibbs measure} $\pi_{s, t}$ on dimer covers of the lattice $\mathcal{L}$. The following theorem states that local statistics are given by $\pi_{s, t}$ in the limit. 

\begin{thm}
Let $(x, u) \in \mathfrak{T}\setminus C$ denote a rescaled position in a large tower graph, which is away from the arctic curve $C$. The dimer configuration in any finite neighborhood of the lattice site $(\floor{3 N x}, \floor{N u})$ converges in distribution to that of $\pi_{s, t}$, where $(s, t)$ are the slopes of the limiting height function $h$ at $(x, u)$.
\label{thm:mainthm}
\end{thm}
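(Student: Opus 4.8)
The plan is to deduce local convergence from convergence of correlation kernels, exploiting the determinantal structure established in Section~\ref{sec:schur_p}. Since the particle configuration at time $N$ is a determinantal point process with an explicit kernel $K_N$ (identified via the Schur process, and whose relevant restriction is the inverse Kasteleyn matrix of the size $N$ tower), convergence in distribution of the dimer configuration in a fixed finite neighborhood of $(\floor{3Nx}, \floor{Nu})$ is equivalent to pointwise convergence of the shifted kernel: for fixed local offsets $a, b$, one wants
\begin{equation*}
K_N\bigl((\floor{3Nx}, \floor{Nu}) + a,\ (\floor{3Nx}, \floor{Nu}) + b\bigr) \longrightarrow K_{s,t}(a, b),
\end{equation*}
where $K_{s,t}$ is the kernel of the ergodic translation-invariant Gibbs measure $\pi_{s,t}$ (its Fourier/double-contour-integral form is explicit, c.f.\ \cite{KOS2006}). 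Because all kernels are finite sums/products of entries of a fixed Kasteleyn matrix, and the dimer-configuration events in a finite window are polynomials in finitely many kernel entries, this pointwise convergence upgrades automatically to convergence in distribution; so the entire theorem reduces to the asymptotics of $K_N$.

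Next I would carry out the steepest descent analysis on the double-contour-integral representation of $K_N$. Writing the integrand as $e^{N S(w, z; x, u)} \cdot (\text{rational prefactor})$ with a suitable "action function" $S$ depending on the macroscopic location $(x, u)$, the large-$N$ behavior is governed by the critical points of $S$, i.e.\ solutions of $\partial_w S = \partial_z S = 0$. The key structural input — already flagged in the excerpt — is that this critical-point equation is exactly the complex Burgers equation describing the limit shape \cite{OkounkovKenyon2007Limit}, so for $(x, u)$ in the liquid region $\mathfrak{L}$ there is a conjugate pair of non-real critical points $w_c, \bar w_c$, while in the frozen region the critical points are real. I would then deform the $w$- and $z$-contours to pass through $w_c$ (resp.\ $\bar w_c$) along steepest-descent paths, check that the contributions away from the critical points are exponentially negligible, and extract the leading term. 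In the liquid region the answer is the residue/boundary contribution picked up when deforming contours past each other, which produces precisely a ratio depending only on the argument of $w_c$ — and matching this with the formulas of \cite{KOS2006} identifies the limit with $K_{s,t}$, where $(s,t)$ is read off from $w_c$ via the Burgers solution (equivalently, via the known dictionary between the complex slope and the height-function gradient). In the frozen region the same analysis gives a degenerate (deterministic) kernel, consistent with $\pi_{s,t}$ at a frozen slope.

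The step I expect to be the main obstacle is the contour-deformation bookkeeping in the steepest descent: one must choose admissible contours for $w$ and $z$ (respecting the poles of the rational prefactor and the prescribed nesting coming from the Schur process, which change depending on the sign of $a - b$ and on which sub-block of the kernel one is in), verify globally that $\mathrm{Re}\, S$ attains its maximum on the relevant contour only at the critical point, and correctly account for all residues crossed during the deformation — these residues are exactly what build up the translation-invariant Gibbs kernel in the limit, so they cannot be discarded. A secondary technical point is controlling the transition uniformly enough that the argument is valid for every $(x,u) \in \mathfrak{T}\setminus C$ (not just generic ones), and confirming that the floor functions $\floor{3Nx}, \floor{Nu}$ introduce only $o(1)$ corrections to $S$ that do not affect the leading asymptotics. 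Once the kernel asymptotics are in hand, the identification of $(s,t)$ with the gradient of $h$ and the translation invariance of the limit are essentially bookkeeping, using the correspondence between particle density/current and height slope together with the explicit solution of the complex Burgers equation.
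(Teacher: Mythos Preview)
Your proposal is correct and mirrors the paper's own argument: the paper likewise reduces Theorem~\ref{thm:mainthm} to pointwise convergence of the inverse Kasteleyn (Theorem~\ref{thm:loc} and Corollary~\ref{cor:main}), and proves the latter by a steepest descent analysis of the double contour integral, deforming the $z$ and $w$ contours to level lines of $\mathrm{Re}\,S(z;x,u)$ through the complex critical point $z_c$ and identifying the residue at $z=w$ picked up during the deformation with the Gibbs kernel $K^{z_0}$ of Lemma~\ref{lem:gibbsmeas}. The only cosmetic difference is that the action here is a function of a single variable (the integrand has the form $e^{N(S(z)-S(w))}$), so the critical-point and contour analysis is carried out separately in $z$ and $w$ rather than jointly.
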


Theorem \ref{thm:mainthm} follows from Theorem \ref{thm:loc} in the text (see Corollary \ref{cor:main}). In particular, we are able to compute the height function slopes at any point $(x, u)$ in the domain via the critical point $z(x, u)$ of the action, see subsection \ref{subsec:BL}. In Section \ref{sec:burgers}, we identify the critical point $z(x, u)$ with the complex coordinate $z$ which is the solution of the \emph{complex Burgers equation}, and we describe the mapping between the complex coordinate and slopes $(s, t)$ (see subsection \ref{subsec:z}). 

For general $\alpha$, $z(x, u)$ is the solution of a cubic equation, but when $\alpha = 1$, the cubic factors and the relevant solution $z(x, u)$ is the root of an explicit quadratic, and thus can be solved for explicitly. In Example \ref{ex:uni} we give the explicit formula for the height function in the uniform case.

\subsubsection{Surface Growth}

In Section \ref{sec:growth}, we prove several properties of the growth process under which $H_N$ evolves. We describe a shuffling algorithm (a sequence of randomized urban renewal moves) which leads to the same dynamics as the particle system described above, and also define a full plane version of the same Markov process, which acts on height functions $H : \mathbb{Z}^2 \rightarrow \mathbb{Z}$ coming from dimer covers of the full lattice $\mathcal{L}$, and we show that the Gibbs measures $\{\pi_{s, t}\}$ are stationary under the process. We compute the speed of growth, or \emph{current}, 
$$J(s, t) \coloneqq \mathbb{E}_{\pi_{s, t}} [H_1(0,0) - H_0(0, 0)]$$
of the growth process, where $\mathbb{E}_{\pi_{s, t}}$ denotes the expectation over one time step of the process with respect to a height function $H_0$ initialized at a Gibbs measure $\pi_{s, t}$. We find that when written in terms of the complex magnetic field coordinate $z$ in the upper half plane $\mathbb{H}$ (this coordinate is in one-to-one correspondence with slopes $(s, t)$, see subsections \ref{subsec:gibbs_meas} and \ref{subsec:z}), $J(z)$ is a harmonic function of $z$. More precisely, we have

\begin{thm}
When written in terms of the complex coordinate $z \in \mathbb{H}$, the current is 
$$J(z)  = -\frac{1}{\pi} \arg(1 + \beta z) $$
where $\arg$ takes values in $(-\pi, \pi]$. 
\end{thm}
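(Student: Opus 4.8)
The plan is to reduce $J(z)$ to the $\pi_{s,t}$-probability of an explicit bounded local event, and then to evaluate that probability using the translation-invariant inverse Kasteleyn matrix of $\pi_{s,t}$, which is an explicit function of $z$.

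\textbf{Step 1: localizing the height increment.} I work with the full-plane version of the Markov chain from Section~\ref{sec:growth}, acting on height functions $H:\mathbb{Z}^2\to\mathbb{Z}$ coming from dimer covers of $\mathcal{L}$, for which the Gibbs measures $\pi_{s,t}$ are stationary. Since the dynamics is local with bounded increments, $H_1(0,0)-H_0(0,0)$ is a bounded functional of $M_0$ together with the update coins, supported on a fixed neighborhood of the origin. Reading it off from the particle update (equivalently, from the urban renewal moves of Section~\ref{subsec:shuffling}): particle positions only weakly increase, and by at most one per step, so the quantity $\#\{V>0:\text{particle at }(0,V)\}$ can change only when a particle sitting at the origin moves up by one. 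Hence $H_1(0,0)-H_0(0,0)=-\mathbbm{1}[E]$, where $E$ is the event ``at time $0$ there is a particle at the origin and during the update it jumps,'' so that $J(z)=-\,\mathbb{P}_{\pi_{s,t}}(E)$. Depending on which of the three sublattice levels $z,y,x$ the origin lies on, one either gets exactly this or a bounded signed combination of such indicators; the computation below applies in all cases.

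\textbf{Step 2: resolving the jump rule and reducing to local dimer events.} A particle at the origin jumps deterministically when it is forced or blocked --- conditions on the presence or absence of particles at boundedly many neighboring sites --- and otherwise jumps with probability $\tfrac{\beta}{1+\beta}$ or $\tfrac{\alpha\beta}{1+\alpha\beta}$ according to the level. Conditioning on $M_0$ and averaging over the step~1 coins (which enter only through the forced/blocked constraints produced by the updated $y$-particles) gives
\[
J(z)\;=\;-\,\pi_{s,t}(\text{forced})\;-\;p\cdot\pi_{s,t}(\text{free}),\qquad p\in\Big\{\tfrac{\beta}{1+\beta},\ \tfrac{\alpha\beta}{1+\alpha\beta}\Big\},
\]
where $\text{forced}$ and $\text{free}$ are finite unions of cylinder events specifying the occupation of boundedly many edges of $\mathcal{L}$ near the origin, via the matching--particle bijection of Section~\ref{subsec:schur}.

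\textbf{Step 3: evaluation and simplification.} Since $\pi_{s,t}$ is a determinantal (Gibbs) measure, each cylinder probability is a finite determinant $\big|\det\big(K(e_i)K^{-1}(e_i,e_j)\big)\big|$ built from the translation-invariant inverse Kasteleyn matrix $K^{-1}=K^{-1}_{s,t}$ of Section~\ref{subsec:gibbs_meas} (equivalently one may use the bulk limit of the Schur-process particle kernel from Section~\ref{sec:schur_p}; the two routes are interchangeable). In the liquid region the relevant entries are single contour integrals whose evaluation picks up the residue at the critical point $z\in\mathbb{H}$ (and at $\bar z$ for the diagonal entries), which is exactly how the magnetic coordinate $z$ enters. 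Because every edge/particle density of a periodic Gibbs measure is a linear combination of terms $\tfrac1\pi\arg(\text{affine in }z)$ (the square--hexagon analogue of the rhombus-angle formulas on the honeycomb), $J(z)$ is automatically harmonic, and the combination in Step~2 must collapse to a single such term; matching the leading behavior pins it down as $J(z)=-\tfrac1\pi\arg(1+\beta z)$. The appearance of $\beta$ alone (no $\alpha$) expresses the cancellation of the step~1 contributions to the height at the relevant level; together with reality of $J$, the bound $J\in[-1,0]$ (heights only decrease, and by at most one), and the vanishing of $J$ as $z$ approaches the part of $\partial\mathbb{H}$ corresponding to the relevant frozen phase, this gives consistency checks.

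\textbf{Main obstacle.} The crux is the interface of Steps~1 and~3: pinning down the exact local event $E$ from the urban-renewal/particle update, and then identifying which entries of $K^{-1}_{s,t}$ --- at which lattice offsets and with which poles --- actually occur, so that the residue bookkeeping produces the single argument $\arg(1+\beta z)$ rather than a messier combination of arguments and rational terms. Step~2 is a finite case analysis, and once the correct contour integrals are written down the evaluation and the final simplification in Step~3 are mechanical.
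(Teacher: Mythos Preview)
Your proposal takes a substantially harder route than the paper, and Step~3 as written is not a proof but a hope. The paper's argument is a one-liner once the right observation is made, and that observation is exactly what you are missing.

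The key point is the height-update convention: for a face $(X,U)$ at which no spider move is performed, the paper defines $H_1(X,U)=H_0(X-1,U)$, where $(X-1,U)$ refers to the combinatorially corresponding face before the shuffle. Hence, choosing such a face,
\[
J(s,t)=\mathbb{E}_{\pi_{s,t}}\bigl[H_1(X,U)-H_0(X,U)\bigr]=\mathbb{E}_{\pi_{s,t}}\bigl[H_0(X-1,U)-H_0(X,U)\bigr],
\]
which is just an expected height gradient across a single edge of $\mathcal{L}$. By the definition of the dimer height function this equals $-\pi_{s,t}(e\in M)$ for the specific edge $e=(\mathbf{w}(0,0),\mathbf{b}(-1,1))$ of weight $\beta$, and a single-entry evaluation of $K^{z_0}$ gives
\[
J(z_0)=-\beta\cdot\frac{1}{2\pi i}\int_{\overline{z_0}}^{z_0}\frac{dz}{1+\beta z}=-\frac{1}{\pi}\arg(1+\beta z_0).
\]
No forced/blocked/free trichotomy, no multi-edge determinants, no cancellation to chase.

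Your Step~1 goes instead through the particle-jump interpretation, which forces the case analysis of Step~2 and the cylinder events of Step~3. That route is not wrong in principle, but your Step~3 does not carry it out: the assertion that ``the combination must collapse to a single $\tfrac{1}{\pi}\arg(\cdot)$'' is not justified. Multi-edge cylinder probabilities are determinants, hence polynomials in the $K^{z_0}$ entries, not linear combinations of arguments; and invoking harmonicity of $J$ to force the form is circular, since harmonicity is a consequence of the formula you are trying to prove, not an independent input. The boundary-matching heuristic at the end is likewise not a proof. So as written there is a genuine gap: you have reduced to a finite computation but not performed it, whereas the paper avoids that computation entirely by the coordinate-shift observation above.
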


By a result of Borodin-Toninelli \cite{borodin2018two}, this allows to identify the growth process as a member of the Anisotropic KPZ universality class. Furthermore, we use our calculation of the current $J$ to predict a hydrodynamic limit equation for the time evolution of 
$h_{\tau}(x, u) = \lim_{N \rightarrow \infty} \frac{1}{N} H_{N \tau}(N x, N u) $, and we verify this hydrodynamic limit equation explicitly in the uniform case.

\subsection{Acknowledgements}
The author happily thanks Alexei Borodin for guidance and many useful discussions, as well as David Keating and Tomas Berggren for several helpful comments and suggestions.

\section{Inverse Kasteleyn and Schur Process}
\label{sec:schur_p}

\subsection{Dimer Model}

\begin{figure}
\includegraphics[scale=0.8]{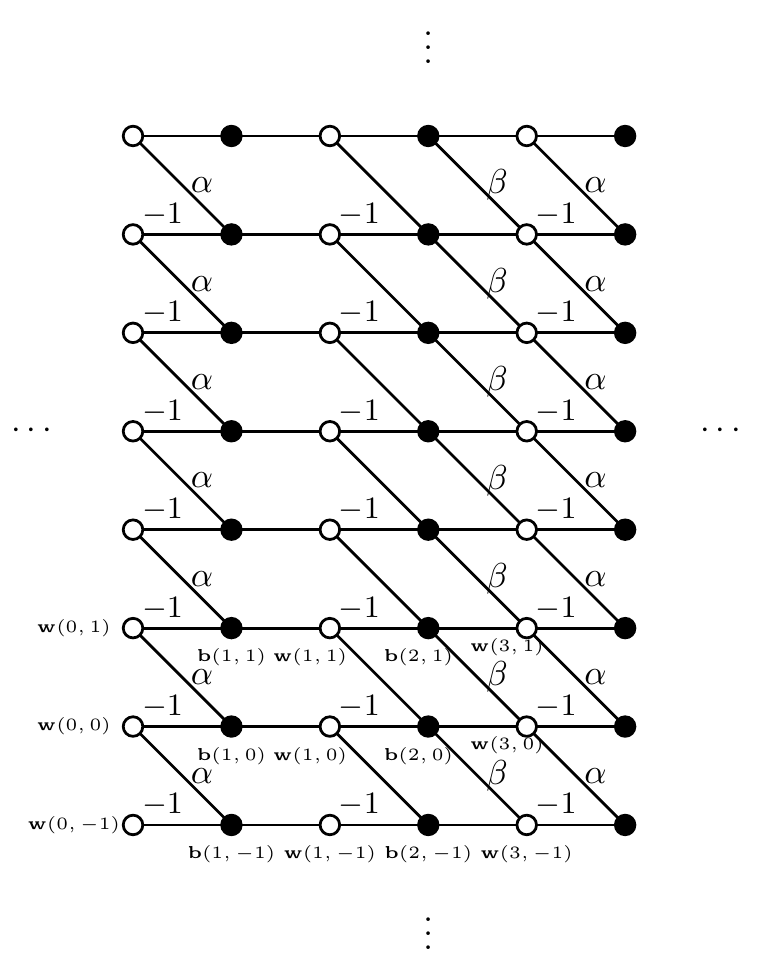}
\caption{Above is the square hexagon lattice $G$, with edge weights and Kasteleyn signs shown.}
\label{fig:SHL}
\end{figure}

\begin{figure}
\includegraphics[scale=0.8]{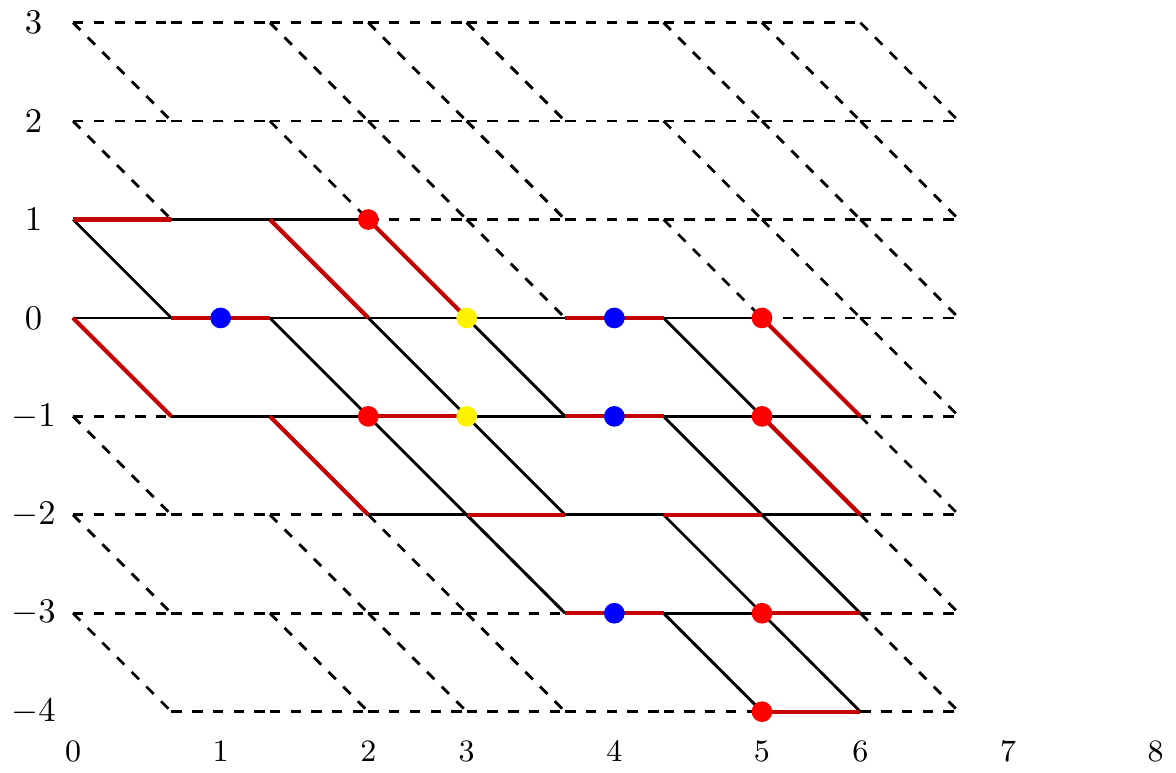}
\caption{The correspondence between a matching of the tower graph and a collection of interlacing particles for $N = 2$.}
\label{fig:bijection}
\end{figure}

We now define the model we study. We will consider finite subgraphs of the graph $\mathcal{L}$ in Figure \ref{fig:SHL}, which we refer to as the \emph{square hexagon lattice}. The set of vertices are indexed by pairs $(X, U) \in \mathbb{Z}^2$. We have
\begin{itemize}
\item a white vertex $\w(X, U)$ if $X = 0\; (\text{mod } 3)$ or $X = 1 \; (\text{mod } 3)$ 
\item a black vertex $\b(X, U)$ if $X = 2 \; (\text{mod } 3)$ or $X = 1 \; (\text{mod } 3)$ 
\end{itemize}
and we call the white and black vertex sets $V_W, V_B$, respectively. Note that our choice of coordinates is such that if $X = 0,2 \;(\text{mod } 3)$ there is either a white or a black vertex corresponding to this pair, and if $X = 1 \;(\text{mod } 3)$ we have a white vertex $\w(X, U)$ and a black vertex $\b(X, U)$. For $\w(X, U)$ a white vertex, we have the edges
\begin{itemize}
\item $\w(X, U) - \b(X, U) $ if $X = 1 \; (\text{mod } 3)$ 
\item $\w(X, U) - \b(X-1, U) $ if $X = 0 \; (\text{mod } 3)$
\item $\w(X, U) - \b(X+1, U) $ 
\item $\w(X, U) - \b(X-1, U+1)$ if $X = 0 \; (\text{mod } 3)$
\item $\w(X, U) - \b(X+1, U-1) $ with weight $\alpha$ if $X = 0 \; (\text{mod } 3)$
\item $\w(X, U) - \b(X+1, U-1) $ with weight $1$ if $X = 1 \; (\text{mod } 3)$
\end{itemize}
Using this embedding, we define coordinates on the faces of the lattice as follows: If $(X,U) \in \mathbb{Z}^2$ are the coordinates of a (black or white) vertex in the square-hexagon lattice, let $(X, U)$ be the coordinates of the face $F$ containing the point $(X, U + \frac{1}{2}) \in \mathbb{R}^2$.

\begin{defn}[Tower graph of size $N$]
Let the \emph{tower of size $N$} be the subgraph induced by black and white vertices $\w(X, U), \b(X, U)$ at $(X, U)$ such that $0 \leq X \leq 3 N $ and $-X+ \floor{\frac{X}{3}}  \leq U \leq N- \floor{\frac{X}{3}}-1$. See Figure \ref{fig:bijection}.
\end{defn}

We study the probability measure on \emph{perfect matchings}, which are subsets $M$ of edges such that each vertex in the graph is incident to exactly one of the edges, where the probability of a matching $M$ is 
\begin{equation}
\label{eqn:btz}
\frac{1}{Z} \prod_{e \in M} \text{wt}(e)  \;\;.
\end{equation}
Above $Z \defeq \sum_{M} \prod_{e \in M} \text{wt}(e) $ is the \emph{partition function}. We also refer to a perfect matching $M$ as a \emph{dimer cover}.

After a choice of an appropriate \emph{reference matching} $M_0$ of the square hexagon lattice, both in the case of the (finite) tower graph and in the case of the full lattice $\mathcal{L}$, there is a one to one mapping from dimer covers to \emph{height functions}, which are real valued functions defined on the set of faces of (a finite subset of) $\mathcal{L}$. See Section 2.2 of \cite{Kenyon2007Lecture} for the general construction of dimer model height functions. In our case we simply choose $M_0$ to be the set of edges labelled with a $-1$ in Figure \ref{fig:SHL}. Then, we make the choice to set $H(0, 0) = 0$ (note that in the finite case, $(0,0)$ is a boundary face, as it is not a face of the tower graph, but it is adjacent to faces of the tower). Given a perfect matching $M$ of the graph $G$ in consideration, in order to determine the height at a face $(X, U)$, one constructs a face path from $(0, 0)$ to $(X, U)$ in $G$, and the height change when crossing from $F \rightarrow F'$ is given by 
\begin{align*}
H(F') - H(F) =
\pm (\mathbf{1}_{e \in M} - \mathbf{1}_{e \in M_0} )
\end{align*}
where the sign is $+$ if we cross the edge with the white vertex on the right, and $-$ otherwise. This is well defined because the height change when making a loop around a vertex is easily seen to be $0$. Furthermore, it can also be checked that this definition of $H$ coincides with the one given in the introduction.

The \emph{Kasteleyn matrix} $\mathbf{K}$ of a graph has rows indexed by black vertices and columns indexed by white vertices, and its entries are defined by 
\begin{align*}
\mathbf{K}(\mathbf{b}, \mathbf{w}) 
=
\begin{cases}
\text{sign}(\mathbf{b}, \mathbf{w}) \cdot \text{wt}(\mathbf{b}, \mathbf{w}) & \text{ if $(\mathbf{b}, \mathbf{w})$ is an edge} \\
0 & \text{ otherwise}
\end{cases}
\end{align*}
where the sign of each edge is chosen so that for each face, the product of signs of its boundary edges is given by $(-1)^{k+1}$, if the face has $2 k$ edges. Such a choice of signs on edges is called a \emph{Kasteleyn weighting} of the graph. The signs of edges, as well as the actual edge weights, for the square hexagon graph $G$ and its finite subgraphs are shown in Figure \ref{fig:SHL}. We will use $\mathbf{K}$ to denote the Kasteleyn matrix for the tower graph of size $N$, as well as for the square hexagon lattice $G$, and which one is meant should be clear from the context.

It is well known that on any finite graph with a Kasteleyn weighting and Kasteleyn matrix $\mathbf{K}$, we have 
\begin{theorem}\cite{Kasteleyn1967,Kenyon2007Lecture}
The dimer model partition function is given by
   $$Z = | \det \mathbf{K} |.$$
\end{theorem}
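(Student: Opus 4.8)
The plan is to reproduce Kasteleyn's classical argument: expand $\det \mathbf{K}$ as a signed sum over perfect matchings, and show that all the signs agree. First I would write
$$\det \mathbf{K} = \sum_{\sigma\colon V_B \to V_W} \mathrm{sgn}(\sigma) \prod_{\mathbf{b} \in V_B} \mathbf{K}(\mathbf{b}, \sigma(\mathbf{b})),$$
the sum over bijections from black to white vertices. A term is nonzero exactly when $(\mathbf{b},\sigma(\mathbf{b}))$ is an edge for every $\mathbf{b}$; since $\sigma$ is a bijection, the edge set $M_\sigma = \{(\mathbf{b},\sigma(\mathbf{b})) : \mathbf{b}\in V_B\}$ then meets each black and each white vertex exactly once, so $M_\sigma$ is a perfect matching, and $\sigma \mapsto M_\sigma$ is a bijection onto perfect matchings. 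Hence
$$\det \mathbf{K} = \sum_{M} \varepsilon(M) \prod_{e \in M} \text{wt}(e), \qquad \varepsilon(M) \defeq \mathrm{sgn}(\sigma_M)\prod_{e\in M}\text{sign}(e) \in \{\pm 1\},$$
and it suffices to prove that $\varepsilon(M)$ is independent of $M$: then $|\det\mathbf{K}| = \sum_M \prod_{e\in M}\text{wt}(e) = Z$.

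Next I would reduce to flipping one cycle at a time. For two perfect matchings $M, M'$, the symmetric difference $M \triangle M'$ is a union of vertex-disjoint cycles, each of even length and alternating between $M$- and $M'$-edges (no open paths occur, as every vertex is matched by both). Flipping these cycles one at a time interpolates between $M$ and $M'$ through perfect matchings, so it is enough to show $\varepsilon(M) = \varepsilon(M\triangle C)$ for a single alternating cycle $C$ of length $2\ell$, say $C = \mathbf{b}_1\mathbf{w}_1\mathbf{b}_2\cdots\mathbf{b}_\ell\mathbf{w}_\ell$ with $\sigma_M(\mathbf{b}_i)=\mathbf{w}_i$. Flipping replaces these by $\sigma_{M\triangle C}(\mathbf{b}_i) = \mathbf{w}_{i-1}$ (indices mod $\ell$), so $\sigma_{M\triangle C} = \sigma_M \circ \rho$ with $\rho$ an $\ell$-cycle, giving $\mathrm{sgn}(\sigma_{M\triangle C})/\mathrm{sgn}(\sigma_M) = (-1)^{\ell-1}$; and $\prod_{e\in M\triangle C}\text{sign}(e)/\prod_{e\in M}\text{sign}(e) = \prod_{e\in C}\text{sign}(e)$ since $\pm 1$ is its own inverse. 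So everything comes down to evaluating $\prod_{e\in C}\text{sign}(e)$.

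The key lemma is the Kasteleyn cycle identity: for any cycle $C$ of length $2\ell$ in the planar embedding enclosing $V$ vertices of the graph, $\prod_{e\in C}\text{sign}(e) = (-1)^{\ell - 1 + V}$. I would prove this by induction on the number of enclosed faces, the base case being $C$ the boundary of a single $2\ell$-gon face, where the Kasteleyn condition gives $\prod_{e\in C}\text{sign}(e) = (-1)^{\ell+1} = (-1)^{\ell-1}$ with $V=0$, and the inductive step decomposing $C$ into two shorter cycles along an interior path and adding exponents (with a vertex/edge count for the shared path). Finally, because $C$ is alternating for $M$, the $M$-edges incident to the $2\ell$ vertices of $C$ are exactly the $\ell$ matching edges lying along $C$; no $M$-edge joins $C$ to its interior, so the enclosed vertices are matched among themselves by $M$, forcing $V$ to be even. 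Combining, $\varepsilon(M\triangle C)/\varepsilon(M) = (-1)^{\ell-1}(-1)^{\ell-1+V} = (-1)^V = 1$, which gives the claim.

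The main obstacle is the cycle identity: its proof is genuinely topological, requiring the planar embedding (the Jordan curve theorem to make sense of ``enclosed'' faces and vertices) together with careful bookkeeping of vertex and edge counts under cycle decomposition. The remaining ingredients — the determinant expansion, the permutation-sign computation, and the parity of $V$ — are combinatorial and routine. I would note that the statement is classical \cite{Kasteleyn1967}, so in the text it can simply be cited; the above is the self-contained route.
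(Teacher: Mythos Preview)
The paper does not prove this theorem at all: it is stated with citations to \cite{Kasteleyn1967,Kenyon2007Lecture} and immediately used, so there is nothing to compare your argument against. Your proposal is the standard Kasteleyn proof and is correct as sketched; as you yourself note in the final line, the appropriate move in the text is simply to cite the classical references, which is precisely what the paper does.
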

As a corollary, if we know the inverse Kasteleyn matrix, we can also compute the correlation functions.
  \begin{cor}\cite{Kenyon2007Lecture}
 Given a set of edges $X = ((\mathbf{w}_1, \mathbf{b}_1), (\mathbf{w}_2, \mathbf{b}_2), \dots, (\mathbf{w}_k, \mathbf{b}_k))$, the probability that all of the edges in $X$ occur in a dimer cover is
  	 	$$(\prod_{i = 1}^k \mathbf{K}(\mathbf{b}_i, \mathbf{w}_i)) \det (\mathbf{K}^{-1} (\mathbf{w}_i, \mathbf{b}_j) )_{1 \leq i, j \leq k} \;\;.$$
\end{cor}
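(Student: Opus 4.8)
The plan is to follow the standard argument (see \cite{Kenyon2007Lecture}): reduce the edge-correlation to a ratio of partition functions, express those partition functions via the preceding theorem as (absolute values of) minors of $\mathbf{K}$, and then relate the minor of $\mathbf{K}^{-1}$ appearing on the right-hand side to the complementary minor of $\mathbf{K}$ using Jacobi's identity for complementary minors. To begin, one may assume the edges in $X$ are pairwise disjoint (otherwise the probability is $0$); let $V(X) = \{\mathbf{w}_1,\dots,\mathbf{w}_k,\mathbf{b}_1,\dots,\mathbf{b}_k\}$, and let $G^X$ be the graph obtained from $G$ by deleting the vertices $V(X)$ together with all edges incident to them. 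A perfect matching of $G$ containing $X$ is exactly the union of $X$ with a perfect matching of $G^X$, and the product of edge weights factors accordingly; summing over such matchings and dividing by $Z$ gives
\begin{equation*}
\mathbb{P}(X \subseteq M) \;=\; \frac{\Big(\prod_{i=1}^k \text{wt}(\mathbf{b}_i,\mathbf{w}_i)\Big)\, Z(G^X)}{Z(G)}.
\end{equation*}

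Next, let $\mathbf{K}^X$ denote the submatrix of $\mathbf{K}$ obtained by deleting the rows $\mathbf{b}_1,\dots,\mathbf{b}_k$ and the columns $\mathbf{w}_1,\dots,\mathbf{w}_k$. By the preceding theorem $Z(G) = |\det \mathbf{K}|$, and moreover $Z(G^X) = |\det \mathbf{K}^X|$: the nonzero terms in the permutation expansion of $\det \mathbf{K}^X$ are indexed by perfect matchings $M'$ of $G^X$, and the assertion that they all carry a common sign is exactly the sign-cancellation statement behind the preceding theorem, applied now to the completed matchings $M' \cup X$ of $G$. Since $\mathbf{K}$ is invertible, Jacobi's identity for complementary minors yields
\begin{equation*}
\det\big(\mathbf{K}^{-1}(\mathbf{w}_i,\mathbf{b}_j)\big)_{1 \le i,j \le k} \;=\; \varepsilon(X)\,\frac{\det \mathbf{K}^X}{\det \mathbf{K}},
\end{equation*}
where $\varepsilon(X) \in \{\pm 1\}$ is the standard sign depending only on the positions of the $\mathbf{b}_i$ among the rows and of the $\mathbf{w}_i$ among the columns of $\mathbf{K}$. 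In particular, if $G^X$ has no perfect matching then $\det \mathbf{K}^X = 0$, so both the probability and the right-hand side of the claimed formula vanish, and the identity holds trivially.

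It remains to assemble these facts. Multiplying the Jacobi identity by $\prod_{i=1}^k \mathbf{K}(\mathbf{b}_i,\mathbf{w}_i) = \big(\prod_i \text{sign}(\mathbf{b}_i,\mathbf{w}_i)\big)\prod_i \text{wt}(\mathbf{b}_i,\mathbf{w}_i)$ and comparing with the displayed expression for $\mathbb{P}(X \subseteq M)$, the two agree provided the total accumulated sign — the product of $\varepsilon(X)$, the edge signs $\text{sign}(\mathbf{b}_i,\mathbf{w}_i)$, and the two signs relating $\det \mathbf{K}$ and $\det \mathbf{K}^X$ to $Z(G)$ and $Z(G^X)$ — is $+1$. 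Verifying this cancellation is the only delicate point, and it is precisely where the Kasteleyn weighting hypothesis is used: fixing a reference matching $M_0 \supseteq X$ and tracking all of the permutation signs relative to $M_0$, the defining property of a Kasteleyn weighting forces these signs to cancel. I expect this sign bookkeeping to be the main (indeed the only real) obstacle; it is entirely classical and is carried out in \cite{Kenyon2007Lecture}, so in the write-up I would reproduce it only briefly or simply defer to that reference.
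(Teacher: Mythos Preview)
Your proposal is correct and follows exactly the classical argument from \cite{Kenyon2007Lecture}: the reduction to a ratio of partition functions, the identification of those partition functions with minors of $\mathbf{K}$ via the preceding theorem, and Jacobi's complementary minor identity, with the Kasteleyn sign condition handling the sign bookkeeping. The paper itself does not supply a proof of this corollary at all --- it simply cites it from \cite{Kenyon2007Lecture} --- so there is nothing further to compare.
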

Thus, the set of edges of a random perfect matching sampled from \eqref{eqn:btz} form a determinantal point process.

\subsection{Schur Process and Interlacing Particle Process} 
\label{subsec:schur}
We review a measure preserving bijection between perfect matchings of the size $N$ tower and certain sequences of interlacing particle arrays given in \cite{borodin2015random}, as well as a Markov chain on the particle arrays which allows perfect sampling from the dimer model on the size $N$ tower. An analogous bijection exists for domino tilings of the Aztec diamond, plane partitions, and in many other situations. See \cite{BorodinGorinSPB12, gorin2021lectures} and references therein for more details.

Suppose we have a perfect matching of the tower of size $N$ made up of the collection of edges $M$. Then we define the sequence of particle arrays $y^1, x^1, z^2, y^2, x^2, \dots, z^N y^{N}, x^{N}$ as follows:
\begin{align*}
z^t &\defeq \{U : (\mathbf{b}(3t - 3-1, U-1),\mathbf{w}(3t - 3, U)) \in M \text{ or } (\mathbf{b}(3t - 3-1, U),\mathbf{w}(3t - 3, U)) \in M\} \\
y^t &\defeq \{U : (\mathbf{b}(3t - 2, U),\mathbf{w}(3t - 2, U)) \in M \}  \\
x^t &\defeq \{U : (\mathbf{b}(3t - 1, U),\mathbf{w}(3t , U)) \in M \text{ or } (\mathbf{b}(3t - 1, U),\mathbf{w}(3t , U-1)) \in M\} \;\;.
\end{align*}

Index the particles in $z^k$, $y^k$, and $x^k$ are $(z^k_1 > z^k_2 > \cdots > y^k_{2 k -2})$, $(y^k_1 > y^k_2 > \cdots > y^k_{2 k -1})$, and $(x^k_1 > x^k_2 > \cdots > x^k_{2 k})$, respectively. One may immediately see that for each $k = 1,\dots, N$
\begin{align*}
-(2 k - 2) &\leq z^k_i \leq N - k \\
-(2 k - 1) &\leq y^k_i \leq N - k \\
-2 k  &\leq x^k_i \leq N - k 
\end{align*}
and also we have the interlacing conditions
\begin{align*}
x^k_i &\geq y^k_i > x^k_{i+1} \;\;\; 1 \leq i \leq 2 k - 1 \\
x^{k-1}_i &\geq z^k_i \geq x^{k-1}_i-1 \;\;\; 1 \leq i \leq 2 k - 2 \;\;.
\end{align*}
which we denote by $y^k \prec x^k$, and $x^{k-1} \succ' z^k$, respectively. (Above we have set $z^{N+1} = (-1,-2,\dots, -2 N)$).

This map is in fact a bijection: Given any sequence of particle arrays 
$$T(N) = \emptyset \prec y^1 \prec x^1 \succ' z^2 \prec y^2 \prec x^2 \succ' \cdots \prec y^N \prec x^N \succ' \emptyset$$
one can uniquely reconstruct the corresponding matching $M$. See Figure \ref{fig:bijection} for an illustration, and see \cite{borodin2015random} for a more detailed discussion.

In order to understand the corresponding probability measure on interlacing arrays, we first define the partitions $\lambda^{(k)} = x^k + \delta^{(2 k)}$, $\mu^{(k)} = y^k + \delta^{(2 k-1)}$, $\nu^{(k)} = z^k + \delta^{(2 k-2)}$, where $\delta^{(\ell)} = (1,2,\dots, \ell)$. For partitions (which we think of as having infinitely many zeros appended to the end), interlacing means
\begin{align*}
\mu &\prec \lambda \iff \lambda_1 \geq \mu_1 \geq \lambda_2 \geq \cdots \\
\mu &\succ' \lambda \iff \lambda_i \leq \mu_i \leq \lambda_i+1\;\;\; \forall \; i \geq 1 \;\;.
\end{align*}

 After extending each array of particles to infinity by adding particles at every position less than some negative integer $m$, such that the number of particles at positions $\geq 0$ are the same as the number of holes at positions $\leq -1$, the arrays of particles are known as the \emph{Maya diagrams} of the corresponding partitions. Then we have

\begin{align}\label{eqn:SP}
&\text{Prob}(T(N) =y^1, x^1, z^2, y^2, x^2, \dots, z^N, y^N, x^N ) \notag \\
&= \frac{1}{Z} s_{\lambda^{(N)}}(\rho^-) s_{\lambda^{(N)}/\mu^{(N)}}(\rho^+_1) s_{\mu^{(N)} / \nu^{(N)}}(\rho^+_2) s_{\lambda^{(N-1)} /  \nu^{(N)}}(\rho^-) s_{\lambda^{(N-1)}/\mu^{(N-1)}}(\rho^+_1) s_{\lambda^{(N-1)}/\mu^{(N-1)}}(\rho^+_2)  \\
& \cdots  s_{\lambda^{(1)}/\nu^{(2)}}(\rho^-) s_{\lambda^{(1)}/\mu^{(1)}}(\rho^+_1) s_{\mu^{(1)}}(\rho^+_2) \notag \;\;.
\end{align}
Above, the notation $s_{\lambda/\mu}(\rho)$ refers to the image of the Schur function $s_{\lambda/\mu}$ under the specialization $\rho : \Lambda \rightarrow \mathbb{C}$ of the algebra $\Lambda$ of symmetric functions. We denote by $\rho^-$ the dual specialization with single nonzero variable $\beta$, and $\rho^+_1, \rho^+_2$ refer to the single variable specializations with nonzero variable $1, \alpha$, respectively. Concretely, we have 
\begin{align*}
s_{\mu/\nu}(\rho^+_2) &= \alpha^{|\mu| - |\nu|} \mathbf{1}_{\nu \prec \mu}\\
s_{\lambda/\mu}(\rho^+_1) &= \mathbf{1}_{\mu \prec \lambda}
\end{align*}
and
$$s_{\nu/\lambda}(\rho^-) = \beta^{|\nu| - |\lambda|} \mathbf{1}_{\nu \succ' \lambda} \;\;.$$
When interlacing arrays are equipped with this probability measure, the bijection between dimer covers and interlacing arrays is a measure preserving bijection. The probability measure in equation \eqref{eqn:SP} is a special case of a \emph{Schur process}.

\subsection{Space-time Correlation Kernel}

Define 
$$\mathfrak{T}_N \defeq \{(X, U) \in \mathbb{Z}^2 : 1 \leq X \leq 3 N-1 , -X+ \floor{\frac{X}{3}}  \leq U \leq N- \floor{\frac{X}{3}}-1\} \;\;.$$ Using well known results about Schur processes (see \cite{okounkov2003correlation, borodin2005eynard, aggarwal2015correlation_schur}), if interlacing arrays $y^1, x^1, z^2, y^2, x^2, \dots, z^N, y^N, x^N$ are sampled according to the measure in equation \eqref{eqn:SP}, then the corresponding random set of points $\mathcal{P} \subset 2^{\mathfrak{T}_N}$ consisting of  
\begin{align*}
&\{(3 t - 1, x^t_j) : 1 \leq t \leq N, 1 \leq j \leq 2 t\} \\
&\bigcup \{(3 t - 2, y^t_j): 1 \leq t \leq N, 1 \leq j \leq 2 t-1\}\\
& \bigcup \{(3 t - 3, z^t_j) : 2 \leq t \leq N, 1 \leq j \leq 2 t-2\}
\end{align*}
 is a \emph{determinantal point process}. This means that there is a kernel $(K(P, P'))_{P, P' \in \mathfrak{T}_N}$ satisfying the following property: The probability of $\mathcal{P}$ containing points at locations $P_1,\dots,P_k$ is given by
$$\text{Prob}(P_1,\dots, P_k \in \mathcal{P}) = \det(K(P_i, P_j))_{i,j=1}^k\;\;.$$

For general Schur processes, there are well known explicit contour integral formulas for $K$. We specialize the formula in Theorem 2.2 of \cite{borodin2005eynard}, and we obtain that the correlation kernel for the Schur process corresponding to a size $N$ tower is given as follows. Let $X_i = 3 k_i - m_i$ with $1 \leq m_i \leq 3$ for $i =1, 2$, and let 
$$\Phi( 3 t - m, N; z) \defeq  \frac{(1+\beta z)^{N-t+1}}{(1-z^{-1})^{-(t-1 ) - \mathbf{1}_{m =1}} (1-\alpha z^{-1})^{-(t-1 ) - \mathbf{1}_{m \leq 2}}} $$
for $m \in \{1,2,3\}$. 

Then 
\begin{align}
\label{eqn:kast}
K((X_1, U), (X_2, V)) = \frac{1}{(2 \pi i)^2} \int_{\Gamma_{ \pm}} \int_{\Gamma_{\mp}} \frac{\Phi( 3 k_1 - m_1, N; z)}{\Phi( 3 k_2 - m_2, N; w)}  \frac{w}{z-w} \; \; \frac{dw dz}{z^{U+1} w^{-V+1}} \;\;.
\end{align}

For small $\beta$ and $\alpha$, the contours can be taken as $\Gamma_{+} = \{|z| = 1 + \epsilon\}, \Gamma_- = \{|w| = 1+\epsilon/2 \}$ for small $\epsilon> 0$ if $X_1 \geq X_2$, and we use $\Gamma_-$ for the $z$ integral and $\Gamma_+$ for $w$ if $X_1 < X_2$. If we think of this as the formal sum of residues, which is a rational function in $\alpha$ and $\beta$, then we can analytically continue that formula for $K$ to arbitrary $\beta, \alpha > 0$. Thus, for arbitrary $\beta, \alpha > 0$ we choose contours that contain $0$ and the poles at $1, \alpha$, but not the pole at $-\frac{1}{\beta}$, with $\Gamma_+$ containing $\Gamma_-$ in its interior if $X_1 \geq X_2$, and the other way around if $X_1 < X_2$.

\begin{remark}
\label{rmk:spacetime}
 Label the Maya diagrams as $a^n$ so that $a^{3 k-1} = x^k  , a^{3 k - 2} = y^k, a^{3 k - 3} = z^k$. In fact, due to the construction described in Section 4.2 of \cite{borodin2015random}, which comes from \cite{BorFerr2008DF} and originally stems from \cite{DiaconisFill1990}, we know that along ``down-right" space-time paths $(n_i, T_i)_{i=0}^l$ where $n_i - n_{i-1} \leq 0$ and $T_i-T_{i-1} \geq 0$ the joint distribution of the interlacing arrays $a^{n_i}(T_i)$ is still a Schur process. It follows that the determinantal structure is preserved along these paths as well, and the correlation Kernel is 
 \begin{align}
\label{eqn:kast_time}
K((X_1, U, T_1), (X_2, V, T_2)) = \frac{1}{(2 \pi i)^2} \int_{\Gamma_{\pm}} \int_{\Gamma_{\mp}} \frac{\Phi( 3 k_1 - m_1, T_1; z)}{\Phi( 3 k_2 - m_2, T_2; w)}  \frac{w}{z-w} \; \; \frac{dw dz}{z^{U+1} w^{-V+1}} \;\;.
\end{align}

\end{remark}

\subsection{Inverse Kasteleyn from Correlation Kernel}

We now derive an exact formula for the inverse of the Kasteleyn matrix $\mathbf{K}$ on the tower graph of size $N$. Since the correlation kernel $K$ defined on $\mathfrak{T}_N \times \mathfrak{T}_N$ encodes the same probabilistic data as the inverse Kasteleyn, it is reasonable to believe we can derive a formula for the latter from the former. In this section we prove that this is indeed the case. Using the integral formula \eqref{eqn:kast} for $K$, we can extend its domain definition to arbitrary $((X', U'), (X, U)) \in \mathbb{Z}^2$. We call this the \emph{extended kernel}.

Define the operator $\tilde{K} : V_B^N \rightarrow V_W^N$ as $\tilde{K}(\mathbf{w}(X', U'), \mathbf{b}(X, U)) = K((X', U'),(X, U))$. We claim that $\tilde{K} = \mathbf{K}^{-1}$. To show this we have two lemmas.

\begin{lemma}
If $(X_2, V)$ indexes a black vertex, the extended kernel $K$ satisfies
$$
\begin{cases}
K((X_1+1, U), (X_2, V)) - K((X_1-1, U), (X_2, V)) &  \\ 
\;\;\;+ \beta K((X_1+1, U-1), (X_2, V)) + K((X_1-1, U+1), (X_2, V)) = \delta_{(X_1, U), (X_2, V)}&  X_1 = 3 k - 1\;  \\ 
K((X_1, U), (X_2, V)) - K((X_1-1, U), (X_2, V)) + \alpha K((X_1-1, U+1), (X_2, V)) = \delta_{(X_1, U), (X_2, V)}&  X_1 = 3 k - 2\; \;\;.  \\ 
\end{cases}
$$
\end{lemma}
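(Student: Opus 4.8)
The left-hand side of each identity is the matrix entry $(\mathbf K\tilde K)(\mathbf b(X_1,U),\mathbf b(X_2,V))$: the coefficients $1,-1,\beta$ (resp.\ $1,-1,\alpha$) and the vertex shifts are precisely the entries $\mathbf K(\mathbf b(X_1,U),\mathbf w)$ over the white neighbours $\mathbf w$ of $\mathbf b(X_1,U)$ read off from Figure~\ref{fig:SHL}. The plan is to substitute the contour formula \eqref{eqn:kast} for the extended kernel and collapse the sum. First I would fold the $X_1$-dependence of the contour prescription into a single correction term. Let $\mathcal I((X',U'),(X_2,V))$ be the integral \eqref{eqn:kast} evaluated with $z$ on the outer contour and $w$ on the inner one, for every $X'$ (this is the $X'\ge X_2$ branch). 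Since both $\Gamma_\pm$ are circles of radius close to $1$ enclosing $0,1,\alpha$ and excluding $-1/\beta$, pushing the $z$-contour across the $w$-contour only picks up the residue at $z=w$, so that
\[
K((X',U'),(X_2,V))\;=\;\mathcal I((X',U'),(X_2,V))\;-\;\mathbf 1_{X'<X_2}\,\frac{1}{2\pi i}\oint\frac{\Phi(X';w)}{\Phi(X_2;w)}\,\frac{dw}{w^{\,U'-V+1}},
\]
where I abbreviate $\Phi(X;w)$ for the relevant $\Phi(3k-m,N;w)$.

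Next I would show that the $\mathcal I$-part of the sum vanishes identically. After factoring $1/\Phi(X_2;w)$ out of the $w$-integral, the $z$-integrand of the combination is $z^{-(U+1)}\bigl[(1+\beta z)\Phi(3k,N;z)-(1-z^{-1})\Phi(3k-2,N;z)\bigr]$ in the first case and $z^{-(U+1)}\bigl[\Phi(3k-2,N;z)-(1-\alpha z^{-1})\Phi(3k-3,N;z)\bigr]$ in the second; both brackets are $0$, because the explicit product formula for $\Phi$ gives
\[
(1+\beta z)\Phi(3k,N;z)=(1-z^{-1})\Phi(3k-2,N;z)=(1+\beta z)^{N-k+1}(1-z^{-1})^{k}(1-\alpha z^{-1})^{k}
\]
and $\Phi(3k-2,N;z)=(1-\alpha z^{-1})\Phi(3k-3,N;z)=(1+\beta z)^{N-k+1}(1-z^{-1})^{k-1}(1-\alpha z^{-1})^{k}$. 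Hence the whole left-hand side equals $-\sum' c_{\mathbf w'}\,\frac{1}{2\pi i}\oint\frac{\Phi(X_{\mathbf w'};w)}{\Phi(X_2;w)}w^{V-U_{\mathbf w'}-1}\,dw$, the primed sum running over the white neighbours $\mathbf w'$ of $\mathbf b(X_1,U)$ with $X_{\mathbf w'}<X_2$ and $c_{\mathbf w'}$ the Kasteleyn coefficient.

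Finally I would evaluate this residual sum. The white neighbours of $\mathbf b(X_1,U)$ occupy only the two columns $X_1\pm1$ (first case) or $X_1,\,X_1-1$ (second case), and $X_2$ is a black column. If $X_1\ne X_2$, then either neither of these columns lies strictly left of $X_2$—so there is nothing to sum—or both do, in which case the identical telescoping, now applied in the $w$-variable, kills the sum; either way we get $0=\delta_{(X_1,U),(X_2,V)}$. If $X_1=X_2$, only the two neighbours in column $X_1-1$ contribute, and the ratio of consecutive $\Phi$'s collapses to a single simple pole: $\Phi(3k-2,N;w)/\Phi(3k-1,N;w)=w/(w-1)$ in the first case and $\Phi(3k-3,N;w)/\Phi(3k-2,N;w)=w/(w-\alpha)$ in the second. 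Summing the residues at $w=1$ (resp.\ $w=\alpha$) and $w=0$ gives $\frac{1}{2\pi i}\oint w^{V-U'}(w-1)^{-1}dw=\mathbf 1_{V\ge U'}$ and $\frac{1}{2\pi i}\oint w^{V-U'}(w-\alpha)^{-1}dw=\alpha^{\,V-U'}\mathbf 1_{V\ge U'}$, and plugging in the two neighbour heights $U'=U,\,U'=U+1$ with their coefficients $-1,\,+1$ leaves $\mathbf 1_{V\ge U}-\mathbf 1_{V\ge U+1}=\delta_{U,V}$ (up to the factor $\alpha^{V-U}$, which is $1$ on the diagonal), i.e.\ $\delta_{(X_1,U),(X_2,V)}$.

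The main obstacle is the contour bookkeeping in the first step—fixing the sign of the $z=w$ residue, handling the borderline $X'=X_2$ assignment, and confirming that the Kasteleyn signs and weights in Figure~\ref{fig:SHL} are precisely the ones for which the two $\Phi$-telescoping identities hold with the correct normalization. The one-dimensional residue evaluations and the final arithmetic with indicators are routine; and since the identity is a rational function of $\alpha,\beta$ it is enough to establish it for small $\alpha,\beta$, where $\Gamma_\pm$ are literal circles, and then analytically continue.
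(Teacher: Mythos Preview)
Your proof is correct and rests on exactly the same mechanism as the paper's: the telescoping identities $(1+\beta z)\Phi(3k,N;z)=(1-z^{-1})\Phi(3k-2,N;z)$ and $\Phi(3k-2,N;z)=(1-\alpha z^{-1})\Phi(3k-3,N;z)$, together with the residue at $z=w$ picked up from the contour swap. The only organizational difference is that the paper introduces the ``phantom'' value $K((X_1,U),(X_2,V))$ at the black column $X_1$ and expands it in two ways---so the contour-swap correction appears once, already with $\Phi(X_1;z)/\Phi(X_2;w)=1$---whereas you first decompose every term as $K=\mathcal I-(\text{correction})$ and then telescope the two pieces separately; this forces you to verify the extra (easy) case where all neighbours lie strictly left of $X_2$, but otherwise the computations coincide.
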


\begin{proof}
First suppose $X_1 = 3 k - 1$. Then $-K((X_1, U), (X_2, V))$ can be written as 
\begin{align*}
K((X_1, U), (X_2, V)) &=\frac{1}{(2 \pi i)^2} \int_{\Gamma_{\pm}} \int_{\Gamma_{\mp}} (1 + \beta z) \frac{(1 + \beta z)^{N- k} (1-z^{-1})^{k } (1-\alpha z^{-1})^{k}}{\Phi(X_2, N; w)} \; \frac{w dw dz}{(z-w)z^{U+1} w^{-V + 1}} \\
&= K((X_1+1, U), (X_2, V)) + \beta K((X_1+1, U-1), (X_2, V)) \;\;.
\end{align*}
But we also have
\begin{align*}
K((X_1, U), (X_2, V)) &= \frac{1}{(2 \pi i)^2} \int_{\Gamma_{\pm}} \int_{\Gamma_{\mp}} (1-z^{-1}) \frac{(1 + \beta z)^{N- k+1} (1-z^{-1})^{k-1 } (1-\alpha z^{-1})^{k}}{\Phi(X_2, N; w)} \; \frac{w dw dz}{(z-w) z^{U+1} w^{-V + 1}} \\
&=K((X_1-1, U), (X_2, V)) - K((X_1-1, U+1), (X_2, V))  \\
& - \frac{1}{(2 \pi i)^2}\mathbf{1}_{X_1 = X_2}  (\int_{\Gamma_{-}} \int_{\Gamma_{+}} - \int_{\Gamma_{+}} \int_{\Gamma_{-}} )  \frac{\Phi(X_1, N; z)}{\Phi(X_2, N; w)} \; \frac{ w dw dz}{(z-w) z^{U+1} w^{-V + 1}}\;\;.
\end{align*}
Subtracting the second formula from the first, we get that the left hand side of the first case equals
\begin{align*}
&K((X_1+1, U), (X_2, V)) - K((X_1-1, U), (X_2, V))   \\ 
&\;\;\;+ \beta K((X_1+1, U-1), (X_2, V)) + K((X_1-1, U+1), (X_2, V))  \\
&= - \frac{1}{(2 \pi i)^2} \mathbf{1}_{X_1 = X_2}  (\int_{\Gamma_{-}} \int_{\Gamma_{+}} - \int_{\Gamma_{+}} \int_{\Gamma_{-}} )   \; \frac{w dw dz}{(z-w) z^{U+1} w^{-V + 1}} \\
&=   \delta_{(X_1, U), (X_2, V)} \;\;.
\end{align*}

Now for the second case, $X_1 = 3 k - 2$, we have
\begin{align*}
K((X_1, U), (X_2, V)) &=\frac{1}{(2 \pi i)^2} \int_{\Gamma_{\pm}} \int_{\Gamma_{\mp}} (1 - \alpha z^{-1}) \frac{(1 + \beta z)^{N- k+1} (1-z^{-1})^{k -1} (1-\alpha z^{-1})^{k-1}}{\Phi(X_2, N; w)} \; \frac{w dw dz}{(z-w)z^{U+1} w^{-V + 1}}  \\
&= K((X_1-1, U), (X_2, V)) - \alpha K((X_1-1, U+1), (X_2, V))  \\
& -\frac{1}{(2 \pi i)^2}\mathbf{1}_{X_1 = X_2}  (\int_{\Gamma_{-}} \int_{\Gamma_{+}} - \int_{\Gamma_{+}} \int_{\Gamma_{-}} )   \; \frac{ w dw dz}{(z-w) z^{U+1} w^{-V + 1}}\;\;.
\end{align*}
and so again we get that the left hand side of the second case equals
$$ \delta_{(X_1, U), (X_2, V)} \;\;.$$

\end{proof}

Next, we state a lemma which roughly says the following: If $\mathbf{w}(X, U)$ is a boundary vertex, meaning it is not in the tower graph but is exactly one edge away from a vertex belonging to the tower of size $N$, then (slightly abusing notation) $\tilde{K}(\mathbf{w}(X, U), \mathbf{b} (X_2, V)) = 0$. 

\begin{lemma}[Boundary conditions]
Suppose $0 \leq X \leq 3 N$ and $X = 3 k - a$, so that $\mathbf{w}(X, U)$ is a white vertex of the square hexagon lattice. In each of the following three cases
\begin{enumerate}
\item $U = N - k + 1$ 
\item $a = 2$ and $U = - 2 k$
\item $a = 3$ and $U = - 2 k +1 $
\end{enumerate}
we have
$$K((X, U), (X', U'))  = 0 \;\;.$$
\end{lemma}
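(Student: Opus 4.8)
The plan is to substitute the contour-integral formula \eqref{eqn:kast} for $K$ and to show the double integral vanishes by a residue computation, first in $z$ (and, in some branches, a further residue in $w$). Write $X = 3k-a$ with $a\in\{2,3\}$ (the only values for which $\w(X,U)$ is a white vertex), and write the second argument as $X' = 3k_2 - m_2$ with $m_2\in\{1,2\}$ (the values occurring for black vertices); since $\lfloor X'/3\rfloor = k_2-1$, the hypothesis that $(X',U')$ is a black vertex of the size $N$ tower gives $-2k_2+m_2-1 \le U' \le N-k_2$, and these bounds will be used. First I would unfold $\Phi$ via $(1-z^{-1})^c = z^{-c}(z-1)^c$: this yields $\Phi(X,N;z) = (1+\beta z)^{N-k+1}(z-1)^{c_1}(z-\alpha)^{c_2}z^{-(c_1+c_2)}$ with $(c_1,c_2)=(k-1,k)$ if $a=2$ and $(k-1,k-1)$ if $a=3$, and likewise $\Phi(X',N;w)$ with exponents $(c_1',c_2')=(k_2-1+\mathbf 1_{m_2=1},\,k_2)$. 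The one structural fact I would lean on throughout is that $N-k+1\ge 0$, so neither $\Phi(X,N;z)$ nor the full integrand has a pole at $z=-1/\beta$; the only finite $z$-singularities of the integrand are $z=0$ and $z=w$.

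For fixed $w$, the $z$-dependent part of the integrand is $\Phi(X,N;z)\,z^{-(U+1)}(z-w)^{-1}$, whose ``polynomial part'' carries the monomial $z^{-(c_1+c_2+U+1)}$. \emph{Cases (2) and (3):} here the prescribed $U$ equals exactly $-(c_1+c_2)-1$ (this is what $U=-2k$ for $a=2$ and $U=-2k+1$ for $a=3$ unwind to), so $\Phi(X,N;z)\,z^{-(U+1)}$ is an honest polynomial $P(z)$. I would then split on the contour prescription. If $X<X'$, then $w$ lies outside the $z$-contour, $P(z)/(z-w)$ is holomorphic inside it, and the $z$-integral is $0$. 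If $X\ge X'$, then $w$ lies inside the $z$-contour; evaluating the $z$-integral by the residue at $z=w$ reduces \eqref{eqn:kast} to the single integral $\frac{1}{2\pi i}\oint \frac{\Phi(X,N;w)}{\Phi(X',N;w)}\,w^{U'-U-1}\,dw$ over the $w$-contour (which encloses $0,1,\alpha$ but not $-1/\beta$). I would check that $X\ge X'$ forces $c_1\ge c_1'$ and $c_2\ge c_2'$, so the integrand has no pole at $w=1$ or $w=\alpha$, and that the lower bound $U'\ge -2k_2+m_2-1$ (with $U=-(c_1+c_2)-1$) makes the total power of $w$ nonnegative, so there is no pole at $w=0$ either; since the remaining singularity $-1/\beta$ is outside the contour, the integral vanishes.

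\emph{Case (1):} now $U=N-k+1$, and $\Phi(X,N;z)\,z^{-(U+1)}$ is not a polynomial, but a one-line degree count shows it is $O(z^{-1})$ as $z\to\infty$, so the $z$-integrand is $O(z^{-2})$ there. If $X\ge X'$, both finite poles $z=0,w$ lie inside the $z$-contour, so the $z$-integral equals $-2\pi i\operatorname{Res}_{z=\infty}$, which is $0$. If $X<X'$, then $w$ is outside the $z$-contour; enlarging the $z$-contour to infinity (where the $O(z^{-2})$ integrand contributes nothing) produces $-\operatorname{Res}_{z=w}$, reducing \eqref{eqn:kast} again to $\pm\frac{1}{2\pi i}\oint \frac{\Phi(X,N;w)}{\Phi(X',N;w)}\,w^{U'-U-1}\,dw$. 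This time I would argue from outside the contour: $X<X'$ forces $k\le k_2$, so the ratio carries $(1+\beta w)^{k_2-k}$ with nonnegative exponent and hence no pole at $-1/\beta$, while the upper bound $U'\le N-k_2$ makes the integrand decay like $w^{-2}$ at $\infty$; thus the only residues outside the $w$-contour, at $-1/\beta$ and at $\infty$, both vanish, and the integral is $0$.

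I expect the main obstacle to be organizational rather than conceptual: one must carry the two contour orderings ($X\ge X'$ versus $X<X'$) through each of the three boundary cases, and in each branch deduce the correct handful of exponent inequalities from the boundary value of $U$ and from the position-range constraints on $(X',U')$. The underlying computations are routine Laurent-polynomial bookkeeping, and verifying them across the finitely many combinations of $a\in\{2,3\}$ and $m_2\in\{1,2\}$ is the bulk of the work, but no individual step is hard.
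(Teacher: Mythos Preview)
Your proposal is correct and follows essentially the same approach as the paper's proof: a residue analysis in $z$ (expanding the contour to infinity in case (1), and using that $\Phi(X,N;z)z^{-(U+1)}$ is an honest polynomial in cases (2) and (3)), followed in the ``hard'' branch of each case by reducing to a single $w$-integral via the residue at $z=w$ and checking that this integral has no contributing poles (at $-1/\beta$ and $\infty$ in case (1), at $0,1,\alpha$ in cases (2)--(3)). Your bookkeeping of the exponent inequalities is in fact more explicit and systematic than the paper's, which for instance carries out case (1) only for one representative value of $a$; your observation that restricting to $a\in\{2,3\}$ and $m_2\in\{1,2\}$ suffices for the intended application is also accurate.
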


\begin{proof}
The proof proceeds by plugging in these values for $U$ case by case and analyzing residues to argue that the double integral is $0$. 
\begin{enumerate}
\item Then, WLOG assuming for this case that $X = 3 k -1$,
\begin{align*}
K((X, U), (X_2, V)) &= \frac{1}{(2 \pi i)^2} \int_{\Gamma_{\pm}} \int_{\Gamma_{\mp}} \frac{(1 + \beta z)^{N- k+1} (1-z^{-1})^{k } (1-\alpha z^{-1})^{k}}{\Phi(X_2, N; w)} \; \frac{w dw dz}{(z-w)z^{N-k+2} w^{-V + 1}}\;\;.
\end{align*}
First suppose $X \geq X_2$. Then we can expand the $z$ contour $\Gamma_+$ out to infinity, and since the integrand decays as $|\frac{(1 + \beta z)^{N- k+1}}{(z-w)z^{N-k+2}}| = O(\frac{1}{|z|^2})$, the residue at $z = \infty$ is $0$, so the value of the $z$ integral is $0$ (for each $w$ on the contour). If $X < X_2$, then we pick up the $z = w$ residue in the process of dragging the $z$ contour out to $\infty$. So we must argue that 
\begin{align*}
 \int_{\Gamma_{+}} \frac{(1 + \beta w)^{N- k+1} (1-w^{-1})^{k } (1-\alpha w^{-1})^{k}}{(1+\beta w)^{N-k_2+1} (1-w^{-1})^{k_2 -1 +\mathbf{1}_{m_2 =1}} (1-\alpha w^{-1})^{k_2 -1+ \mathbf{1}_{m_2 \leq 2}} } \; \frac{dw}{w^{N-k+2} w^{-V}} = 0 \;\;.
\end{align*}
Now, since $V \leq N-k_2$, we can see that the integrand behaves as $O(\frac{1}{|w|^2})$ for $|w|$ large. Also, there is no residue at $-\frac{1}{\beta}$ because $X < X_2$ and so $k \leq k_2$. Thus, the integral again is $0$ because the residue at $\infty$ is $0$.

\item Suppose $a = 2$ and $U = - 2 k$. Then 
\begin{align*}
K((X, U), (X_2, V)) &= \frac{1}{(2 \pi i)^2} \int_{\Gamma_{\pm}} \int_{\Gamma_{\mp}} \frac{(1 + \beta z)^{N- k+1} (1-z^{-1})^{k -1 } (1-\alpha z^{-1})^{k}}{\Phi(X_2, N; w)} \; \frac{w dw dz}{(z-w)z^{-2 k + 1} w^{-V + 1}}\;\;.
\end{align*}
If $X < X_2$, we can shrink the $z$ contour to $0$ without crossing any residues, and since $(1-z^{-1})^{k-1} (1-\alpha z^{-1})^{k}z^{2 k -1} = O(1)$ as $z \rightarrow 0$, there is no residue at $0$. Thus the integral is $0$. If $X \geq X_2$, then we again would like to shrink the $z$ contour to $0$, but we must also consider the residue at $z = w$. The result from this residue, modulo prefactors, is 
\begin{align*}
 \int_{\Gamma_{-}} \frac{(1 + \beta w)^{N- k+1} (1-w^{-1})^{k -1} (1-\alpha w^{-1})^{k}}{(1+\beta w)^{N-k_2+1} (1-w^{-1})^{k_2 -1 +\mathbf{1}_{m_2 =1}} (1-\alpha w^{-1})^{k_2 -1+ \mathbf{1}_{m_2 \leq 2}} } \; \frac{dw}{w^{-2k+1} w^{-V}} \;\;.
\end{align*}
Now since $X \geq X_2$, there is no residue at $1$ or $\alpha$. Also, since $V + (k_2 -1+ \mathbf{1}_{m_2 \leq 2}) + (k_2 -1 +\mathbf{1}_{m_2 =1}) = V + 2 k_2 -1 + \mathbf{1}_{m_2 =1} \geq 0$, by the same argument as above the integrand is $O(1)$ as $w \rightarrow 0$, so the residue at $w= 0$ is $0$.
\item This case is very similar to case (2), so we omit details.

\end{enumerate}

\end{proof}

\begin{proposition}\label{prop:invK}
 $$\tilde{K} = \mathbf{K}^{-1}$$ is the inverse Kasteleyn. 
\end{proposition}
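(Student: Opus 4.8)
The plan is to verify that $\tilde{K} = \mathbf{K}^{-1}$ by checking the defining identity $\mathbf{K} \tilde{K} = \mathrm{Id}$ on $V_B^N$ (equivalently $\tilde{K}\mathbf{K} = \mathrm{Id}$), which, since $\mathbf{K}$ is a finite square matrix, suffices. Concretely, I want to show that for every black vertex $\mathbf{b}(X_2, V)$ of the tower and every black vertex $\mathbf{b}(X_1, U')$ of the tower,
\begin{align*}
\sum_{\mathbf{w}} \mathbf{K}(\mathbf{b}(X_1, U'), \mathbf{w}) \, \tilde{K}(\mathbf{w}, \mathbf{b}(X_2, V)) = \delta_{(X_1, U'), (X_2, V)},
\end{align*}
where the sum runs over white vertices $\mathbf{w}$ of the tower graph. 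The key point is that the left-hand side is, by definition of $\mathbf{K}$, a finite linear combination (with coefficients $\pm 1, \alpha, \beta$) of the values $K((X, U), (X_2, V))$ where $(X, U)$ ranges over the white neighbors of $\mathbf{b}(X_1, U')$ — and this is precisely the linear combination appearing on the left-hand side of the first Lemma, read from the black-vertex side. So the first Lemma does essentially all the work: it says that the ``bulk'' Kasteleyn recurrence applied to the extended kernel $K$ reproduces the delta function, exactly as required for $\mathbf{K}^{-1}$.

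The steps, in order: (1) Fix a black vertex $\mathbf{b}(X_1, U')$ of the tower. Write out its white neighbors using the adjacency rules of the square-hexagon lattice (the cases $X_1 \equiv 1$ and $X_1 \equiv 2 \pmod 3$), together with the Kasteleyn signs and edge weights from Figure \ref{fig:SHL}, and observe that $\sum_{\mathbf{w}} \mathbf{K}(\mathbf{b}(X_1,U'), \mathbf{w}) K((X_{\mathbf{w}}, U_{\mathbf{w}}), (X_2,V))$ is exactly the left-hand expression of the first Lemma (after relabeling $X_1 \leftrightarrow$ the white-vertex $X$-coordinate and adjusting $U$). (2) Invoke the first Lemma to conclude this equals $\delta_{(X_1,U'),(X_2,V)}$ — but only when all the white neighbors of $\mathbf{b}(X_1,U')$ actually lie in the tower graph, i.e.\ when $\mathbf{b}(X_1,U')$ is an interior black vertex. (3) For black vertices $\mathbf{b}(X_1, U')$ adjacent to the boundary, one or more of the white vertices appearing in the recurrence is a boundary white vertex not in the tower; here invoke the Boundary conditions Lemma, which guarantees $\tilde{K}(\mathbf{w}(X,U), \mathbf{b}(X_2,V)) = 0$ for exactly those boundary white vertices in the three listed cases. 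Thus those terms drop out of the actual matrix product $\mathbf{K}\tilde{K}$ without changing the value, so the identity from the first Lemma still holds with the sum restricted to white vertices of the tower. (4) Check that the three boundary cases in the second Lemma are precisely the white vertices that get ``cut off'' by the tower's defining inequalities $-X + \lfloor X/3 \rfloor \le U \le N - \lfloor X/3 \rfloor - 1$ — i.e.\ a combinatorial bookkeeping step matching the shape of the tower to the list (1)–(3) in the Boundary Lemma. (5) Conclude that $\mathbf{K}\tilde{K} = \mathrm{Id}$, hence $\tilde{K} = \mathbf{K}^{-1}$.

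The main obstacle is step (4), the boundary bookkeeping: one must carefully enumerate, for each type of boundary black vertex of the tower graph (top, bottom-left, bottom-right of the ``tower'' shape, for each residue of $X_1 \bmod 3$), which of its white neighbors fail to be vertices of the tower, and verify that each such missing white vertex $\mathbf{w}(X,U)$ falls into one of the three cases of the Boundary conditions Lemma — so that its contribution to the Kasteleyn recurrence, which the first Lemma implicitly includes, is legitimately zero in the genuine matrix product. This requires a careful reading of the embedding in Figure \ref{fig:SHL} and the floor-function inequalities defining the tower, and is the only place where the specific geometry of the tower (as opposed to the translation-invariant bulk relations) enters. Everything else is a direct assembly of the two lemmas. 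I would also remark that $\mathbf{K}$ being square and the product being the identity on one side forces it to be a two-sided inverse, so there is no need to separately verify $\tilde{K}\mathbf{K} = \mathrm{Id}$.
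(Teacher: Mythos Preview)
Your proposal is correct and follows exactly the approach the paper takes: the paper's proof of Proposition~\ref{prop:invK} is the single sentence ``The proposition follows immediately from the two lemmas,'' and your steps (1)--(5) are precisely the unpacking of that sentence --- the first Lemma gives the row identity $\mathbf{K}\tilde K = \mathrm{Id}$ in the bulk, and the Boundary conditions Lemma kills the contributions from white neighbors that fall outside the tower.
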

\begin{proof}
The proposition follows immediately from the two lemmas. 
\end{proof}

\section{Local Statistics}
\label{sec:loc}

\subsection{Translation Invariant Gibbs Measures and Complex Coordinate}
\label{subsec:gibbs_meas}

 Now we review the description of the translation invariant Gibbs measures of the dimer model on a $\mathbb{Z}^2$ periodic lattice given in \cite{KOS2006}, and describe them in our setting. In the following general discussion, we consider a graph $G$ which is invariant under a $\mathbb{Z}^2$ action.

\begin{figure}
\includegraphics[scale=0.8]{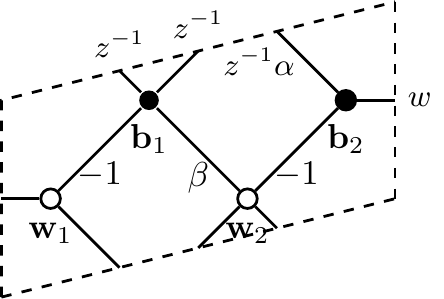}
\caption{Above is the fundamental domain we choose for the square hexagon graph $G$. We also view this as an embedding of the graph $G_1 = G / \mathbb{Z}^2$ on a torus. The weights shown are used to construct $\mathbf{K}_1(z, w)$.}
\label{fig:mag}
\end{figure}

 \begin{figure}
\includegraphics[scale=0.8]{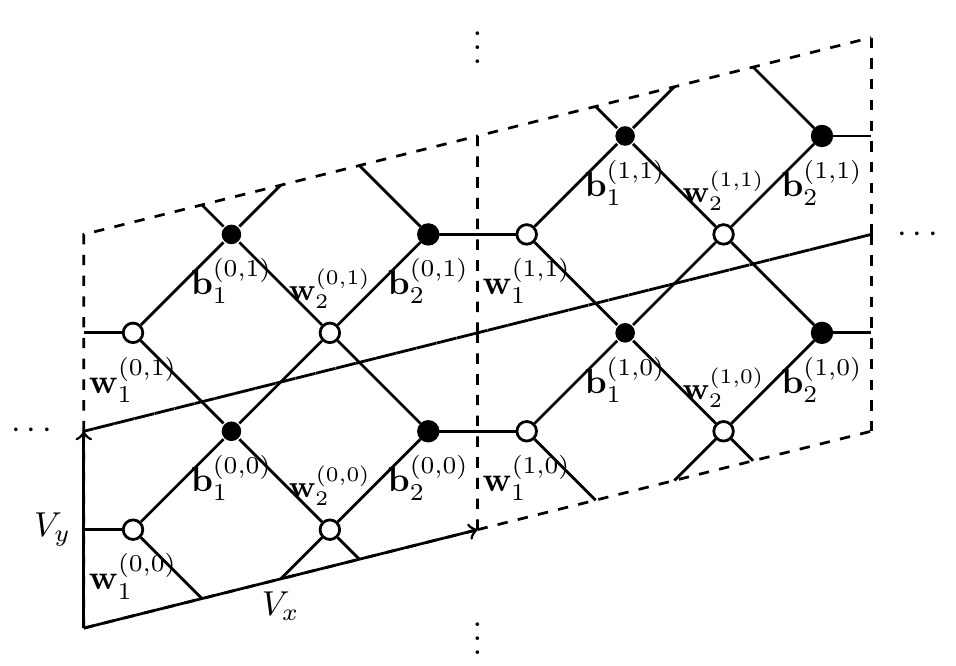}
\caption{Above is the infinite square hexagon graph $G$, with coordinates shown for some vertices. The $\mathbb{Z}^2$ action is generated by the translations by $V_x$ and $V_y$. If $F$ denotes the position of a face, and $H$ is the height function of a dimer configuration randomly sampled from a Gibbs measure (expectation with respect to which we denote as $E$), we have $s = E[H(F + V_x) - H(F)], t = E[H(F + V_y) - H(F)]$.}
\label{fig:z2lattice}
\end{figure}

An important object in the study of dimers on $\mathbb{Z}^2$ invariant lattices is the \emph{characteristic polynomial} of the graph, which is a Laurent polynomial determined by the edge-weighted graph $G$, and is denoted by $P(z, w)$. As stated in Theorem 2.1 of \cite{KOS2006}, there is a two parameter family of ergodic, translation invariant Gibbs measures on dimer covers of $G$ corresponding to pairs of allowed average height function slopes $(s, t)$. Here ergodicity and translation invariance are both with respect to the $\mathbb{Z}^2$ action, and ``allowed slopes" means that $(s, t)$ is in the Newton polygon of $P$. These Gibbs measures can be in three phases, the \emph{liquid phase}, the \emph{frozen phase}, and the \emph{gas phase}. The set of Gibbs measures can also be parameterized by \emph{magnetic fields} $(B_x, B_y)$, a pair of real numbers in the \emph{amoeba} of the algebraic curve $\{P(z, w) = 0\}$. The amoeba is defined by the image of the curve $\{P(z, w) = 0\}$ under the map $(z, w) \mapsto (\log |z|, \log |w|)$. For $(B_x, B_y)$ in the interior of the amoeba, this correspondence is one-to-one, and yields the set of liquid phase Gibbs measures.

 The \emph{surface tension} is the quantity 
$$\sigma(s, t) \defeq \lim_{N \rightarrow \infty} \frac{1}{N^2} \log Z_{N, s, t} $$
where $ Z_{N, s, t}$ is the partition function of the dimer model on the $N \times N$ torus graph $G / (N \mathbb{Z})^2$ with the state space restricted so that allowed configurations must have average height changes of approximately $\floor{N s}$ around a loop in the $x$ direction and $\floor{N t}$ around a loop in the $y$ direction. The quantity $\sigma(s, t)$ is known to exist and be a strictly convex function of $(s, t)$. It is well known that the magnetic fields correspond to average slopes $(s, t)$ of the height function via 
$$B_x = \frac{\partial \sigma}{\partial s}, B_y = \frac{\partial \sigma}{ \partial t } \;\;.$$ 
For each pair of magnetic fields $(B_x, B_y)$, there is an inverse of the Kasteleyn matrix on $G$ from which one can explicitly construct an ergodic Gibbs measure of the corresponding slopes. We will illustrate this procedure concretely in our setup.

 In our case, similarly to the well studied cases of lozenge and domino tilings with uniform weights, one can also parameterize Gibbs measures by a \emph{complex coordinate} $z$ in the the upper half plane $\mathbb{H}$. More precisely, given $z \in \mathbb{H}$, there is a unique $w$ such that $P(z,w) = 0$, and then the Gibbs measure corresponding to $z$ is the liquid phase Gibbs measure corresponding to magnetic fields $B_x = \log |z|, B_y = \log |w|$. When restricted to $z \in \text{int}(\mathbb{H})$, this gives a one-to-one correspondence to the liquid phase Gibbs measures.

Now we explicitly define the Gibbs measure corresponding to a complex coordinate $z_0$ in our situation. To do this, we first construct the inverse Kasteleyn of the Gibbs measure corresponding to magnetic fields $B_x = \log r_1, B_y = \log r_2$, and then we simplify it and see that we get a Gibbs measure for each $z_0 \in \mathbb{H}$. From here onwards, $G$ (which we called $\mathcal{L}$ in the introduction) refers to the infinite square hexagon graph, shown in Figure \ref{fig:z2lattice}.

We define the Kasteleyn matrix $\mathbf{K}$ of the infinite graph $G$ by 
\begin{align*}
\mathbf{K}(\mathbf{b}, \mathbf{w}) &= 
\begin{cases}
\text{sign}(\mathbf{b}, \mathbf{w}) \cdot \text{wt}(\mathbf{b}, \mathbf{w}) & \text{if $(\mathbf{b}, \mathbf{w})$ is an edge } \\
0 & \text{otherwise}
\end{cases}
\end{align*}
i.e. as the product of the Kasteleyn sign with the edge weight on the infinite graph $G$, shown in Figure \ref{fig:SHL}.

As $\mathbf{K}$ is infinite dimensional, its inverse is not unique, and we will present the construction of a two parameter family of inverses corresponding to the ergodic Gibbs measures in our case (see \cite{KOS2006}). The \emph{magnetically altered} Kasteleyn matrix for $G_1 = G/\mathbb{Z}^2$ in our case is
\begin{align}
\mathbf{K}_1(z,w) = \begin{blockarray}{ccc}
\mathbf{w}_1 & \mathbf{w}_2 &  \\
\begin{block}{(cc)c}
  -1+  z^{-1} &z^{-1}+ \beta &  \mathbf{b}_1 \\
  w & -1+  \alpha z^{-1} & \mathbf{b}_2 \\
\end{block}
\end{blockarray} \;\;.
\label{eqn:tkast}
\end{align}
It is obtained by putting appropriate complex weights $z, z^{-1}$ and $w, w^{-1}$ on edges crossing fundamental domains in the vertical and horizontal directions, respectively. See Figure \ref{fig:mag}.

The \emph{characteristic polynomial} of our graph is 
\begin{equation}
\label{eqn:charpoly} 
P(z,w) \defeq \det \mathbf{K}_1(z,w) = (1- \alpha z^{-1})(1- z^{-1}) - w (z^{-1}+\beta) \;\;.
\end{equation}
We have $\mathbf{K}_1(z,w)^{-1}_{\mathbf{w}_i, \mathbf{b}_j} = \frac{Q(z,w)_{\mathbf{w}_i, \mathbf{b}_j}}{P(z,w)}$ where 
\begin{equation}
Q(z,w) \defeq 
\begin{blockarray}{ccc}
\mathbf{b}_1 & \mathbf{b}_2 &  \\
\begin{block}{(cc)c}
  -1+ \alpha z^{-1} & -z^{-1}- \beta &  \mathbf{w}_1 \\
  -w & -1+ z^{-1} &  \mathbf{w}_2 \\
\end{block}
\end{blockarray} \;\;.
\end{equation}

Let $\mathbf{w}_a^{(T,Y)}$ and $\mathbf{b}_a^{(T,Y)}$, $(T, Y) \in \mathbb{Z}^2, a \in \{0,1\}$, denote the black and white vertices of the lattice, where $(T, Y)$ index the translates of the fundamental domain, and $a$ is used to index the vertex inside each fundamental domain. One may check directly that for any $(r_1, r_2)$, an inverse of $\mathbf{K}$ can be written as 
\begin{align}
\label{eqn:GibbsK}
K^{r_1,r_2}[\mathbf{w}_i^{(T',Y')}, \mathbf{b}_j^{(T,Y)}] &= 
\frac{1}{(2 \pi i)^2} \int_{|w| = r_2} \int_{|z| = r_1} \frac{Q(z,w)_{\mathbf{w}_i, \mathbf{b}_j}}{P(z,w)} w^{T'-T} z^{Y-Y'} \; \frac{d z}{z} \frac{d w}{w}  \;\;.
\end{align}
Furthermore, it is well known that this inverse Kasteleyn defines an ergodic Gibbs measure determined by the determinantal formula: For $M$ sampled from the measure, and edges $(w_1,b_1),\dots, (w_k,b_k) $, 

\begin{equation}
\text{Prob}((w_1,b_1),\dots, (w_k,b_k) \in M) =
\left( \prod_{j=1}^k \mathbf{K}[b_j, w_j] \right) \det(K^{r_1,r_2}[w_i, b_j])_{i,j=1}^k \;\;.
\label{eqn:detcor}
\end{equation}
Now we will massage the expression \eqref{eqn:GibbsK}. This will make clear the correspondence between Gibbs measures and the complex coordinate.

Define
$$w(z) \defeq \frac{(1-z^{-1})(1-\alpha z^{-1}) }{z^{-1} + \beta} $$
and define $\gamma = \gamma(i, j), \theta = \theta(i, j), \delta = \delta(i, j)$, for $i, j = 1,2$ by 
\begin{align*}
((\gamma, \theta, \delta))_{i,j = 1,2} = 
\begin{pmatrix}
(0, -1, 0) & (-1, -1, -1) \\
(0, 0 , 1) &  (-1, 0, 0)
\end{pmatrix} \;\;.
\end{align*}

By doing the $w$ integration first and taking the residues at $w = w(z)$ and $w = 0$, we write the kernel above as follows:

\begin{align*}
&K^{r_1,r_2}[\mathbf{w}_i^{(T',Y')}, \mathbf{b}_j^{(T,Y)}] \\
&= 
-\frac{1}{2 \pi i}\int_{\overline{z_0}}^{z_0} \frac{Q(z, w(z) )_{\mathbf{w}_i, \mathbf{b}_j}}{ (1-z^{-1})(1-\alpha z^{-1}) } w(z) ^{T'-T} z^{Y-Y'} \; \frac{d z}{z}  + \int_{|z| = r_1} \text{Res($w = 0$)} \; dz \;\;.
\end{align*}
Here $z_0 \in \mathbb{H}$ is the unique point in $\mathbb{H} \cap \{ |z| = r_1 \}$ such that $P(z_0, w_0) = 0$ for some $|w_0| = r_2$, and $\int_{\overline{z_0}}^{z_0}$ denotes integration along the part of the contour $|z| = r_1$ with $|w(z)| \leq r_2$. We can check that this part of the contour is counterclockwise and crosses the real axis to the right of the origin. The residue from $0$ only contributes when $X' < X$, where $X, X', U, U' $ are defined by $\mathbf{w}_i^{(T',Y')} = \mathbf{w}(X', U')$, and $\mathbf{b}_j^{(T,Y)} = \mathbf{b}(X, U)$. Furthermore, in this case the resulting integrand is exactly equal to that of the first term, so if the residue contributes there is a cancellation and we instead can integrate along a clockwise path from $\overline{z_0}$ to $z_0$ which crosses $\mathbb{R}$ to the left of the origin. Thus, simplifying, the integral can be rewritten as
\begin{align*}
\frac{1}{2 \pi i  } \int_{\overline{z_0}}^{z_0} \frac{ (1- z^{-1})^{\gamma} (1- \alpha z^{-1})^{\theta} }{(1+ \beta z)^{\delta}} \left(\frac{(1-  z^{-1}) (1-\alpha z^{-1})}{1 + \beta z  }\right)^{T'-T} z^{\delta}  z^{(Y - T)-(Y'  - T')-1} d z 
\end{align*}
where contour is chosen counterclockwise along the part of $\{|z| = r_1\}$ to the right of the origin if $X' \geq X$ (when the residue does not contribute), and clockwise to the left of the origin otherwise. 

We rewrite this expression in terms of $(X, U)$, $(X', U')$, where $\mathbf{w}_i^{(T',Y')} = \mathbf{w}(X', U')$ and  $\mathbf{b}_j^{(T,Y)} = \mathbf{b}(X, U)$. If $X  = 3 t - m, X' = 3 t' - m'$, then we can check case by case that $T'-T +\theta = t' - t -  \1 \{ m' = 3\}$, $T'-T +\gamma = t' - t  - \1 \{m = 1\}$, and $T' - T + \delta = t' - t$. So this expression for $K^{r_1,r_2}$ can be re-written as 
\begin{align}
\begin{split}
 \frac{1}{2 \pi i  } \int_{\overline{z_0}}^{z_0} (1+ \beta z)^{t-t'} (1-  z^{-1})^{t'-t + \mathbf{1}\{ m' = 1 \} - \mathbf{1}\{ m = 1 \}}   
 (1-\alpha z^{-1})^{t'-t -\mathbf{1}\{m' = 3\} + \mathbf{1}\{m = 3\} } z^{U-U'-1} d z \;\;.
 \end{split}
 \label{eqn:kz0}
\end{align}
This form makes explicit the correspondence between a complex coordinate $z_0 \in \mathbb{H}$ and a Gibbs measure of our model, and also provides a form of the kernel which we will be able to easily identify with our bulk limit. For arbitrary $z_0 \in \mathbb{H}$, we will also denote the same kernel by 
$$K^{z_0}[\mathbf{w}(X', U'), \mathbf{b}(X, U)] \;\;.$$

If $z_0 \in \partial \mathbb{H} = \mathbb{R}$, the kernel $K^{z_0}[\mathbf{w}(X', U'), \mathbf{b}(X, U)]$ is still well defined and corresponds to a frozen Gibbs measure. However, the correspondence between $z_0 \in \mathbb{R}$ and Gibbs measures is no longer one to one. The regions $(-\infty, -\frac{1}{\beta}),  (-\frac{1}{\beta}, 0), (0, \alpha), (\alpha, 1), (1, \infty)$ on the real axis correspond to the $5$ different non-random frozen Gibbs measures, where all correlation functions are identically equal to $0$ or $1$. These are exactly the Gibbs measures whose slopes $(s, t)$ are integer pairs on the boundary of the Newton polygon.

We summarize the above discussion with the following lemma:
\begin{lemma}
\label{lem:gibbsmeas}
For any $z_0 \in \mathbb{H}$, the formula 
\begin{align*}
&K^{z_0}[\mathbf{w}(X', U'), \mathbf{b}(X, U)]\\
&=  \frac{1}{2 \pi i  } \int_{\overline{z_0}}^{z_0} (1+ \beta z)^{t-t'} (1-  z^{-1})^{t'-t + \mathbf{1}\{ m' = 1 \} - \mathbf{1}\{ m = 1 \}}   
 (1-\alpha z^{-1})^{t'-t -\mathbf{1}\{m' = 3\} + \mathbf{1}\{m = 3\} } z^{U-U'-1} d z 
\end{align*}
where the integration contour crosses the real axis in $(0, \infty) \setminus \{1, \alpha\}$ for $X' \geq X$ and in $(-\infty, 0) \setminus \{-\frac{1}{\beta}\}$ for $X' < X$, defines a correlation kernel for an ergodic translation invariant Gibbs measure via the formula \eqref{eqn:detcor}. This Gibbs measure corresponds to magnetic fields $\log |z_0|, \log |w(z_0)|$. The family of kernels $K^{z_0}$ give a parameterization of liquid phase Gibbs measures as $z_0$ ranges over $\text{int}(\mathbb{H})$. For $z_0 \in \mathbb{R} \setminus \{-\frac{1}{\beta}, 0, 1, \alpha\}$, this gives the set of nonrandom frozen Gibbs measures corresponding to integer slopes $(s, t)$ on the boundary of the Newton polygon.
\end{lemma}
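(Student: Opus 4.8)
The plan is to prove that the displayed formula for $K^{z_0}$ coincides with the inverse Kasteleyn $K^{r_1,r_2}$ of \eqref{eqn:GibbsK} for a suitable choice of radii $(r_1,r_2)$; once this identity is established, every assertion of the lemma --- that $K^{z_0}$ defines an ergodic, translation-invariant Gibbs measure through \eqref{eqn:detcor}, and that it has magnetic fields $(\log|z_0|,\log|w(z_0)|)$ --- is exactly the content of the results of \cite{KOS2006} quoted above. So the first step is bookkeeping about parameters: fix $(r_1,r_2)$ in the interior of the amoeba of $P$ and let $z_0$ be the unique point of $\mathbb{H}\cap\{|z|=r_1\}$ with $|w(z_0)|=r_2$; conversely, given $z_0\in\mathrm{int}(\mathbb{H})$ set $r_1=|z_0|$, $r_2=|w(z_0)|$. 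The fact that $z\mapsto(\log|z|,\log|w(z)|)$ restricts to a bijection from $\mathrm{int}(\mathbb{H})$ onto the interior of the amoeba uses that $P$ is affine in $w$, so $w(z)$ is the unique root of $P(z,\cdot)$ and the curve $\{P=0\}$ is a graph over the $z$-plane; I would either cite the standard fact for Harnack curves or verify it by an elementary argument (the map is holomorphic, proper onto its image, and orientation-preserving).

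The core computation is the evaluation of the inner ($w$) integral in \eqref{eqn:GibbsK} by residues, which is precisely the manipulation sketched in the text immediately before the lemma and which I would lay out in detail. Because $P(z,w)$ is degree one in $w$ with root $w(z)$, the $w$-integrand has poles only at $w=w(z)$ and, from $\tfrac{dw}{w}$ together with the power $w^{T'-T}$, possibly at $w=0$. The residue at $w=w(z)$, integrated in $z$, survives only on the sub-arc of $\{|z|=r_1\}$ where $|w(z)|\le r_2$; one checks this arc is counterclockwise and crosses $\mathbb{R}$ to the right of the origin, giving a contour from $\overline{z_0}$ to $z_0$. The residue at $w=0$ contributes only when $X'<X$, and when present its integrand is termwise identical to the first contribution, so it merely deforms the contour into a clockwise arc from $\overline{z_0}$ to $z_0$ crossing $\mathbb{R}$ to the left of the origin --- this is the source of the dichotomy on the crossing interval in the lemma. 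Substituting $Q(z,w(z))$, cancelling the common factor $(1-z^{-1})(1-\alpha z^{-1})$, and converting from the $(T,Y,a)$-coordinates to the $(X,U)$-coordinates using the case-by-case identities $T'-T+\theta=t'-t-\mathbf{1}\{m'=3\}$, $T'-T+\gamma=t'-t-\mathbf{1}\{m=1\}$, $T'-T+\delta=t'-t$, one arrives at exactly the displayed integrand. I expect this residue bookkeeping --- in particular tracking which poles are enclosed on each portion of $\{|z|=r_1\}$ and verifying the clean cancellation of the $w=0$ term against part of the arc integral --- to be the main obstacle; the rest is substitution.

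Finally, for the parameterization claims: composing the bijection $\mathrm{int}(\mathbb{H})\to\mathrm{int}(\text{amoeba})$ from the first step with the one-to-one correspondence between interior amoeba points and liquid-phase Gibbs measures from \cite{KOS2006} shows that $\{K^{z_0}:z_0\in\mathrm{int}(\mathbb{H})\}$ is exactly the family of liquid Gibbs measures, each occurring once. For $z_0\in\mathbb{R}\setminus\{-\tfrac1\beta,0,\alpha,1\}$ the arc from $\overline{z_0}$ to $z_0$ collapses, but the formula still makes sense read as an integral over a small contour crossing $\mathbb{R}$ in one of the five open intervals cut out by the branch points $-\tfrac1\beta,0,\alpha,1$ of the integrand. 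On each interval I would check, either by directly evaluating the determinants in \eqref{eqn:detcor} (a short residue computation shows every entry of $K^{z_0}$ is either $0$ or an exact edge weight, so all correlations are $0$ or $1$) or by letting $r_1$ tend to the corresponding endpoint of the amoeba and invoking that the liquid measure degenerates to a nonrandom one, that one obtains one of the five frozen Gibbs measures; these are the measures whose slopes are the five boundary lattice points of the Newton polygon of $P$ --- the quadrilateral with vertices $(-2,0),(0,0),(0,1),(-1,1)$ together with the edge point $(-1,0)$ --- as asserted.
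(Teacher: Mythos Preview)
Your proposal is correct and follows essentially the same route as the paper: the lemma is stated there as a summary of the preceding discussion, which evaluates the $w$-integral in \eqref{eqn:GibbsK} by residues at $w=w(z)$ and $w=0$, observes that the $w=0$ contribution (present only for $X'<X$) flips the arc to the left of the origin, and then rewrites everything in $(X,U)$-coordinates via the same case-by-case identities you list. Your treatment of the bijection $\mathrm{int}(\mathbb{H})\to\mathrm{int}(\text{amoeba})$ and of the five frozen degenerations is somewhat more explicit than the paper's, but the argument is the same.
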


 \subsection{Asymptotics: Bulk Limit} 
 \label{subsec:BL}
 
 Now we take asymptotics of the kernel \ref{eqn:kast} in order to show that the local statistics of random tilings of the tower of size $N$ converge to the translation invariant ergodic Gibbs measures, as conjectured in \cite{KOS2006} to be true for general boundary conditions. Recall the rescaled domain of the tower $\mathfrak{T} \coloneqq \{(x, u) \in \mathbb{R}^2 : 0 < x < 1 , - 2 x < u < 1- x \}$. Let $(x, u) \in \mathfrak{T}$ denote fixed macroscopic coordinates. We show that the inverse Kasteleyn evaluated at a pair of vertices which are a finite distance from the vertex $(3 \floor{N x}, \floor{N u})$ at the lattice scale converges to a corresponding entry of the inverse Kasteleyn of a translation invariant ergodic Gibbs measure. An important function in our analysis is the \emph{action function} 
 \begin{equation}
S(z ; x, u) \defeq x (\log(1 - \alpha_1 1/z) + \log(1 - \alpha_2 1/z)) - u \log(z) + (1 - x) \log(1 + \beta z).
\label{eqn:S}
\end{equation}
We have stated Theorem \ref{thm:mainthm} with $\alpha_1 = 1, \alpha_2 = \alpha$, and ultimately we will assume this is the case, but for the analysis of the action function in the proof we will keep both parameters and assume that $\alpha_1 < \alpha_2$ for convenience.

Define the \emph{liquid region} $\mathfrak{L} \subset \mathfrak{T}$ as the subset of $(x, u)$ in the rescaled tower domain such that $S(z ; x, u)$ has a pair of non-real, complex conjugate critical points. The complement of $\overline{\mathfrak{L}}$ in $\mathfrak{T}$, where $S(z ; x, u)$ has distinct real critical points, will be called the \emph{frozen region}. In particular, the \emph{arctic curve} $C = \partial \mathfrak{L}$ separating the two regions is an algebraic curve. 

 \begin{theorem}
 Let $(x, u) \in \mathfrak{L}$. Then
 \begin{equation}\label{eqn:localstat}
 \lim_{N \rightarrow \infty} K((3 \floor{N x} + X', \floor{N u} + U'), (3 \floor{N x} + X, \floor{N u} + U))
  = K^{z_0}[\mathbf{w}(X', U'), \mathbf{b}(X, U)] 
 \end{equation}
where $z_0 = z_0(x, u)$ is the unique critical point in the upper half plane of the function 
$$S(z ; x, u) \defeq x (\log(1 - \alpha_1 1/z) + \log(1 - \alpha_2 1/z)) - u \log(z) + (1 - x) \log(1 + \beta z) \;\;.$$
In particular, the local limit is the infinite volume Gibbs measure with complex coordinate $z_0$. In the region where the critical points of $S$ are distinct and real, there is an extension of the function $z_0(x, u) \in \mathbb{R}$ such that \eqref{eqn:localstat} still holds. In particular, in this region the local limit is one of the $5$ nonrandom frozen Gibbs measures.
\label{thm:loc}
 \end{theorem}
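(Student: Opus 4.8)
The plan is to prove the local-statistics limit \eqref{eqn:localstat} by a steepest-descent analysis of the double contour integral \eqref{eqn:kast}. First I would rewrite the integrand in the exponential form $\exp\big(N[S(z;x,u) - S(w;x,u)]\big)$ times a bounded prefactor: substituting $X_\ast = 3\floor{Nx} + X_\ast$ and $U_\ast = \floor{Nu} + U_\ast$ into the definition of $\Phi$, the terms $(1+\beta z)^{N - t + 1}$, $(1-z^{-1})^{-(t-1)-\ldots}$, etc., collect to leading order into $(1+\beta z)^{(1-x)N}$, $(1-z^{-1})^{xN}$, $(1-\alpha z^{-1})^{xN}$, and $z^{-Nu}$, which is exactly $e^{N S(z;x,u)}$ with $\alpha_1 = 1, \alpha_2 = \alpha$; the finite shifts $X', U'$ (resp. $X, U$) survive only in the prefactor and produce precisely the integrand of $K^{z_0}$ in Lemma \ref{lem:gibbsmeas}. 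I would be careful to track the $\frac{w}{z-w}$ factor and the exact exponents of $(1-z^{-1})$ vs. $(1-\alpha z^{-1})$ coming from the indicator functions $\mathbf{1}_{m=1}, \mathbf{1}_{m\le 2}$, matching them against the $\mathbf{1}\{m'=1\}, \mathbf{1}\{m=3\}$ bookkeeping already done in the derivation of \eqref{eqn:kz0}.

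Next I would carry out the saddle-point deformation. In the liquid region $\mathfrak L$, by definition $S(\cdot\,;x,u)$ has a conjugate pair of simple critical points $z_0, \overline{z_0}$ with $z_0 \in \mathbb{H}$. I would deform the $z$-contour $\Gamma_\pm$ and the $w$-contour $\Gamma_\mp$ to steepest descent/ascent contours through $z_0$ and $\overline{z_0}$: the $z$-contour is chosen so that $\mathrm{Re}\,S(z)$ attains its maximum on it at the two saddles, and the $w$-contour so that $\mathrm{Re}\,S(w)$ attains its minimum at the two saddles, so that $\mathrm{Re}(S(z) - S(w)) < 0$ away from the saddles and the contribution is exponentially negligible there. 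The key topological point — which I would verify from the explicit form of the curve $\{P(z,w)=0\}$, its amoeba, and the ordering relation between $r_1 = |z_0|$ and the original radius-$(1+\epsilon)$ contours — is that these deformed contours can be arranged to intersect transversally near $z_0$ and $\overline{z_0}$, and that in doing so one picks up exactly the residue at $z = w$ along an arc. That residue integral is a single contour integral running from $\overline{z_0}$ to $z_0$, and after the substitutions above it is term-by-term equal to the kernel $K^{z_0}$ of Lemma \ref{lem:gibbsmeas}, with the contour crossing $\mathbb{R}$ on the correct side ($(0,\infty)\setminus\{1,\alpha\}$ when $X' \ge X$, $(-\infty,0)\setminus\{-1/\beta\}$ when $X' < X$) dictated by the sign of $X' - X$ exactly as in the original contour prescription after \eqref{eqn:kast}. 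The double-integral part over the deformed contours then vanishes as $N \to \infty$ by the exponential decay, leaving only the residue contribution.

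I expect the main obstacle to be the global contour-deformation geometry: one must show that the steepest-descent contour for $z$ and the steepest-ascent contour for $w$ can be chosen simultaneously so that (a) they are legitimate deformations of the original contours without crossing the forbidden pole at $-1/\beta$ (or $0, 1, \alpha$ on the wrong side), (b) they pass through the saddles $z_0, \overline{z_0}$ with the correct local orientation, and (c) the arc of the $z=w$ pole that gets enclosed is exactly the arc from $\overline{z_0}$ to $z_0$ crossing the real axis on the side specified above. This requires understanding the global topology of the level sets $\{\mathrm{Re}\,S(\cdot\,;x,u) = \mathrm{Re}\,S(z_0;x,u)\}$ — that they divide a neighborhood of the relevant annular region into the expected number of components — which is a standard but delicate feature of such arguments; I would handle it by analyzing $S'(z) = 0$ (a cubic, or a quadratic when $\alpha = 1$) together with the behavior of $S$ near its singular points $0, 1, \alpha, -1/\beta, \infty$. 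The frozen-region statement is then an easy corollary: when the critical points are real and distinct, the same deformation pushes both contours to (real) saddles, $\mathrm{Re}(S(z)-S(w))$ is strictly negative on the whole deformed double contour except possibly at a pinch, and the limit is the degenerate kernel $K^{z_0}$ with $z_0 \in \mathbb{R}\setminus\{-1/\beta,0,1,\alpha\}$, which by Lemma \ref{lem:gibbsmeas} is one of the five nonrandom frozen Gibbs measures; the identification of which interval $z_0$ lands in follows from continuity and a check at the frozen corners of $\mathfrak T$.
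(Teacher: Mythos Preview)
Your proposal is correct and follows essentially the same steepest-descent approach as the paper: rewrite the integrand as $\frac{1}{z-w}\exp\big(N(S(z)-S(w))+O(1)\big)$, deform the $z$- and $w$-contours to level curves of $\mathrm{Re}\,S$ through $z_0,\overline{z_0}$, pick up the $z=w$ residue along the arc from $\overline{z_0}$ to $z_0$ (which is term-by-term the kernel $K^{z_0}$), and show the remaining double integral vanishes. The paper carries out the global-topology step you correctly flag as the main obstacle by analyzing the monotonicity of $\mathrm{Re}\,S(z)$ on each of the real intervals between $-1/\beta,\,0,\,\alpha_1,\,\alpha_2$ to pin down where the two level-curve loops cross $\mathbb{R}$, and treats the frozen region by a short case-by-case argument for each of the five intervals---exactly the work your outline anticipates.
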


\begin{figure}
\includegraphics[scale=0.6]{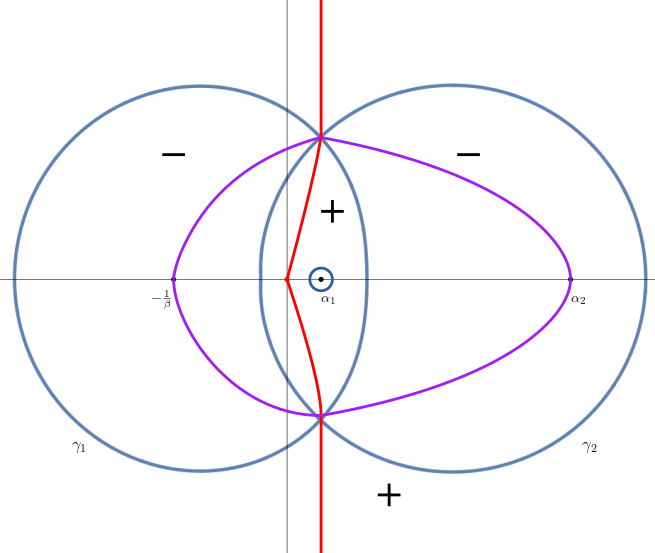}\\
\caption{Shown is a schematic depiction of the level lines of $\text{Re}(S(z)) = \text{Re}(S(z_c)) $ (blue), the level lines $\text{Im}(S(z)) = \text{Im}(S(z_c))$ and their reflections about the real axis (red and purple). The plus and minus symbols indicate regions where $\text{Re}(S(z)) - \text{Re}(S(z_c))$ is positive and negative, respectively.}
\label{fig:lcs_main}
\end{figure}

\begin{figure}
\begin{subfigure}[b]{0.3\textwidth}
         \centering
         \includegraphics[scale=0.25]{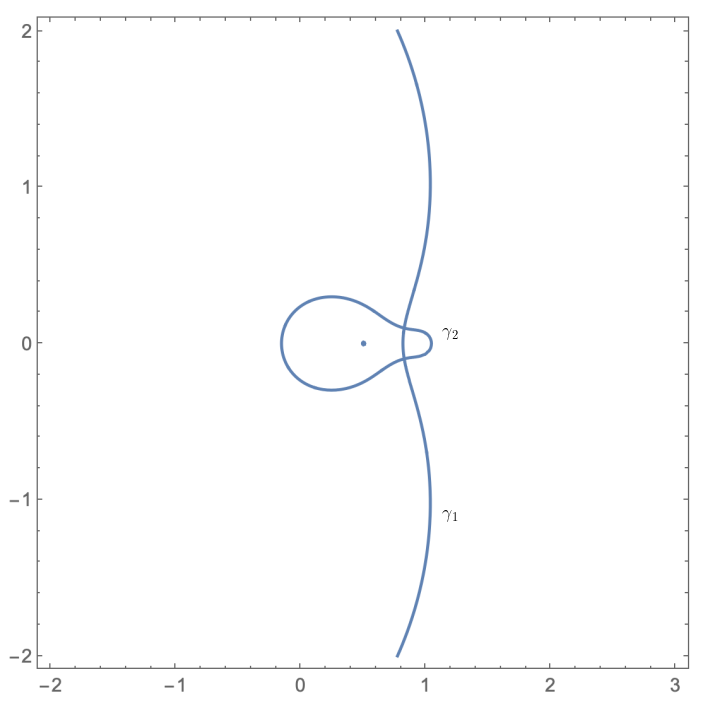}
         \caption{}
         \label{fig:levseta}
     \end{subfigure}
     \hspace{40 pt}
     \begin{subfigure}[b]{0.3\textwidth}
         \centering
         \includegraphics[scale=0.2]{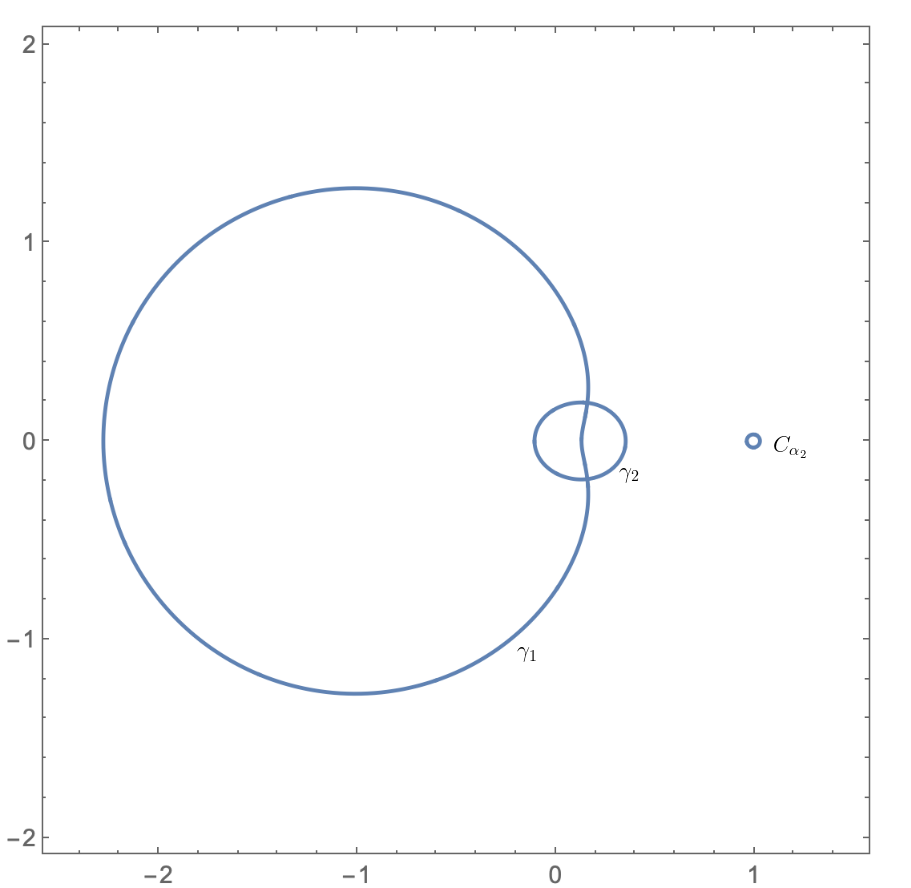}
         \caption{}
         \label{fig:levsetb}
     \end{subfigure}
\caption{In (A) we show an example with $\alpha_1 = 0.5$, $\alpha_2 = 1$, $\beta = 1$, to illustrate a generic example. Here $\gamma_1$ is indeed a closed loop, and there is another small loop around $\alpha_1 = 0.5$. In this case we would have $\gamma_1 = \gamma_z, \gamma_2 = \gamma_w$.
In (B) $\alpha_1 = 0.2$, $\alpha_2 = 1$, $\beta = 1$. This is a case where there is an extra disjoint loop $C_{\alpha_2}$ around $\alpha_2$, and in this case we set $\gamma_w = \gamma_2 \bigcup C_{\alpha_2}$. In both cases $(x, u)$ is in the liquid region.}
\end{figure}

\begin{figure}
\centering
\includegraphics[scale=0.4]{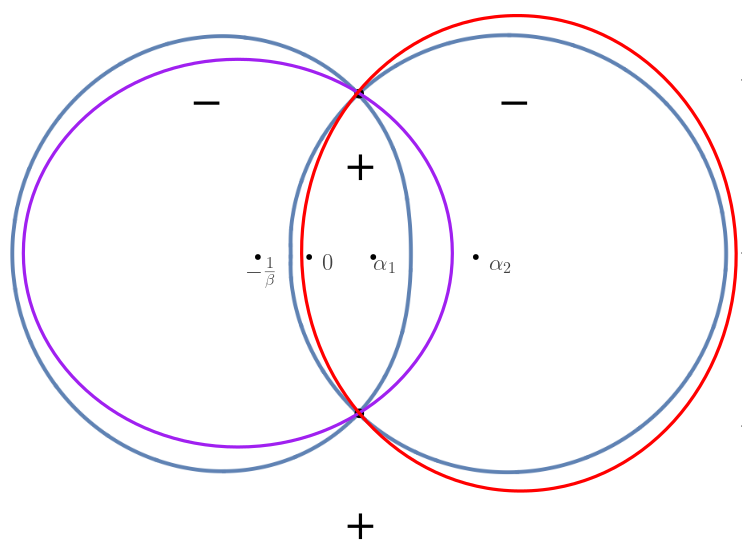}
\caption{Above we illustrate the deformed contours $\tilde{\gamma}_z$ (purple) and $\tilde{\gamma}_w$ (red). Level lines are in blue. Not all level lines are shown.}
\label{fig:def2}
\end{figure}

\begin{figure}
      \begin{subfigure}[b]{0.6\linewidth}
         \centering
         \includegraphics[scale=0.4]{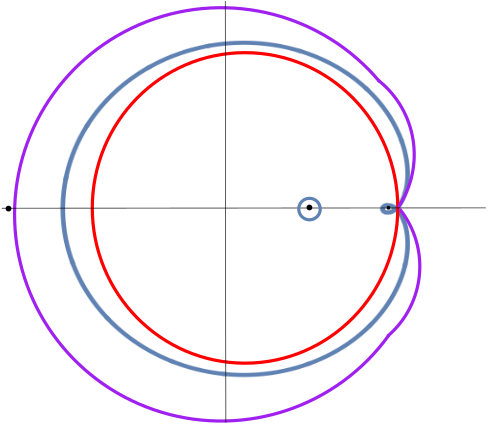}
         \caption{$z_0 \in (\alpha_2, \infty)$}
         \label{fig:F3}
     \end{subfigure}
\caption{The level curves are in blue, $\gamma_z$ is in purple, and $\gamma_w$ is in red, and the poles $-\frac{1}{\beta}, \alpha_1, \alpha_2$ are shown on the real axis. Note that in each case here $\gamma_z \subset \text{exterior}(\gamma_w)$.}
\label{fig:frozen_regions_gz}
\end{figure}

\begin{proof}

We perform the proof in steps. We use a variant of the method of steepest descent, which is fairly standard. First, we analyze the liquid region, which is the set of points $(x, u)$ for which not all critical points of $S$ are real.

\textbf{Step 1:} First, write the integrand in \ref{eqn:kast} as $\frac{1}{z-w}\exp(N (S(z) - S(w)) + O(1))$ with $S(z) = S(z ; x, u) \defeq x (\log(1 - \alpha_1/z) + \log(1 - \alpha_2 1/z)) - u \log(z) + (1 - x) \log(1 + \beta z)$. Assume WLOG that $\alpha_1 < \alpha_2$. Here $O(1)$ stands for a function of $(z, w)$ which, uniformly for all $(z, w)$ in a suitably chosen compact subset which excludes a small neighborhood of the poles, has absolute value upper bounded by a constant $C$ for all $N$ large enough.

\textbf{Step 2:} We will now describe a deformation of the integration contours, such that the integration along the new contours will vanish as $N \rightarrow \infty$. Changing the contours will require us to add a correction term due to a new residue at $z = w$ picked up from the deformation, and this correction will be exactly the term we want (see Step 3 below).

First, by our assumption on $(x, u)$, and because $\frac{\partial}{\partial z} S(z) = 0$ is a cubic equation, there is a unique complex conjugate pair of critical points. We pick the one which occurs in the upper half plane and call it $z_c = z_c(x, u)$. Near $z_c$, there are two branches of the level set $\text{Re}(S(z)) = \text{Re}(S(z_c))$ intersecting orthogonally. Since $S(z)$ is real analytic and $|\text{Re}(S(z))| \rightarrow \infty$ as $|z| \rightarrow \infty$, each branch forms a closed loop which is symmetric about the real axis, and these two loops intersect at $z_c$ and $\overline{z_c}$. One of these loops may cross itself if it passes through the third critical point of $S$ on the real axis, which necessarily occurs in the interval $(\alpha_1, \alpha_2)$. However, this does not change the argument, so we first proceed assuming that this does not occur, and later we indicate what happens in this case.

There are four points $x_1 < x_2 < x_3 < x_4$ at which these two level curves intersect the real axis. One of these two level lines, call it $\gamma_1$, intersects the real axis at $x_1$ and $x_3$, and the other, $\gamma_2$, intersects at $x_2$ and $x_4$. We claim that $x_1 < -\frac{1}{\beta} < x_2 < 0 < x_3 < x_4$, and at least one of $\alpha_1, \alpha_2$ belongs to the interval $(x_3, x_4)$. This follows from properties of the function $S(z)$. Indeed, the function $\text{Re}(S(z))$ has the following behavior for $z \in \mathbb{R}$: It decreases from $\infty$ to $-\infty$ on $(-\infty, -\frac{1}{\beta})$ and $(0, \alpha_1)$, and increases from $-\infty$ to $\infty$ on $(-\frac{1}{\beta}, 0)$, $(\alpha_2, \infty)$, and in $(\alpha_1, \alpha_2)$ it starts at $-\infty$ near $\alpha_1$, increases to a local max (this is the third critical point), and then decreases back to $-\infty$ near $\alpha_2$. Thus, there are either $4$ or $6$ zeros of $\text{Re}(S(z)) - \text{Re}(S(z_c))$ on the real axis. If there are $4$ zeros then we are done. If there are $6$, the only cases that we have to rule out involve a situation where the union of the contours $\gamma_1 \cup \gamma_2$ has two zeros in the interval $(\alpha_1,\alpha_2)$. But if this is the case, then there exists a compact set bounded by level lines of $\text{Re}(S(z))$ (enclosing the interval between these two zeros) on which $\text{Re}(S(z))$ is harmonic. But this is impossible, as $\text{Re}(S(z))$ is not constant. Therefore, the properties of $\text{Re}(S(z))$ we stated above imply the claim about the positions of the $x_i$. See Figure \ref{fig:lcs_main} for an illustration. 

Now we define contours $\gamma_z, \gamma_w$. We let $\gamma_z \defeq \gamma_1$. There are cases for $\gamma_w$:
\begin{enumerate}
\item If $\gamma_2$ contains $\alpha_1$ and $\alpha_2$, set $\gamma_w = \gamma_2$. 
\item On the other hand $\gamma_2$ only contains $\alpha_1$, then there is another disconnected component of the level curve encircling $\alpha_2$, and we set $\gamma_w$ to be the union of $\gamma_2$ and this component. See Figures \ref{fig:levseta} and \ref{fig:levsetb}, respectively. The case when $\text{Re}(S(z)) = \text{Re}(S(z_c))$ contains the third critical point can be seen as a limit where $\gamma_2$ and the component encircling $\alpha_2$ meet on the real axis.
\end{enumerate}

  Since $-\frac{1}{\beta} < x_2$, $\gamma_w$ always excludes the residue at $-\frac{1}{\beta}$, but it contains $\alpha_1$ and $\alpha_2$ in its interior by construction. Both $\gamma_z$ and $\gamma_w$ contain $0$ in their interior.

First, we deform the $w$ contour to $\gamma_w$, and the $z$ contour to $\gamma_z$. This deformation of contours does not cross any residues of the integrand of the kernel (see equation \eqref{eqn:kast}), except for the one at $z = w$. We momentarily ignore the contribution of the $z = w$ residue, and further deform the contours, $\gamma_z \rightarrow \tilde{\gamma}_z$, $\gamma_w \rightarrow \tilde{\gamma}_w$ (see Figure \ref{fig:def2}), without crossing any additional residues, so that for each $(z, w) \in \tilde{\gamma}_z \times \tilde{\gamma}_w \setminus \{z_c, \overline{z_c}\}$, $\text{Re}(S(z) - S(w))  < 0$. 

With these contours, the exponential decay of $\exp(N (S(z) - S(w)))$ forces a uniform decay of the integrand on all but a small neighborhood of the critical points, so the integral vanishes. More precisely, we can choose $0<\epsilon< \frac{1}{2}$ and replace the integral over these contours with the same integral restricted to the part of the contours in an $\frac{1}{N^{\frac{1}{2}-\epsilon}}$-neighborhood of $z_c, \overline{z_c}$, at the cost of an error which is $O(N^{-k})$ for any $k$ as $N \rightarrow \infty$. It is then standard to check, by making the integration variable change $\tilde{w} = \sqrt{N}(w - z_c), \tilde{z} = \sqrt{N}(z - z_c)$, that the integral indeed vanishes as $N \rightarrow \infty$. 

\textbf{Step 3:} The remaining contribution in the $N \rightarrow \infty$ limit is the residue we pick up at $z = w$ from changing the contours to $\gamma_z$ and $\gamma_w$. This residue consists of a contour integral along a part of $\gamma_z$ connecting $\overline{z_c}$ to $z_c$. If $X' \geq X$, then we get a residue on the part of $\gamma_z$ crossing the real axis to the right of $0$, and if $X' < X$ it is the part crossing to the left of $0$. Now, going back to the formula \eqref{eqn:kast}, and plugging in $(3 \floor{N x} + X', \floor{N u} + U'), (3 \floor{N x} + X, \floor{N u} + U)$ and setting $z = w$ in the integrand, we see that our result is exactly the formula for the Gibbs measure in lemma \ref{lem:gibbsmeas}, with $z_0 = z_c$.

That concludes the argument for the liquid region. Now we analyze the \textbf{frozen region}, which means that for our choice of $(x, u)$ there are three distinct real critical points of $S$. It is still true that there is always at least one critical point in the interval $(\alpha_1, \alpha_2)$. The other two critical points both lie in one of the $5$ intervals
$(-\infty, -\frac{1}{\beta}), (-\frac{1}{\beta}, 0), (0, \alpha_1), (\alpha_1, \alpha_2), (\alpha_2, \infty)$. In order to match the local limit with the frozen ergodic Gibbs measure $K^{z_0}$ in each of these five cases, it suffices show that there exist contours $\gamma_z, \gamma_w$ along $\text{Re}(S(z)) = \text{Re}(S(z_c))$, such that $\gamma_w$ contains $\alpha_1,\alpha_2$ and excludes $\frac{-1}{\beta}$, and both contours contain $0$, and also with the property that
\begin{itemize}

\item If $z_0 > 0$, $\gamma_z \subset \text{exterior}(\gamma_w)$

\item If $z_0 < 0$, $\gamma_z \subset \text{interior}(\gamma_w)$
\end{itemize}
for if we can show this, then the rest of the argument is as for the liquid region, and the integration contours we end up with for the limit (from the residue at $z = w$) match up with those of $K^{z_0}$. See Figure \ref{fig:frozen_regions_gz} for an example.

Arguments similar to those given for the liquid region can be given to show that for each one of the $5$ intervals, one of the choices of critical point $z_0$ in the interval will give level curves as shown in the Figure, and therefore there exist contours $\gamma_z, \gamma_w$ satisfying the desired properties. We give an argument for the case $z_0 \in (\alpha_2, \infty)$ (shown in Figure \ref{fig:F3}) and omit the similar arguments in the other cases. In this case two critical points occur in $(\alpha_2, \infty)$: There will be a local maximum and a local minimum of $S(z)$ as $z$ moves from $\alpha_2$ to $\infty$ along the real axis. We let $z_0$ be the local maximum. There are two loops emerging from $z_0$, and one must be contained inside the other. This is because neither can intersect the real axis at a point $r > z_0$, since $S(z)$ is analytic. Thus, they intersect the real axis at points $x_1 < x_2$ with $x_2 < z_0$. However, by the behavior of $S$ on $(\alpha_2, \infty)$, the entire level set $\text{Re}(S(z)) = \text{Re}(S(z_0))$ does contain a point $x_3 > z_0$. If $x_1 < -\frac{1}{\beta}$ we have a contradiction because there must be some curve connecting $x_3$ with a different zero of $\text{Re}(S(z)-S(z_c))$ in the interval $(-\frac{1}{\beta}, \alpha_2)$, and two level curves cannot cross each other. On the other hand, if $x_1 > 0$, then it is impossible to account for the fact that there are either $2$ or $0$ points $z \in (\alpha_1, \alpha_2)$ with $\text{Re}(S(z)) = \text{Re}(S(z_0))$ without being led to a similar contradiction. Thus, we must have $x_1 \in (-\frac{1}{\beta}, 0)$, and to account for $x_3$, there must be a large level curve connecting a point $x_0 < -\frac{1}{\beta}$ with $x_3$ (this is not shown in the Figure). Thus, we may draw the contours as shown in Figure \ref{fig:F3}. This concludes the proof.

\end{proof}

Now we state a corollary which is a precise restatement of Theorem \ref{thm:mainthm}, in terms of convergence of correlation functions. It follows from the theorem above that the arctic curve $C$
is given by the locus of points $(x, u)$ such that $\partial_z S(z; x, u) = 0$ has a double root ($S$ is defined in \eqref{eqn:S}). Denote by $z_0 = z_0(x,u)$ the complex function defined in Theorem \ref{thm:loc} above. Explicit calculations using the correlation kernel $K^{z_0}$ (see Section \ref{sec:burgers} for a more detailed discussion) imply that 
$$s = \frac{1}{\pi}\big( \arg(1 + \beta z_0) + \arg(z_0) - \arg(z_0-1) - \arg(z_0-\alpha)\big), \qquad t = \frac{1}{\pi} \arg(z_0)$$
where $(s, t)$ are the slopes of the Gibbs measure with complex coordinate $z_0$, with respect to the fundamental domain displayed in Figure \ref{fig:mag}. By the theorem above, these are also the slopes of the limiting height function at $(x, u)$. Thus we obtain the following corollary.

\begin{cor}
Let $(x, u) \in \mathfrak{T} \setminus C$, and let $(s, t)$ be the slopes of the limiting height function at $(x, u)$. For any finite set of edges $(\mathbf{b}(X_i, U_i), \mathbf{w}(X_i', U_i'))$ of the square hexagon lattice, denoting by $e_i^N$ the edge $(\mathbf{b}(\floor{3 N x} + X_i, \floor{N u} + U_i), \mathbf{w}(\floor{3 N x} + X_i', \floor{N u} + U_i'))$ of the size $N$ tower graph, we have as $N \rightarrow \infty$ 
\begin{align*}
&P_N(\{\mathrm{perfect\; matchings\; } M : e^N_i \in M \; \forall i = 1,\dots, m \}) \\
&\rightarrow \pi_{s, t}(\{ \mathrm{perfect\; matchings\; } M : (\mathbf{b}(X_i, U_i), \mathbf{w}(X_i', U_i')) \in M \; \forall i = 1,\dots, m\}) .
\end{align*}
\label{cor:main}
\end{cor}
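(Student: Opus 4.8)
The plan is to rewrite both sides of the asserted limit as a product of a Kasteleyn prefactor with a determinant of correlation-kernel entries, and then feed in Theorem \ref{thm:loc} entrywise. Fix $N$ large enough that the lattice neighborhood of the site $(3\lfloor Nx\rfloor,\lfloor Nu\rfloor)$ containing all the $e_i^N$ lies in the interior of the size $N$ tower (throughout, $\lfloor 3Nx\rfloor$ is read as $3\lfloor Nx\rfloor$, a multiple of $3$, as is forced by the requirement that the $e_i^N$ be genuine edges of the square-hexagon lattice; this is also the convention of Theorem \ref{thm:loc}). By the determinantal correlation formula for dimers (the corollary to Kasteleyn's theorem recalled earlier) applied to the tower graph, writing $\mathbf{b}_i^N,\mathbf{w}_i^N$ for the endpoints of $e_i^N$,
\[ P_N(\{M : e_i^N\in M \ \forall i\}) \;=\; \Big(\prod_{i=1}^m \mathbf{K}(\mathbf{b}_i^N,\mathbf{w}_i^N)\Big)\,\det\!\big(\mathbf{K}^{-1}(\mathbf{w}_i^N,\mathbf{b}_j^N)\big)_{i,j=1}^m . \]
By Proposition \ref{prop:invK}, $\mathbf{K}^{-1}(\mathbf{w}(X',U'),\mathbf{b}(X,U)) = K((X',U'),(X,U))$, so this determinant is precisely $\det\big(K((3\lfloor Nx\rfloor+X_i',\lfloor Nu\rfloor+U_i'),(3\lfloor Nx\rfloor+X_j,\lfloor Nu\rfloor+U_j))\big)_{i,j}$, i.e. the determinant of the quantities on the left of \eqref{eqn:localstat}. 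On the other side, writing $\pi^{z_0}$ for the ergodic translation-invariant Gibbs measure defined by $K^{z_0}$ via \eqref{eqn:detcor} with complex coordinate $z_0 = z_0(x,u)$ (Lemma \ref{lem:gibbsmeas}), the same formula \eqref{eqn:detcor} gives
\[ \pi^{z_0}\!\big(\{M : (\mathbf{b}(X_i,U_i),\mathbf{w}(X_i',U_i'))\in M \ \forall i\}\big) \;=\; \Big(\prod_{i=1}^m \mathbf{K}[\mathbf{b}(X_i,U_i),\mathbf{w}(X_i',U_i')]\Big)\,\det\!\big(K^{z_0}[\mathbf{w}(X_i',U_i'),\mathbf{b}(X_j,U_j)]\big)_{i,j=1}^m . \]

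Next I would check that the two prefactors agree for $N$ large. The shift by $(3\lfloor Nx\rfloor,\lfloor Nu\rfloor)$ applied to the endpoints of $e_i^N$ preserves the residue of the $X$-coordinate modulo $3$, hence the type of each edge and its weight; and since the Kasteleyn signs are given by the fixed periodic assignment of Figure \ref{fig:SHL}, used both on the tower and on $G$, once $e_i^N$ is a bulk edge one has $\mathbf{K}(\mathbf{b}_i^N,\mathbf{w}_i^N) = \mathbf{K}[\mathbf{b}(X_i,U_i),\mathbf{w}(X_i',U_i')]$ exactly. With the prefactors matched, it remains to pass to the limit in the determinant: only finitely many kernel entries occur, each converges to the corresponding entry of $K^{z_0}$ by Theorem \ref{thm:loc} when $(x,u)\in\mathfrak{L}$ and by its stated extension to the frozen region when $(x,u)\in\mathfrak{T}\setminus\overline{\mathfrak{L}}$ (the two cases exhausting $\mathfrak{T}\setminus C$), so, the determinant being a polynomial in its entries, $\det(\mathbf{K}^{-1}(\mathbf{w}_i^N,\mathbf{b}_j^N)) \to \det(K^{z_0}[\mathbf{w}(X_i',U_i'),\mathbf{b}(X_j,U_j)])$. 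Combining, $P_N(\{M : e_i^N\in M \ \forall i\}) \to \pi^{z_0}(\{M : (\mathbf{b}(X_i,U_i),\mathbf{w}(X_i',U_i'))\in M \ \forall i\})$.

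Finally I would identify $\pi^{z_0}$ with $\pi_{s,t}$. By Lemma \ref{lem:gibbsmeas}, $\pi^{z_0}$ is an ergodic translation-invariant Gibbs measure; its slopes, computed from $K^{z_0}$ through the expected height increments $E[H(F+V_x)-H(F)]$ and $E[H(F+V_y)-H(F)]$ (the explicit $\arg$-formulas stated just before this corollary, derived in Section \ref{sec:burgers}), are exactly $(s,t)$; and, again by Theorem \ref{thm:loc}, the convergence of $K$ to $K^{z_0}$ forces those same numbers to be the slopes of the limiting height function $h$ at $(x,u)$. Uniqueness of the ergodic translation-invariant Gibbs measure of a prescribed slope (Theorem 2.1 of \cite{KOS2006}) then gives $\pi^{z_0} = \pi_{s,t}$, completing the proof. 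I expect the only step requiring genuine care to be the prefactor comparison — verifying that the finite-tower Kasteleyn data (edge type, weight, and sign) restricted to a bulk neighborhood coincides with the full-lattice data entering \eqref{eqn:detcor} — since everything else is assembly, the analytic content residing entirely in Theorem \ref{thm:loc}.
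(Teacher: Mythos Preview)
Your proposal is correct and follows exactly the route the paper indicates: the paper's own proof is the one-line remark that the corollary follows from Theorem \ref{thm:loc} together with the determinantal form \eqref{eqn:detcor} of the edge inclusion probabilities on both the finite tower and the infinite lattice, and you have simply unpacked this (prefactor comparison, entrywise limit in the determinant, and the identification $\pi^{z_0}=\pi_{s,t}$ via the slope formulas stated just before the corollary).
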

\begin{proof}
This follows from Theorem \ref{thm:loc}, together with the determinantal form \eqref{eqn:detcor} of the edge inclusion probabilities both for finite $N$ tower graphs and for the translation invariant Gibbs measures.
\end{proof}

\begin{remark}
Given the analysis of the double contour integral formula for $K$ in the previous proof with both of $\alpha_1, \alpha_2$ present, the proof of Theorem \ref{thm:mainthm} with general $\alpha_1, \alpha_2$ weights could be obtained immediately after making the appropriate changes to the definition of $K$ and of the family of Gibbs measure kernels $\{K^{z_0}\}$.
\end{remark}

\section{Growth Process: Current and Hydrodynamics}
\label{sec:growth}
\subsection{Shuffling: Spider Move Formulation}
\label{subsec:shuffling}
It is well known that the shuffling algorithm for domino tilings of the \emph{Aztec diamond} comes from a resampling procedure induced by the \emph{spider move}, which is a local transformation of the graph and its edge weights. Roughly speaking, doing this move at the deterministic level preserves partition functions, and as a result the naturally associated Markov mapping preserves probability measures. We briefly review the essential facts, and refer the reader to \cite{propp2003generalized, ZhangDominoHydrodynamics} and references therein for more details.

\begin{figure}
         \centering
         \includegraphics[scale=0.8]{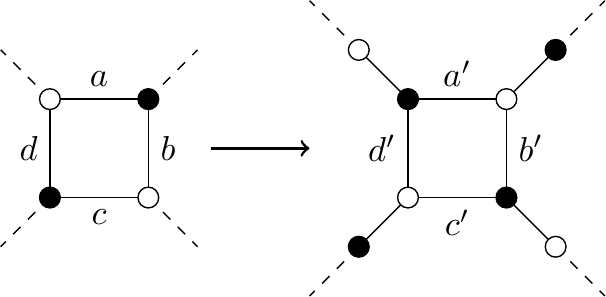}
\caption{The weights update as $a' = \frac{c}{\Delta}, c' = \frac{a}{\Delta}, d' = \frac{b}{\Delta}, b' = \frac{d}{\Delta}$, with $\Delta = a c + b d$. The relationship between the local partition functions $Z$ and $Z'$ of the two patches shown above is the following: Subject to any boundary condition (a boundary condition is a subset of the boundary vertices which are occupied by edges outside of the patch) $Z = \Delta Z'$.}
\label{fig:spider}
\end{figure}

More precisely, if we have a finite graph $G$ embedded in the plane or the torus with a degree $4$ face $F$ with edge weights $a,b,c,d$ as shown in Figure \ref{fig:spider}, then we can replace it with the graph $G'$ where the patch around the face $F$ is altered as on the right of Figure \ref{fig:spider}, and the corresponding dimer partition functions $Z_G, Z_{G'}$ are related by 
$$Z_G = \Delta Z_{G'} \;\;.$$

We can use the spider move at $F$ to define a Markov mapping, which takes a random configuration $M$ sampled from the Boltzmann measure $\mu_G$ to a random configuration $M'$ sampled from the Boltzmann measure $\mu_{G'}$. The definition is as follows: Given the matching $M$, we set the matching $M'$ have the same subset edges as $M$ outside of the patch around $F$. Then, this set of edges determines a set of partial matchings on the patch shown on the right of Figure \ref{fig:spider} which can be used to complete $M'$ to a perfect matching. We resample the edges of the partial matching according to the Boltzmann measure on that patch with the new weights. By an abuse of terminology, we will refer to such a Markov move as a \emph{spider move} or \emph{urban renewal} at the face $F$.

\begin{figure}
\begin{subfigure}[b]{0.3\textwidth}
         \centering
         \includegraphics[scale=0.30]{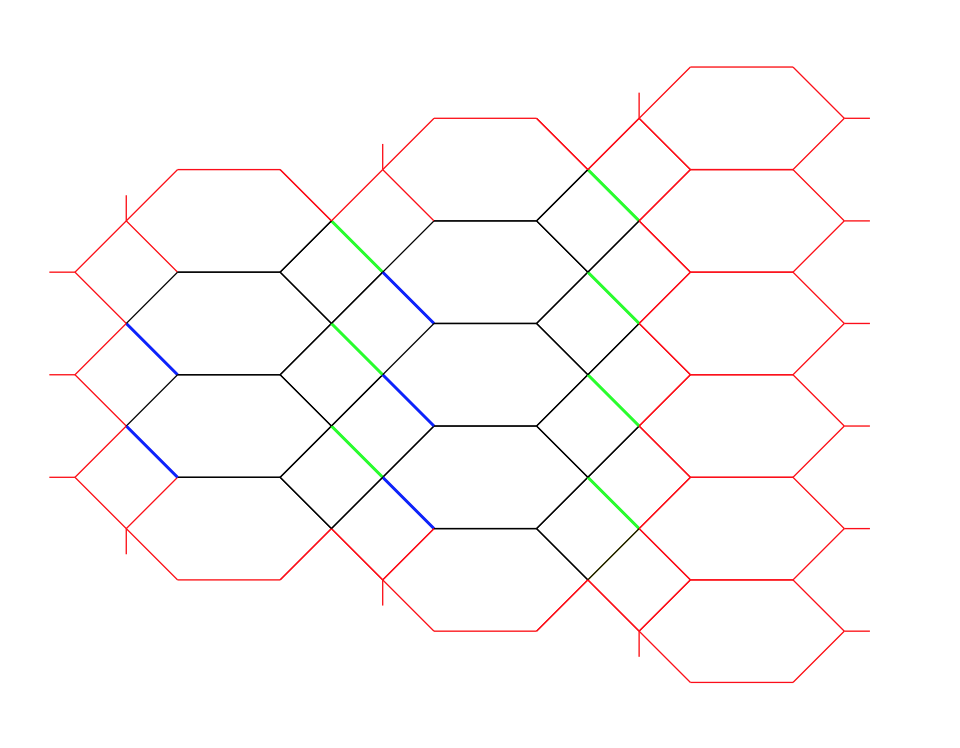}
         \caption{Decoration of the boundary of the tower graph of size $N = 2$. Weights on red edges are not shown, but should follow the same pattern.}
         \label{fig:shs0}
     \end{subfigure}
     \hspace{40 pt}
     \begin{subfigure}[b]{0.3\textwidth}
         \centering
         \includegraphics[scale=0.25]{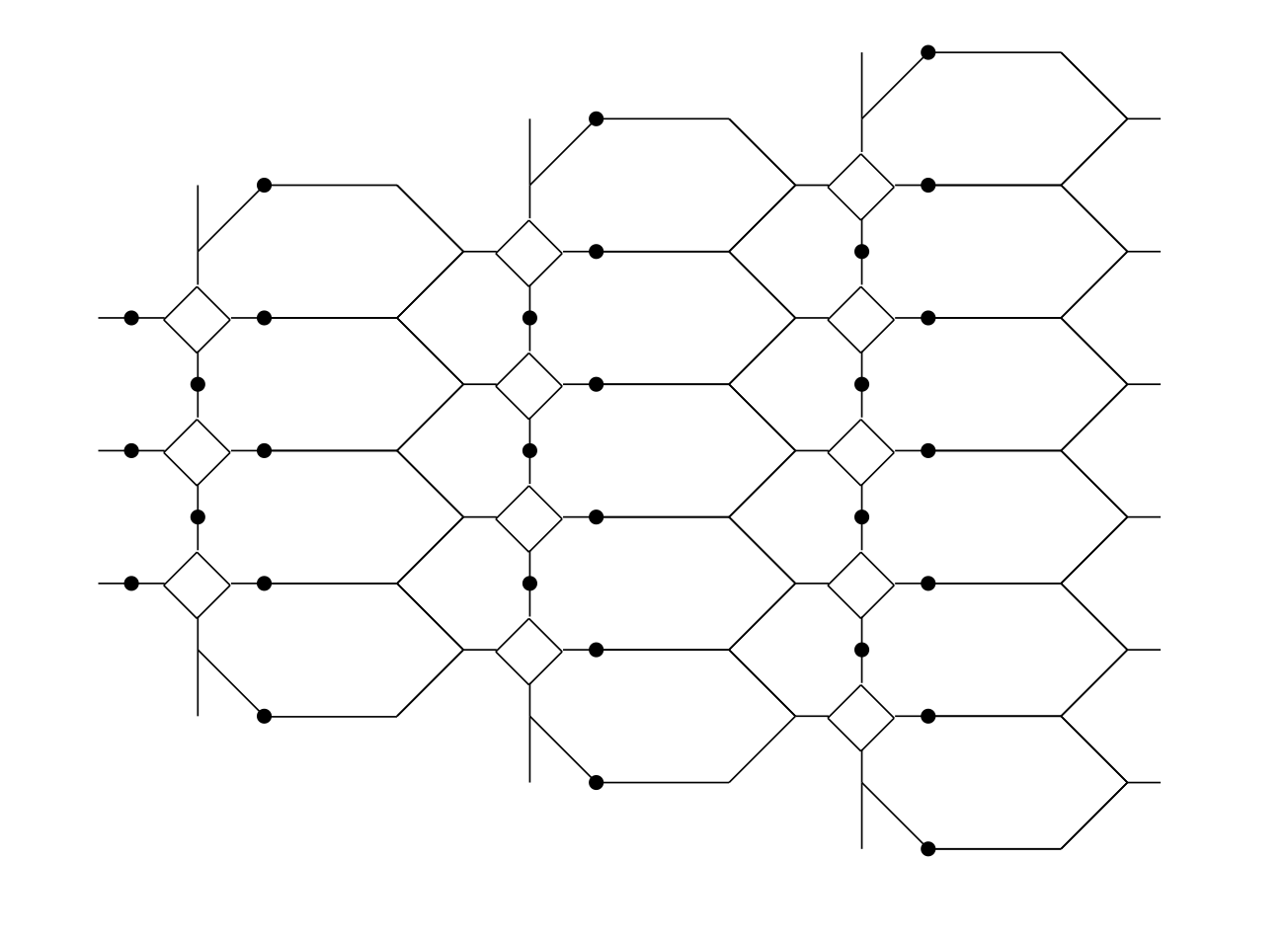}
          \vspace{-20 pt}
         \caption{The first round of spider moves. The black dots indicate the $2$-valent vertices at which we will perform contraction.}
         \label{fig:shs1}
     \end{subfigure} \\
     \begin{subfigure}[b]{0.3\textwidth}
         \centering
         \includegraphics[scale=0.30]{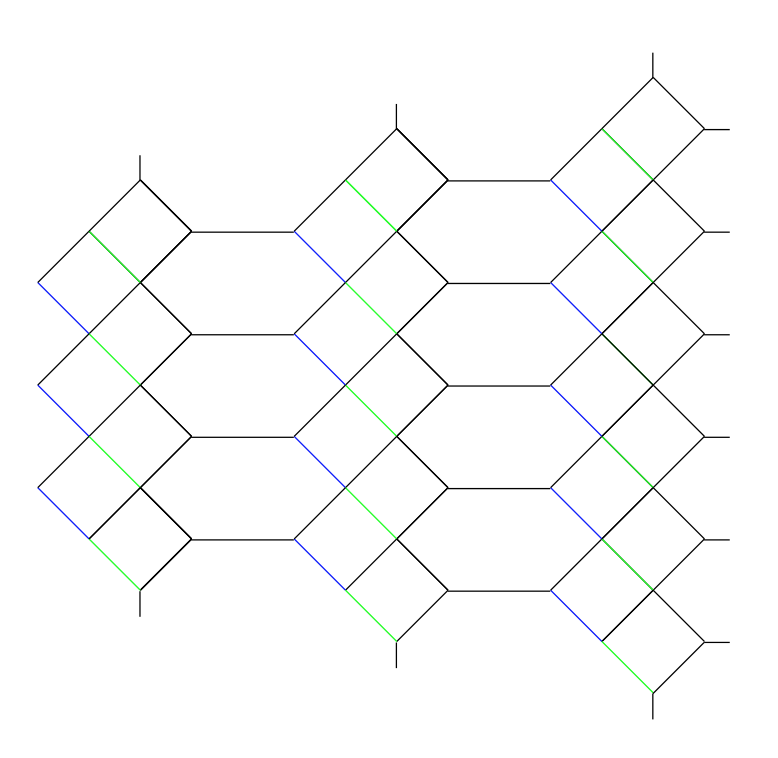}
         \caption{A gauge equivalent graph after the first round of spider moves.}
         \label{fig:shs2}
     \end{subfigure}
     \hspace{40 pt}
     \begin{subfigure}[b]{0.3\textwidth}
         \centering
         \includegraphics[scale=0.30]{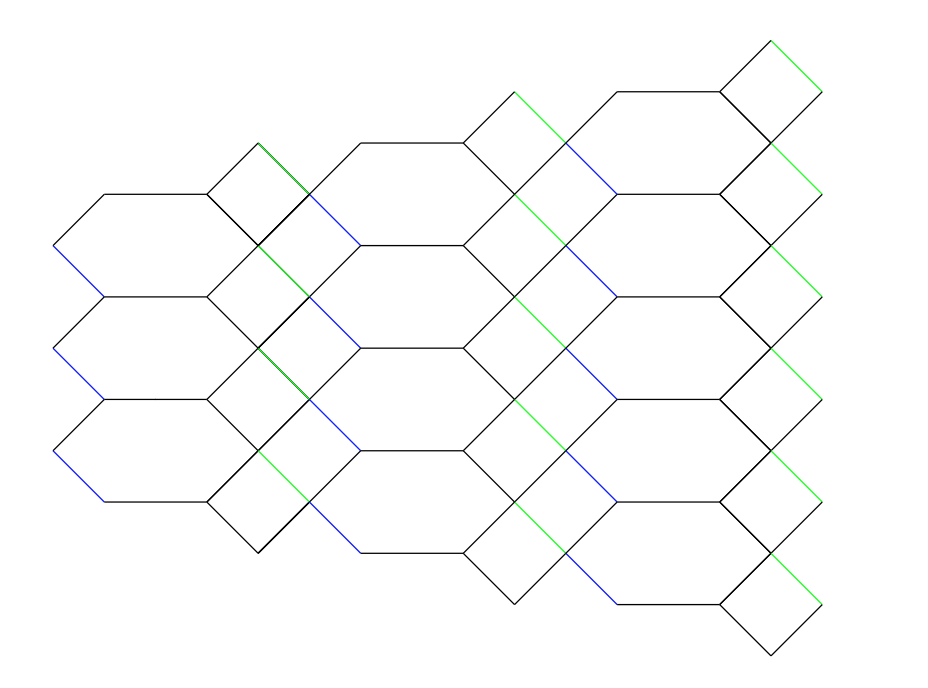}
         \caption{A gauge equivalent graph after the second round of spider moves (the second round is not explicitly shown). }
         \label{fig:shs3}
     \end{subfigure}
\caption{The sequence above illustrates the sequence of spider moves which, starting with a tower graph of size $N$, produces a tower of size $N+1$. The green edges have weight $\beta$, blue ones have weight $\alpha$, and all other edges have weight $1$. Randomizing each spider move gives the corresponding shuffling algorithm which samples from the Boltzmann measure on the size $N + 1$ tower given a sample on a tower of size $N$.}
\label{fig:shuffle}
\end{figure}

 Now we describe the sequence of spider moves to perform in order to obtain a tower of size $N + 1$ from one of size $N$. For any subgraph of the square hexagon graph, define \emph{faces of type 1} to be those which are at $X$ coordinate one smaller than a hexagonal face. We record the following fact as a proposition, which follows trivially from the facts stated above, and some elementary combinatorial reasoning.
 
 \begin{prop}
 Starting with a tower of size $N$, we decorate the graph as shown in Figure \ref{fig:shs0}. Matchings of this graph are in weight-preserving correspondence with matchings of the original tower graph. The following two rounds of spider moves map a random matching of a size $N$ tower to one of a size $N + 1$ tower.
\begin{enumerate}[1.]
\item Perform the spider move at each face of type $1$. 
\item Contract the two-valent black vertices, as shown in Figure \ref{fig:shs1}, and then perform the spider move at each face which is now of type $1$. Then we again contract two-valent black vertices.
\end{enumerate}
Furthermore, this update step is equivalent to the dynamics on interlacing arrays described in Section \ref{subsec:res}.
\label{prop:shuffling}
 \end{prop}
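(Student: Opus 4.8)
The plan is to establish the two assertions separately: first, that the decorated graph of Figure \ref{fig:shs0} has matchings in weight-preserving bijection with matchings of the size $N$ tower, and second, that the two rounds of spider moves (interspersed with two-valent vertex contractions) transform the decorated size $N$ tower into the decorated size $N+1$ tower, so that the associated Markov mapping sends $\mu_{\text{size }N}$ to $\mu_{\text{size }N+1}$. The first point I would handle by direct inspection: the decoration in Figure \ref{fig:shs0} consists of appending degree-two vertices and unit-weight edges along the boundary in such a way that each new boundary vertex is forced to be matched to a unique edge, so summing over matchings of the decorated graph reduces term-by-term to summing over matchings of the tower, with all weights preserved. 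This is the ``elementary combinatorial reasoning'' alluded to in the statement, and requires only checking the local picture around each boundary face.

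For the second point, I would proceed purely at the level of graphs and weights, then invoke the general Markov-mapping principle recalled in Section \ref{subsec:shuffling}. The key steps are: (i) identify the set of ``faces of type 1'' in the decorated size $N$ tower, and verify these are pairwise non-adjacent (so the spider moves in round 1 commute and can be performed simultaneously); (ii) apply the spider-move weight update rule $a' = c/\Delta$, etc., with $\Delta = ac+bd$, at each such face, tracking how the local weights $\alpha$, $\beta$, $1$ transform — here the specific numerology of Figure \ref{fig:shuffle} (green $=\beta$, blue $=\alpha$) must be matched; (iii) contract the resulting two-valent black vertices, which is a gauge-type move multiplying weights along the two incident edges and preserving the dimer measure up to the known scalar, arriving at the graph of Figure \ref{fig:shs2}; (iv) repeat the spider move and contraction for round 2 to reach Figure \ref{fig:shs3}; and (v) observe that Figure \ref{fig:shs3} is gauge-equivalent (a global rescaling of white/black vertex weights, which does not change the Boltzmann measure) to the decorated size $N+1$ tower. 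Since each spider move and each contraction is a local transformation of the type discussed in Section \ref{subsec:shuffling}, composing the associated Markov mappings yields a Markov mapping from $\mu_{\text{size }N}$ to $\mu_{\text{size }N+1}$.

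Finally, for the equivalence with the interlacing-array dynamics of Section \ref{subsec:res}, I would translate the resampling probabilities of the spider moves into the particle language via the bijection of Section \ref{subsec:schur}. Under this bijection, the edges resampled at the type-1 faces in round 1 correspond to the $y$-particles and those in round 2 to the $x$-particles; the local Boltzmann weights with the updated parameters give independent jump probabilities $\frac{\alpha\beta}{1+\alpha\beta}$ for the $y$-particles and $\frac{\beta}{1+\beta}$ for the $x$-particles, precisely as in steps 1 and 2 of Section \ref{subsec:res}, with the ``forced''/``blocked'' cases corresponding to the configurations where the partial matching on the patch leaves no freedom. The deterministic update $z^i(N+1) = x^{i-1}(N)$ corresponds to the contraction/relabelling that shifts the columns.

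The main obstacle I anticipate is bookkeeping rather than conceptual: one must carefully track the edge weights through the two rounds of spider moves and two rounds of contractions — in particular checking that the factors of $\Delta$ accumulate into exactly a global constant (so that the measures, not just the partition functions, match) and that the resulting weighted graph is genuinely the size $N+1$ tower up to gauge. A secondary subtlety is verifying the claimed non-adjacency of type-1 faces at every stage (including the newly created type-1 faces after the first contraction), since this is what licenses performing the moves in parallel and makes the induced Markov mapping well-defined as a product of commuting local resamplings. Both of these are best carried out by a careful figure-by-figure comparison following Figure \ref{fig:shuffle}, rather than by a closed-form computation.
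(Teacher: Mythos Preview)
Your proposal is correct and matches the paper's approach: the paper does not give an explicit proof, stating only that the proposition ``follows trivially from the facts stated above, and some elementary combinatorial reasoning,'' and your outline is precisely a spelled-out version of that elementary reasoning. In particular, your identification of the main work as bookkeeping (tracking weights through the two rounds of spider moves and contractions, and matching the spider-move resampling probabilities to the jump probabilities $\frac{\alpha\beta}{1+\alpha\beta}$ and $\frac{\beta}{1+\beta}$) is exactly what the paper leaves to the reader via Figure~\ref{fig:shuffle}.
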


In Figure \ref{fig:shuffle} we illustrate this procedure for $N = 2$ (omitting the spider moves of the second step). Thus, using the corresponding Markov mapping for each spider move, to generate a random matching on a tower of size $N$ we may perform $N$ steps of the randomized shuffling.

\subsection{Stationary Dynamics and Current} 
Consider a point with continuum coordinates $(x, u)$ in the bulk of the size $N$ tower graph, for $N$ large. The local statistics in a finite neighborhood $\mathcal{B}$ of the point are approximated by an ergodic, translation invariant Gibbs measure. As a result, the distribution of the tiling at time $N+1$ inside of $\mathcal{B}$ is approximately given by: 1. Sampling a dimer cover from a translation invariant Gibbs measure, and 2. running one step of a corresponding full plane Markov chain on $G$. At leading order, one step of the Markov chain does not change the slope of the macroscopic height function at $(x, u)$, so the Gibbs measure we see in $\mathcal{B}$ at time $N+1$ should be the same, so we expect the Gibbs measures to be stationary under the full plane Markov chain.

In this section we define the full plane dynamics, which act on full plane dimer configurations. We will define a Markov process on dimer configurations of the square hexagon lattice $G$ such that for any $(s, t)$ the ergodic, translation invariant Gibbs measure with slopes $(s, t) $ is stationary. Recall that a face of a square hexagon graph is called a \emph{face of type 1} if the adjacent faces to the east are hexagonal faces.

\begin{figure}
  \includegraphics[scale=0.4]{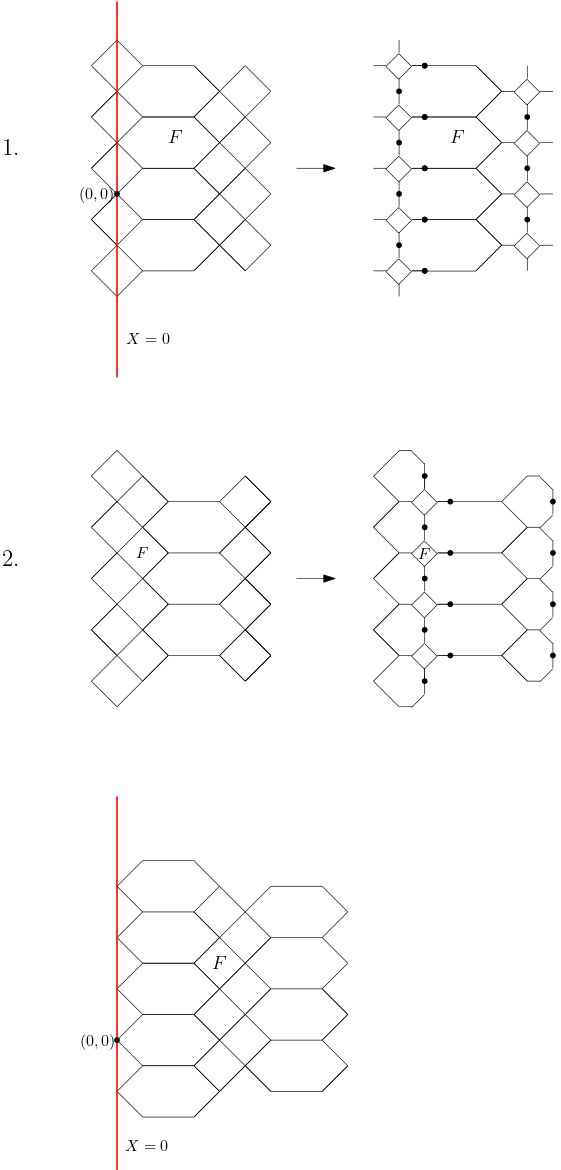}
\caption{Shown above are the two update steps involved in one time step of the full plane Markov chain. We keep track of the face $F$ as we do the shuffling. At time $0$, $F$ has coordinates $(0, 0)$, and at time $1$ it has coordinates $(1, 0)$.}
\label{fig:fps}
\end{figure}

\begin{defn}[Full plane dynamics]
 Suppose our initial matching of $G$ is $M_0$. One step of the Markov chain is defined by the following two steps:
\begin{enumerate}[1.]
\item  Perform a spider move at the faces of type 1, and contract $2$-valent vertices. This gives a random matching $M_{\frac{1}{2}}$ of a graph $G' \cong G$.
\item Perform a spider move at the faces of $G'$ which are of type $1$, and contract $2$-valent vertices. This gives a random matching $M_{1}$ of a graph $G'' \cong G$. 
 \end{enumerate}
 \label{def:FP}
\end{defn}
Even though $G''$ is isomorphic to $G$, in order to specify the exact identification of $G''$ with the square-hexagon lattice $G$ we specify the coordinates we use to index the vertices and faces. The coordinate axes will be parallel to the same $(X, U)$ coordinate axes, so we must specify the position of the origin. After any sequence of spider moves on a graph, the faces of the new graph are combinatorially in a natural correspondence with the faces of the old graph. Using this fact, we choose the identification of $G''$ with $G$ as follows:

\begin{defn}[Coordinates]
The coordinates on faces of $G''$ (see definition \ref{def:FP}) are determined by the following requirement: If we have a face $F$ of $G''$ which corresponds combinatorially to the face of $G$ at position $(X, U)$ before the spider moves, its coordinates on $G''$ are $(X+1, U)$. See Figure \ref{fig:fps}.
\end{defn}

Note that we must additionally specify how to update the height function, as a dimer configuration corresponds to a height difference profile, rather than a height function itself. For this it is sufficient to specify the new height function at a single face. We choose this update rule so that the height function of the full plane Markov chain corresponds to the height evolution under the shuffling algorithm on the finite tower graph. If $H_0$ is the height function of the initial matching and $H_1$ is the height function after one step of the chain, it is not hard to see that for any two faces $F, F'$ at which a spider move is not performed
$$H_1(F') - H_1(F) = H_0(F') - H_0(F) \;\;.$$
See, e.g. Lemma 2.3 of \cite{ZhangDominoHydrodynamics}. Thus, it makes sense to define $H_1(F) = H_0(F)$ for all $F$ at which no spider move happened. Using this fact, and noting that the face at position $(X, U)$ at time $1$ corresponds combinatorially to the face at $(X-1, U)$ at time $0$ (see Figure \ref{fig:fps}), we make the following definition.

\begin{defn}[Height function update]
 Suppose our initial matching of $G$ is $M_0$, with initial height function $H_0$. The height function of the new matching $M_1$ is fully determined by setting 
 $$H_1(X, U) = H_{0}(X-1, U) $$
 at faces $(X, U)$ such that we did not do a spider move.
\end{defn}

We denote by $H_k$ the height function after $k$ steps of the above dynamics. Now we compute the average speed of growth of $H_k$ as $k$ increases, assuming $H_0$ was sampled from an ergodic Gibbs measure. The following lemma is needed to make this precise.

\begin{lemma}
The full plane dynamics preserves the Gibbs measure $\pi_{s, t}$ for any allowed slopes $(s, t)$.
\end{lemma}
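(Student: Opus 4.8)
The plan is to prove stationarity by reducing the full plane dynamics to its defining local resampling moves (spider moves / urban renewal) and showing that each one, followed by the re-identification of the lattice via the coordinate shift, maps the Gibbs measure $\pi_{s,t}$ to itself. The conceptual point is that a spider move is a Markov operator built to preserve Boltzmann measures locally (Figure \ref{fig:spider} and the relation $Z_G = \Delta Z_{G'}$), so the only thing that can go wrong is at the level of (i) patching together infinitely many such local moves on the full plane, and (ii) checking that after the two rounds of moves the resulting graph $G''$, with the chosen coordinate identification, carries exactly the Gibbs measure $\pi_{s,t}$ for the square-hexagon lattice $G$ again, with the \emph{same} slopes.

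\textbf{Step 1: Reduce to the torus.} For each $M \in \mathbb{Z}_{>0}$, let $G_M = G/(M\mathbb{Z})^2$ and let $\pi_{s,t}^{M}$ be the Boltzmann measure on dimer covers of $G_M$ restricted to the homology/height-change sector corresponding to $(\lfloor Ms\rfloor, \lfloor Mt\rfloor)$ (as in the definition of surface tension in Section \ref{subsec:gibbs_meas}). It is standard (cf.\ \cite{KOS2006}) that $\pi_{s,t}^{M} \Rightarrow \pi_{s,t}$ as $M\to\infty$ in the sense of weak convergence of the induced measures on dimer configurations, i.e.\ convergence of all cylinder (finite edge set) probabilities. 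Since all type-1 faces are pairwise disjoint and their spider moves commute, one round of the full plane dynamics is well defined as a composition of independent local resamplings; the same sequence of spider moves makes sense on $G_M$ for $M$ large (the type-1 faces of $G_M$ are just the periodic images of those of $G$). So it suffices to show: the two rounds of spider moves plus two-valent vertex contractions map $\pi_{s,t}^{M}$ on $G_M$ to $\pi_{s,t}^{M}$ on the isomorphic torus graph $G_M''\cong G_M$, under the coordinate identification of Definition [Coordinates], and then pass to the limit.

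\textbf{Step 2: Partition functions and the sector structure on the torus.} By the local relation $Z_G = \Delta Z_{G'}$ applied simultaneously at all (disjoint) type-1 faces, together with the invariance of partition functions under contraction of two-valent vertices (this just clears a factor equal to the product of the two incident edge weights), the total partition function of $G_M$ equals that of $G_M''$ up to an explicit global constant; moreover the spider move and contraction are local and hence preserve the height-change homology class of a configuration up to a fixed global shift (this is exactly the content of the identity $H_1(F')-H_1(F)=H_0(F')-H_0(F)$ for faces not touched by a move, combined with a bookkeeping of the finitely many faces that are). Here is where the coordinate shift $(X,U)\mapsto(X+1,U)$ and the height update $H_1(X,U)=H_0(X-1,U)$ enter: they are chosen precisely so that, after re-identifying $G_M''$ with $G_M$, a configuration in the $(\lfloor Ms\rfloor,\lfloor Mt\rfloor)$-sector is sent to the same sector. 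Granting this, the Markov operator defined by resampling each spider-move patch from its new local Boltzmann weights sends the conditional measure $\pi_{s,t}^{M}$ to $\pi_{s,t}^{M}$: this is the standard fact that a composition of commuting local ``detailed-balance-type'' resamplings preserves the global Gibbs/Boltzmann measure, specialized here to the spider move (see \cite{propp2003generalized, ZhangDominoHydrodynamics}).

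\textbf{Step 3: Pass to the limit and invoke translation invariance / ergodicity.} Take $M\to\infty$. Since cylinder probabilities under $\pi_{s,t}^{M}$ converge to those under $\pi_{s,t}$, and one step of the full plane dynamics affects the status of any fixed finite edge set through only finitely many (bounded-range) spider moves — so the pushforward of a cylinder event is again, uniformly in $M$, a bounded-range function of the configuration — the pushforward measures converge to the pushforward of $\pi_{s,t}$. Combining with Step 2 gives that $\pi_{s,t}$ is fixed by one step of the full plane dynamics; translation invariance of the image is automatic because the dynamics is equivariant under the $\mathbb{Z}^2$ action (the coordinate shift is itself a lattice translation), and ergodicity is preserved as well, so the image is again the \emph{ergodic} translation invariant Gibbs measure of slope $(s,t)$, which by the uniqueness in \cite{KOS2006} is $\pi_{s,t}$ itself.

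\textbf{Main obstacle.} The genuinely delicate point is Step 2: carefully tracking the height-change sector (equivalently the slope) through the two rounds of spider moves and contractions, and verifying that the prescribed coordinate re-identification and single-face height normalization exactly compensate the shift introduced by the moves, so that slopes are preserved rather than merely shifted by a $(1/M)$-order amount. Equivalently, one must check that the full plane dynamics is slope-preserving at the \emph{exact} (not just leading-order) level — the heuristic in the paragraph preceding the lemma only argues this to leading order. This is a finite but somewhat intricate combinatorial verification, best done by following a single face path across the patch as in Lemma 2.3 of \cite{ZhangDominoHydrodynamics}; once it is in place, the rest is the routine ``local resampling preserves the global Boltzmann measure'' argument plus the torus approximation.
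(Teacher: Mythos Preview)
Your proposal is correct and follows essentially the same approach as the paper: reduce to the finite torus $G/(N\mathbb{Z})^2$, use that the spider moves preserve the sector-constrained Boltzmann measure $\pi_{s,t}^N$ there, and then pass to the infinite-volume limit using locality of the dynamics. The paper phrases the limit passage as an explicit coupling (matching configurations on a ball $\mathcal{B}_L$ and using the same Bernoulli coins inside), whereas you phrase it via weak convergence of cylinder probabilities plus bounded-range dependence of the update; these are two packagings of the same locality argument, and the paper simply asserts the torus stationarity you single out as the ``main obstacle.''
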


\begin{proof}
First we note that there exists a natural analog of the shuffling algorithm described in Definition \ref{def:FP}, which is defined on the set of dimer configurations on the $N \times N $ torus graph $G / (N \mathbb{Z})^2$ constrained to have fixed height changes $\floor{N s}, \floor{N t}$ in the horizontal and vertical directions. Furthermore, we note that this chain preserves the Gibbs measure $\pi^N_{s, t}$ on the same set of dimer configurations.

Next, for any $L, \epsilon > 0$, for all $N$ large enough there exists a coupling of dimer configurations $M \sim \pi_{s, t}$ on $G$ and $M_N \sim \pi^N_{s, t}$ on $G / (N \mathbb{Z}^2)$ such that with probability $\geq 1- \epsilon$ the configurations agree on all edges around faces inside of a radius $L$ ball $\mathcal{B}_L$ around $(X, U) = (0, 0)$ (where we interpret this ball modulo $N \mathbb{Z}^2$ in the torus situation). The configuration inside $\mathcal{B}_{L-5}$ in both graphs after one step of shuffling only depends on the configuration in $\mathcal{B}_L$, as well as the outcomes of the Bernoulli random variables used for creations at faces inside $\mathcal{B}_L$. Let $M(1), M_N(1)$ denote the two dimer configurations after one step of the shuffling algorithm on each graph, which are coupled by using the same Bernoulli random variables for creation steps at faces inside of $\mathcal{B}_L$ for which both matchings have a creation (and creations are done independently otherwise). With this construction, we obtain a coupling under which the configurations $M(1)$ and $M_N(1)$ agree inside of $\mathcal{B}_{L-5}$ at time $1$ agree with probability $\geq 1- \epsilon$. However, $M_N(1) \stackrel{d}{=} M_N$ by the stationarity on the torus, so this coupling implies the lemma.
\end{proof}

We also define an important quantity known as the \emph{current}, which is a function $J(s, t)$ of the allowed pairs of slopes $(s, t)$. One can easily check that the definition below is independent of the choice of face $(X, U)$.
\begin{defn}
Fix $(X, U) \in \mathbb{Z}^2$. Define the \emph{current} $J(s, t)$ by the quantity
\begin{equation}
\label{eqn:curr}
J(s, t) \defeq \mathbb{E}_{\pi_{s, t}}[ H_1(X, U) - H_0(X, U) ] \;\;.
\end{equation}
\end{defn}

Since the set of Gibbs measures we consider are parameterized by a complex coordinate $z_0 \in \mathbb{H}$, we can write the current as a function of $z_0$, and (abusing notation) we write this as $J(z_0)$.

\begin{theorem}
The current is given by 
\begin{equation}
\label{eqn:curr}
J(z_0) = -\frac{1}{\pi} \arg(1 + \beta z_0) \;\;.
\end{equation}
\end{theorem}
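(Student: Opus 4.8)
The plan is to compute the current directly from its definition by reducing it to a single expected height increment in the stationary Gibbs measure $\pi_{z_0}$, and then to evaluate that expectation with the explicit contour formula of Lemma~\ref{lem:gibbsmeas}. The key first observation is that the update acts on the columns $X \equiv 0 \pmod 3$ purely by relabelling: in the particle description of Section~\ref{subsec:res} this is exactly the deterministic step $z^{k}(N+1) = x^{k-1}(N)$, with no resampling. Hence $H_{1}(X,U) = H_{0}(X-1,U)$ holds \emph{exactly} for every face with $X\equiv 0 \pmod 3$ (not merely on non-spider faces). Since $J(z_0)$ is independent of the chosen face, we may evaluate it there and obtain
$$J(z_0) = \mathbb{E}_{\pi_{z_0}}\big[H_{0}(X-1,U) - H_{0}(X,U)\big], \qquad X \equiv 0 \pmod 3 .$$
By the definition of the dimer height function this difference equals $\sum_{e}\pm\big(\mathbf 1_{e\in M} - \mathbf 1_{e\in M_{\mathrm{ref}}}\big)$, the sum running over the one or two lattice edges separating the face $(X,U)$ from the face $(X-1,U)$ (read off from Figure~\ref{fig:SHL}), with signs according to the white-vertex-on-the-right convention. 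Using the single-edge case of \eqref{eqn:detcor}, this writes $J(z_0)$ as an explicit finite linear combination $\sum_e \pm\,\mathbf K[\mathbf b_e,\mathbf w_e]\,K^{z_0}[\mathbf w_e,\mathbf b_e]$ plus an explicit constant coming from $M_{\mathrm{ref}}$.

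Next I would evaluate each edge probability using Lemma~\ref{lem:gibbsmeas}. For each edge $e$ appearing above, the endpoints $\mathbf w(X',U'),\mathbf b(X,U)$ lie in adjacent fundamental-domain columns, so writing $X=3t-m$, $X'=3t'-m'$ one has $t-t'=\pm1$ and the $(1-z^{-1})$ and $(1-\alpha z^{-1})$ exponents in Lemma~\ref{lem:gibbsmeas} are small (typically $0$ or $\pm 1$). Thus $K^{z_0}[\mathbf w_e,\mathbf b_e] = \tfrac{1}{2\pi i}\int_{\overline{z_0}}^{z_0} R_e(z)\,dz$ with $R_e$ rational, its poles confined to $\{0,1,\alpha,-1/\beta\}$. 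Integrating an antiderivative and using $\log(z_0-p) - \log(\overline{z_0}-p) = 2i\arg(z_0-p)$ for real $p$ (with the branch fixed by the prescription in Lemma~\ref{lem:gibbsmeas} that the contour cross $\mathbb R$ in $(0,\infty)\setminus\{1,\alpha\}$ when $X'\ge X$ and in $(-\infty,0)\setminus\{-1/\beta\}$ when $X'<X$), each such integral is $\tfrac1\pi$ times an integer combination of $\arg(z_0)$, $\arg(z_0-1)$, $\arg(z_0-\alpha)$ and $\arg(1+\beta z_0)$.

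Assembling: substituting these evaluations into the formula of the first step and collecting terms, the $z_0$-independent pieces should be cancelled by the $M_{\mathrm{ref}}$-constant, and the $\arg(z_0)$, $\arg(z_0-1)$, $\arg(z_0-\alpha)$ contributions should cancel among themselves — as they must, since the slopes $(s,t)$ of Corollary~\ref{cor:main} already use exactly those three arguments while the current is a genuinely new, harmonic observable — leaving $J(z_0) = -\tfrac1\pi\arg(1+\beta z_0)$. Finally I would treat the frozen case $z_0\in\mathbb R\setminus\{-1/\beta,0,1,\alpha\}$ separately: there the integration contour degenerates, the same computation returns the integer value of $J$ on each of the five frozen slopes, and in the uniform specialization $\alpha=1$ one can cross-check $J$ against the hydrodynamic prediction.

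The main obstacle I anticipate is the bookkeeping in the first two steps: pinning down precisely which lattice edges separate the two faces, their Kasteleyn signs and weights, and the exact $M_{\mathrm{ref}}$-correction; and — if one prefers to argue through the two rounds of urban renewal of Proposition~\ref{prop:shuffling} and Definition~\ref{def:FP} rather than through the deterministic $z$-column relabelling — correctly tracking both spider-move rounds together with the coordinate shift $X\mapsto X+1$. Once $J$ is written as an explicit finite sum of single-edge probabilities, the remaining evaluation and cancellation is a routine residue computation.
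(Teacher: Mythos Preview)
Your strategy is essentially the same as the paper's: write $J$ as an expected height difference $\mathbb{E}_{\pi_{z_0}}[H_0(X-1,U)-H_0(X,U)]$ via the update rule, convert to edge probabilities through the height-function definition, and evaluate with the Gibbs kernel of Lemma~\ref{lem:gibbsmeas}. The paper, however, bypasses all the bookkeeping you anticipate: with a suitable choice of face, the height increment across the step $(X,U)\to(X-1,U)$ picks up exactly one edge, the $\beta$-weighted edge $e=(\mathbf b(-1,1),\mathbf w(0,0))$, and the reference-matching constant is zero there. Thus $J(z_0)=-\pi_{z_0}(e\in M)=-\mathbf K(\mathbf b(-1,1),\mathbf w(0,0))\,K^{z_0}(\mathbf w(0,0),\mathbf b(-1,1))$, and plugging $t-t'=-1$, $m=1$, $m'=3$, $U-U'=1$ into Lemma~\ref{lem:gibbsmeas} collapses the integrand to $\beta(1+\beta z)^{-1}$, giving $-\tfrac{1}{\pi}\arg(1+\beta z_0)$ in one line.

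So your plan is correct but overcautious: there are no $\arg(z_0)$, $\arg(z_0-1)$, $\arg(z_0-\alpha)$ terms to cancel, and no separate frozen-region analysis is needed. The ``main obstacle'' you flag dissolves once you pick the right face.
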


\begin{proof}
Going through the definitions, it becomes clear that the expression in the equation above indeed does not depend on $f$ and is given by a single edge probability:
$$J(s, t) =- \pi_{s, t}\left( e \in M\right) \;\;.$$

Once we know this, it simply suffices to compute the edge probability using the corresponding entry of the inverse Kasteleyn matrix $K^{z_0}$:

\begin{align*}
J(z_0)&= - \mathbf{K}(\mathbf{b}(-1, 1), \mathbf{w}(0, 0))  K^{z_0}(\mathbf{w}(0, 0), \mathbf{b}(-1, 1)) \\
&= - \beta \frac{1}{2 \pi i  } \int_{\overline{z_0}}^{z_0} (1+ \beta z)^{-1}  d z  \\
&= - \frac{1}{\pi} \arg(1 + \beta z_0) \;\;.
\end{align*}

\end{proof}

\begin{remark}
\label{rmk:hydro}
We expect that for any fixed initial matching on $G$, after rescaling we should get the convergence of normalized random height functions
$$\frac{1}{N} H_{\floor{\tau N}}(\floor{x N}, \floor{u N}) \rightarrow h_\tau(x, u)$$ 
where the deterministic height function $h_\tau(x, u)$ satisfies the PDE: 
\begin{equation}
\label{eqn:hl}
\partial_\tau h_\tau = J(\nabla h_\tau) \;\;.
\end{equation}
This statement is analogous to the theorems presented in \cite{ZhangDominoHydrodynamics} and \cite{laslier2017lozenge} in the case of the square and hexagonal lattices. In particular, the hydrodynamic limit $h_\tau$ of the height function under the evolution of the shuffling algorithm for tower graphs should satisfy this PDE in the hydrodynamic limit. It is shown in \cite{borodin2018two} that if $h_\tau$ evolves according to equation \eqref{eqn:hl}, and if $h_\tau$ also satisfies the Euler Lagrange equations of the limit shape for each $\tau$, then as a function of the complex coordinate, $J(z_0)$ must be harmonic. We indeed see that this is the case. In other words, the evolution of $h_\tau$ is consistent with the Burgers equation satisfied by $z_0$ at each time $\tau$. 
\end{remark}

Using explicit formulas for the limit shape for $\alpha = 1$, we verify in Section \ref{sec:burgers} that $h_\tau$ satisfies the hydrodynamic limit equation \eqref{eqn:hl} for the initial condition corresponding to tower graphs. Furthermore, one may obtain a formula for the current $J(s, t)$ in terms of local slopes by inverting the mapping from $z_0$ to slopes $(s, t)$. This map can be described by the elementary geometric picture in Figure \ref{fig:angles}. We elaborate more on these two points in the next section.

\section{Limit Shapes and Complex Coordinate}
\label{sec:burgers}
\subsection{Bijection Between Complex Coordinate $z_0$ and Slopes}
\label{subsec:z}

\begin{figure}
\includegraphics[scale=0.8]{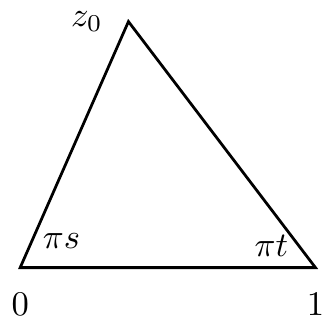}
\caption{The relationship between the complex coordinate $z_0$ and local slopes $(s, t)$ in the case of lozenge tilings.}
\label{fig:angles_lozenge}
\end{figure}

\begin{figure}
\includegraphics[scale=0.8]{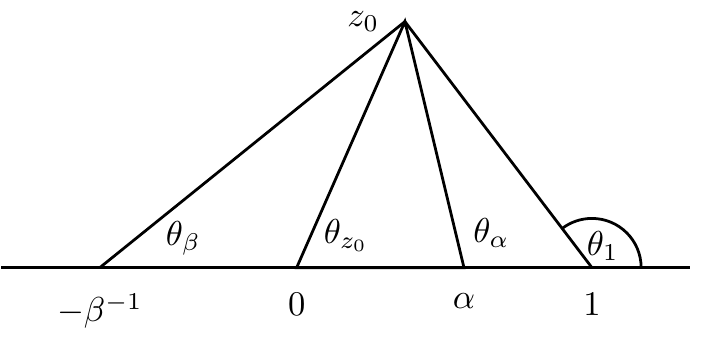}
\caption{From $z_0$ we compute the edge inclusion probabilities as a function of the angles above using equations \eqref{eqn:p1}-\eqref{eqn:p4}.}
\label{fig:angles}
\end{figure}

In this section we make explicit the correspondence between the complex coordinate $z_0$ defining a Gibbs measure and the slopes $(s, t)$ corresponding to that Gibbs measure. According to the general theory of Kenyon-Okounkov-Sheffield \cite{KOS2006}, pairs of complex numbers $(z_0, w_0)$ such that $z_0 \in \text{int}(\mathbb{H})$ and $P(z_0, w_0) = 0$ are in one-to-one correspondence with pairs of liquid phase slopes via the relation
\begin{equation}
(s, t) =  \pm \frac{1}{\pi} (-\arg w_0, \arg z_0) \text{ (mod $\mathbb{Z}^2$)} \;\;.
\label{eqn:args}
\end{equation}
For the dimer model on several other lattices, this correspondence can be made explicit with a simple geometric picture. For example, in the case of lozenge tilings, the pair $(z_0, w_0)$ is determined by a choice of $z_0 \in \mathbb{H}$, and one may simply draw the triangle of Figure \ref{fig:angles_lozenge} to see the correspondence between $z_0$ and the local slopes.

The mapping has a similar description in our case; we may use the angles of Figure \ref{fig:angles} to compute $(s, t)$ from $z_0$. Indeed, we use the form of the inverse Kasteleyn of lemma \ref{lem:gibbsmeas} to compute the following edge probabilities.
\begin{align}
\pi_{z_0}(( \mathbf{w}(0, 0), \mathbf{b}(-1, 1)) \in M) &= 
\frac{1}{\pi} \arg(1 + \beta z_0) = \frac{1}{\pi} \theta_{\beta} \label{eqn:p1}\\
\pi_{z_0}((\mathbf{w}(-2, 1), \mathbf{b}(-1, 0)) \in M) &= \frac{1}{\pi} (\arg(z_0 - 1) - \arg(z_0)) = \frac{1}{\pi}(\theta_1 - \theta_{z_0})
 \\
\pi_{z_0}(( \mathbf{w}(0, 0), \mathbf{b}(1, 0)) \in M) &= 1- \frac{1}{\pi}  \arg(z_0-\alpha) = \frac{1}{\pi}(\pi - \theta_{\alpha})
\\
\pi_{z_0}((\mathbf{w}(1, 0), \mathbf{b}(1, 0)) \in M) &= \frac{1}{\pi} \arg z_0 = \frac{1}{\pi} \theta_{z_0} \label{eqn:p4} \;\;.
\end{align}
All arguments above are chosen in $[0, \pi]$. In particular, using the definition of the height function and our choice of fundamental domain (Figure \ref{fig:mag}), the slopes of the Gibbs measure are given by
\begin{align*}
s &= \frac{1}{\pi} ( \theta_\beta + \theta_{z_0} -\theta_1  -\theta_\alpha)  \\
t &= \frac{1}{\pi} \theta_{z_0} \;\;.
\end{align*}
Note that since we can write $w_0 =  \frac{(z_0- 1)(z_0 - \alpha)}{z_0 (1 + \beta z_0)}$, the above is consistent with equation \eqref{eqn:args}.

Using law of sines, one observes (letting $r_1 = |z_0|$)
\begin{align*}
\frac{\sin(\theta_\beta) }{\sin(\theta_{z_0} - \theta_\beta)} &= \beta r_1 \\
\frac{\sin(\theta_\alpha) }{\sin(\theta_{\alpha} - \theta_{z_0})} &= \frac{r_1}{\alpha}  \\
\frac{\sin(\theta_1) }{\sin(\theta_{1} - \theta_{z_0})} &= r_1
\end{align*}
and thus
\begin{align*}
\frac{\sin(\theta_\beta) }{\sin(\theta_{z_0} - \theta_\beta)} &= \alpha \beta \frac{\sin(\theta_\alpha) }{\sin(\theta_{\alpha} - \theta_{z_0})} \\
\frac{\sin(\theta_1) }{\sin(\theta_{1} - \theta_{z_0})}  &= \alpha \frac{\sin(\theta_\alpha) }{\sin(\theta_{\alpha} - \theta_{z_0})} \;\;.
\end{align*}
From the above equations it becomes clear that there are only two independent parameters. Using these equations, it is possible to express all of the angles $\theta_{z_0}, \theta_1, \theta_{\alpha}, \theta_{\beta}$ in terms of $s, t$, and thus obtain $z_0$ as a function of $(s, t)$.

\subsection{Identifying Critical $z_0(x, u)$ with Complex Coordinate of Kenyon-Okounkov}

Now we will identify the complex coordinate $z_0(x, u)$ which arises naturally in the saddle point analysis of double contour integrals with the complex coordinate obtained from solving the Burgers equation.

It follows from Theorem 1 of \cite{OkounkovKenyon2007Limit}, that if one expresses the Euler-Lagrange equation for the limiting height function $h(x, y) = \lim_{N \rightarrow \infty} \frac{1}{N} H_N(\floor{N x}, \floor{N y})$ in terms of the corresponding complex coordinates $(z, w)$, which satisfy $\nabla h = \frac{1}{\pi} (-\arg w, \arg z) $ and $P(z, w) = 0$, then the equation for the limit shape is 
$$
\frac{z_x}{z} + \frac{w_y}{w} = 0 \;\;.
$$
The $(x, y)$ coordinates above are induced by the choice of fundamental domain (see Figure \ref{fig:z2lattice}), and are related to $(x, u)$ by $y = x + u$. One approach is to explicitly check that $z_0$ satisfies the same Burgers equation and the correct boundary conditions after changing coordinates. Alternatively, we can use the \emph{method of complex characteristics}, see Corollary 1 of \cite{OkounkovKenyon2007Limit}, to find a solution to the Burgers equation matching $z_0$. We take this approach.

The method of characteristics works as follows: One can simply check that if $Q(z, w)$ is an analytic function, the solutions $(z(x, y), w(x,y))$ of
\begin{equation}
\label{eqn:char}
\begin{cases}
Q(z, w) = x z P_z + y w P_w \\
P(z,w ) = 0
\end{cases}
\end{equation}
are solutions to the complex Burgers equation. The analytic function $Q$ encodes the boundary conditions for the height function. If we choose $Q(z, w) = - \beta w$, then solving equations \eqref{eqn:char} for $z(x,y)$, we get that $z(x, x+u) = z_0(x, u)$. Indeed, the system \eqref{eqn:char} leads to an equation $(\text{cubic polynomial in $z$}) = 0$, and the cubic polynomial is the exact same as the one appearing in the equation $\partial_z S(z; x, u) = 0$ for the critical point. This polynomial equation for $z$ is 
\begin{align*}
&-\alpha u - 2 \alpha x + (\alpha \beta + u + \alpha u - \alpha \beta u + x + \alpha x - 3 \alpha \beta x) z 
 \\  
 &+ (-\beta - 
    \alpha \beta - u + \beta u + \alpha \beta u + 2 \beta x + 2 \alpha \beta x) z^2 + (\beta - \beta u - \beta x) z^3 = 0 \;\;.
\end{align*}
So we see that indeed the complex coordinate coming from the saddle point analysis is exactly the solution of the Burgers equation.

\begin{example}[Uniform weights]
\label{ex:uni}
If $ \alpha = 1$ the cubic polynomial above factors as 
$$(1- z) (-u - 2 x + (\beta + u - \beta u - 3 \beta x) z + (-\beta + \beta u + \beta x) z^2) $$
and further specializing $\beta = 1$ (this is the case of uniformly random perfect matchings) and solving, we get 
$$z(x, u) = \frac{-1 + 3 x + \sqrt{(1 - 3 x)^2 - 
  4 (-u - 2 x) (-1 + u + x)}}{2 (-1 + u + x)} \;\;.$$

Note that the complex coordinate for $h_\tau(x, u)$ in the time varying situation must be $z(x/\tau, u/\tau)$ by the self-similarity of the domain as $\tau$ varies. Using this and the explicit formula above, one can easily check the consistency with equation \eqref{eqn:hl}. Indeed, we have 
$$h_\tau(x, u) = - \frac{1}{\pi} \int_{u}^\infty \arg(z(x/\tau, u'/\tau)) \; du' \;\;.$$
In fact we can compute 
$$\partial_\tau \left(- \int_{u}^\infty \log(z(x/\tau, u'/\tau)) \; du' \right) = \frac{1}{\tau^2} \int_u^\infty \frac{1}{z(x/\tau, u'/\tau)} (z_x x + z_{u'} u') \; du \;\;.$$
So replacing $J(z)$ in equation \eqref{eqn:hl} with $-\frac{1}{\pi} \log(1 + z)$, and taking the $u$ derivative of both sides (and ignoring the prefactor of $\pi$), it suffices to check the identity 
\begin{align*}
\frac{(z_x x/\tau + z_{u} u/\tau)}{z} = \frac{1}{1 + z} z_u
\end{align*}
for $z, z_x, z_u$ all evaluated at $z(x/\tau, u/\tau)$, which can be readily checked using the explicit formula.

\end{example}

\section{Appendix}
\subsection{Isoradial embeddings}
\begin{figure}[!ht]
    \centering 
\begin{subfigure}{1.0\linewidth}
\centering
  \includegraphics[scale=0.4]{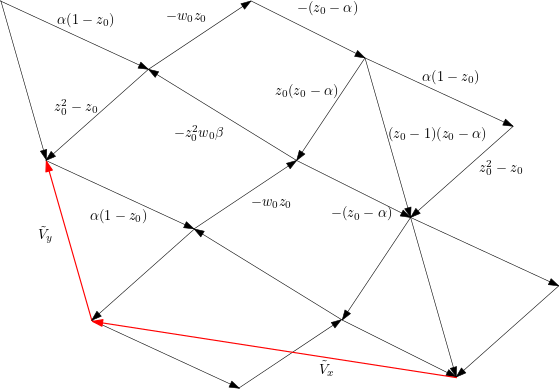}
  \caption{}
  \label{fig:i1}
\end{subfigure} \\ \vfill
\begin{subfigure}{1.0\linewidth}
\centering
  \includegraphics[scale=0.4]{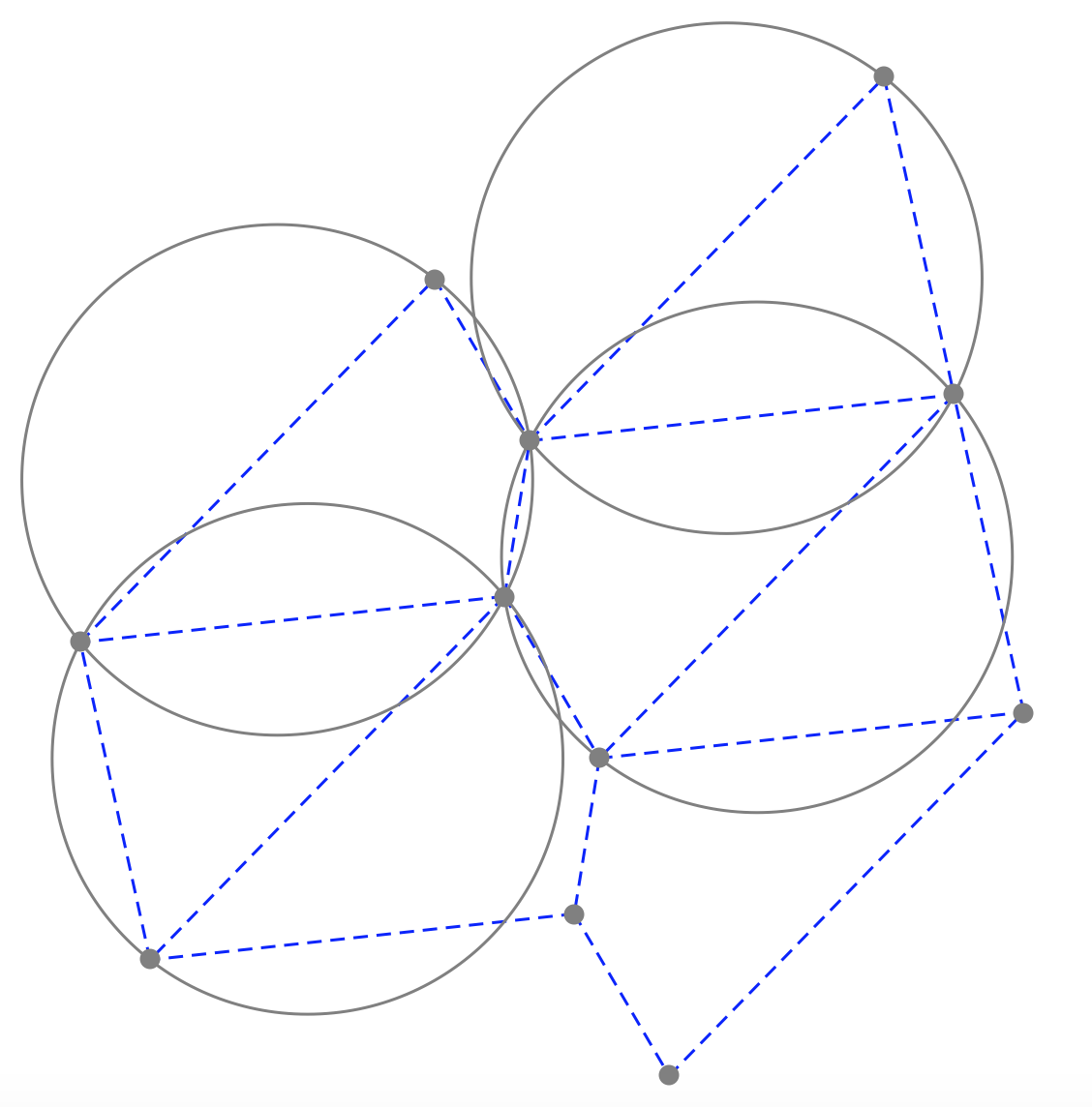}
  \caption{}
  \label{fig:i2}
\end{subfigure} 
\caption{(A): We present a schematic of an isoradial embedding of $G^*$ (the red edge $\tilde{V}_x$ is not part of the embedding), for which the inverse Kasteleyn $\overline{\partial}^{-1}$ defined in Section 4 of \cite{K_lapDet_2000} gives the Gibbs measure corresponding to $z_0$. We use the notation $w_0 = w(z_0)$, so $P(z_0, w_0) = 0$. The boundary edges of faces sum to $0$ as a consequence of the equation $P(z_0, w_0) = 0$. The embedded graph is invariant under translations by $\tilde{V}_x = w_0 z_0 + (z_0 - \alpha) +\alpha (z_0 - 1)$ and $\tilde{V}_y = -(z_0 - 1)(z_0 -\alpha)$ (compare with Figure \ref{fig:z2lattice}). (B): We show the isoradial embedding of $G^*$ with $\alpha = \beta = 1$ for $z_0= 0.9177956164184642 + 0.7575595655669651 i$. We only show circles around the four different types of faces in the graph (all other faces are translates of one of these four).}
\label{fig:isorad}
\end{figure}

Another way to obtain the slopes $(s, t)$ from $z_0$ using a geometric picture is to determine the \emph{isoradial embedding} of $G$, such that the naturally associated ergodic Gibbs measure is the one with slopes $(s, t)$. An isoradial embedding can be described as an embedding into $\mathbb{R}^2$ of the dual graph $G^*$ such that each face is inscribed in a circle of the same radius, and each isoradial embedding induces a natural \emph{critical weight function} on the edges of $G$ (see \cite{K_lapDet_2000} for detailed definitions). 

There is a natural inverse to the Kasteleyn matrix, which is defined by the embedding, and this inverse induces an ergodic Gibbs measure on dimer covers of $G$ \cite{de_Tili_re_2006}. In Figure \ref{fig:isorad} we present an isoradial embedding of $G$ such that this inverse induces the measure $\pi_{z_0}$. Given $z_0$ in the upper half plane, the embedding $f : \mathcal{L}^* \rightarrow \mathbb{C}$ of the dual of the square hexagon lattice can be described explicitly as follows. Denote by $w_0$ the unique complex number such that $P(z_0, w_0) = 0$. First, let $f(0,0) = 0$. Then, we prescribe values for $f(F') - f(F)$ for each edge $(F, F')$ of $\mathcal{L}^*$ dual to the edge $e$ (where the dual edge crosses $e$ with the white vertex on the right) as follows
\begin{itemize}
\item $f(F') - f(F) = -(z_0-1)(z_0-\alpha)$ if $e = \w(X, U) - \b(X, U) $ and $X = 1 \; (\text{mod } 3)$ 
\item $f(F') - f(F) =w_0 z_0$ if $e = \w(X, U) - \b(X-1, U) $ and $X = 0 \; (\text{mod } 3)$
\item $f(F') - f(F) = z_0^2-z_0$ if $e = \w(X, U) - \b(X+1, U) $ and $X = 0 \; (\text{mod } 3)$
\item $f(F') - f(F) = z_0 (z_0-\alpha)$ if $e = \w(X, U) - \b(X+1, U) $ and $X = 1 \; (\text{mod } 3)$
\item  $f(F') - f(F) = -z_0^2 w_0 \beta$ if $e = \w(X, U) - \b(X-1, U+1)$ and $X = 0 \; (\text{mod } 3)$
\item $f(F') - f(F) = \alpha (1- z_0)$ if $e = \w(X, U) - \b(X+1, U-1) $ and $X = 0 \; (\text{mod } 3)$
\item $f(F') - f(F) = -(z_0-\alpha)$ if $e = \w(X, U) - \b(X+1, U-1) $ and $X = 1 \; (\text{mod } 3)$
\end{itemize}

 That the edges around any face of $\mathcal{L}^*$ indeed sum to $0$ is a consequence of the equation $P(z_0, w_0) = 0$. Thus, we indeed have an immersion of the graph $\mathcal{L}^*$. That this is indeed an embedding, and furthermore that it defines an isoradial embedding, can be deduced from elementary geometric considerations.

\clearpage

\bibliographystyle{alpha}

\bibliography{bib}

\end{document}